\documentclass[12pt]{article}
\usepackage{amsmath}
\usepackage{graphicx}
\usepackage{enumerate}
\usepackage{natbib}
\usepackage{url} % not crucial - just used below for the URL 

\usepackage{amsfonts}

\usepackage{mathtools}

\usepackage{multirow}
\usepackage{lscape}

\usepackage{soul}

\usepackage{float}
\usepackage{subcaption}  % an alternative package for sub figures
\newcommand{\subfloat}[2][need a sub-caption]{\subcaptionbox{#1}{#2}}

\usepackage{times}

\usepackage{eurosym}

\usepackage{algorithm}
\usepackage[noend]{algpseudocode}

\usepackage{tikz}
\usetikzlibrary{
  decorations.markings, arrows.meta, shapes.geometric
}

\usepackage{caption}

\usepackage{amsthm}
\newtheorem{proposition}{Proposition}

\usepackage[hypertexnames=false]{hyperref}

\newcommand{\blind}{1}

\begin{document}

\def\spacingset#1{\renewcommand{\baselinestretch}%
  {#1}\small\normalsize} \spacingset{1}

%%%%%%%%%%%%%%%%%%%%%%%%%%%%%%%%%%%%%%%%%%%%%%%%%%%%%%%%%%%%%%%%%%%%%%%%%%%%%%

\if1\blind
  {
    \title{\bf A Comprehensive Bayesian Approach to Entity Resolution for Data with Multiple Truths}
    \author{
      Hyungjoon Kim\thanks{Hyungjoon Kim is a Ph.D. Candidate, Department of Statistics, Colorado State University, Fort Collins, CO 80523-1877 (e-mail: hyungjoon.kim@colostate.edu); Andee Kaplan is an Associate Professor, Department of Statistics, Colorado State University, Fort Collins, CO 80523-1877 (e-mail: andee.kaplan@colostate.edu); and Matthew D. Koslovsky is an Assistant Professor, Department of Statistics, Colorado State University, Fort Collins, CO 80523-1877 (e-mail: matt.koslovsky@colostate.edu). This work utilized the Alpine high performance computing resource at the University of Colorado Boulder. Alpine is jointly funded by the University of Colorado Boulder, the University of Colorado Anschutz, and Colorado State University and with support from NSF grants OAC-2201538 and OAC-2322260. Andee Kaplan was partially supported by NSF-2330089 and NSF-2338428, Hyungjoon Kim was partially supported by NSF-2330089, and Matthew D. Koslovsky gratefully acknowledges the support of NSF grant DMS-2245492. The opinions, findings, and conclusions expressed are those of the authors and do not necessarily reflect the views of the NSF.},
      Andee Kaplan,
      and Matthew D. Koslovsky\hspace{.2cm} \\
      Department of Statistics, Colorado State University
    }
    \maketitle
  } \fi

\if0\blind
  {
    \bigskip
    \bigskip
    \bigskip
    \begin{center}
      {\LARGE\bf A Comprehensive Bayesian Approach to Entity Resolution for Data with Multiple Truths}
    \end{center}
    \medskip
  } \fi

\bigskip
\begin{abstract}
  In many applications, from government to ecology, integrating data from diverse and noisy sources is critical for downstream inference.
  However, a unique identifier to link records cleanly from the same entity may not exist. Entity resolution (also referred to as de-duplication or record linkage) merges such databases to identify duplicates, allowing for more complete data for inference.
  A multitude of methods from statistics and computer science in recent years assume a single, immutable true value for each variable used in linkage.
  We argue this restrictive assumption is often violated in practice, leading researchers to discard useful data and biasing downstream inference.
  For example, a respondent's education status may truly change between two different surveys, yet existing methods would treat at least one observation as a distorted version of the truth, even though both are correct.
  In this paper, we propose a novel entity resolution model that comprehensively accommodates ``multiple truths'' by introducing a mixture of exponential family distributions that further handles multiple data types.
  We provide options to fit the model with Markov chain Monte Carlo routines and variational inference for massive datasets.
  We demonstrate the method's value via simulation and linking a longitudinal survey of Italian household wealth. 
\end{abstract}

\noindent%
{\it Keywords:}  Record linkage; Data integration; Measurement error; Bayesian hierarchical model; Variational inference
\vfill

\newpage
\spacingset{1.9} % DON'T change the spacing!
\section{Introduction}
\label{sec:intro}

Entity resolution, or the process of merging potentially noisy databases, is a fundamental step in any data analysis pipeline.
While integrating data is straightforward when each entity has a unique identifier, in practice, multiple datasets often lack a common identifier.
Probabilistic entity resolution, encompassing record linkage and deduplication, enables researchers to merge multiple datasets without a common identifier in the presence of noise.
As the models not only facilitate data integration but also enable downstream data analysis, researchers in various fields, from political science \citep{enamorado2019,goel2020} to biomedical science \citep{hejblum2019,beesley2020,evans2022} to even environmental science \citep{lu2022,drew2025} are employing entity resolution models.

Probabilistic entity resolution models were first formalized by \cite{fellegi1969} and are classified into two categories according to the framework they adopt.
First, the comparison-based framework introduced by \cite{fellegi1969} calculates a matching probability for each record pair based on pairwise comparisons and has subsequently been extended in various directions, including Bayesian implementations, handling multiple files, and ensuring its effective application in downstream analysis \citep{lahiri2005,sadinle2013,murray2015,shan2022,aleshin-guendel2023,slawski2025}.
A competing approach uses a hit-or-miss framework that introduces latent variables and estimates the linkage structure through the relation between the observed records and the latent variables \citep{copas1990,liseo2011,tancredi2011,steorts2015,steorts2016,marchant2021}.
The hit-or-miss framework directly models the true latent records that generate the observed records and determines links via this latent structure that allows records to either equal (hit) or not equal (miss) the true latent structure.

Regardless of the approach taken, probabilistic entity resolution models share the fundamental assumption that each entity has a \textit{single}, \textit{immutable}, true value.
Thus, any observation that does not match this value is considered a distortion from the truth and treated strictly as random noise, meaning the observed erroneous field does not carry any information about the truth.
However, this naive assumption often leads to discrepancies with real-world data.
In many instances, it is inappropriate to treat cases where an observation does not match the latent true value as a random distortion.
For instance, observations that vary locally around the latent truth due to measurement error may inform the truth; recorded observations from the same entity may contain differing true values (e.g., multiple stores selling the same item for different prices); or the observed records' true values may change over time (e.g., price of residence reported in two different years).
In these cases, if an entity resolution model does not account for the fact that the observed values can have multiple truths, it will treat at least one of the observations as a distortion.
This will lead to an overestimation of the distortion rate for that variable and consequently result in failing to link records from the same entity.

In this work, we propose a Comprehensive Hit-or-Miss Probabilistic Entity Resolution (CHOMPER) model to accommodate data with multiple truths.
Our model is based on the hit-or-miss framework.
However, we develop a new hitting mechanism by allowing observations to be considered a ``hit'' when their values are within a local neighborhood of the truth.
This novel \textit{locally-varying hit} mechanism enables the model to consider cases in which true observations differ without considering them as random distortions, preventing overestimation of the distortion rate and allowing better linkage structure estimation.
Additionally, CHOMPER has two supplementary contributions.
First, it allows researchers to model variables as long as they follow an exponential family distribution, whereas existing methods typically only support one type of variable (e.g., categorical or continuous).
Further, most existing Bayesian entity resolution models draw inference from the posterior distribution via Markov chain Monte Carlo (MCMC).
However, the CHOMPER model can adopt other inferential methods, such as variational inference (VI; \cite{blei2017}), which increases scalability to large-scale datasets.

The paper is organized as follows.
First, we conclude this section by introducing the traditional hit-or-miss framework for entity resolution.
Section 2 presents a novel hit-or-miss modeling framework that accounts for locally-varying truths.
In Section 3, we describe estimation and inference for our proposed method, including an evolutionary VI algorithm for entity resolution.
Section 4 provides the performance of the proposed model when applied to real-world data using the Italian Survey on Household Income and Wealth (ISHIW) dataset, and we compare the advantages and disadvantages of MCMC and VI.
In Section 5, we demonstrate the validity of our model through simulation studies.
Section 6 concludes the paper, discussing limitations and future research directions.

\subsection{Traditional Hit-or-Miss Framework}
\label{sec:background}

The traditional hit-or-miss framework models observed data as either an exact match of the true latent value or a distorted version of that latent truth \citep{copas1990}.
It deviates from the comparison-based framework of \cite{fellegi1969}, which models the comparisons between all pairs of observations and uses a likelihood ratio test to estimate each link individually.
In contrast, the hit-or-miss framework estimates the linkage structure by comparing each observation with the corresponding true latent record and linking two records if they have the same latent truth.
This difference gives the hit-or-miss framework the advantage of accommodating potential distortion in observations directly, allowing analysts to incorporate their knowledge of the distortion process through choices in the likelihood and priors in a Bayesian implementation.

The majority of hit-or-miss entity resolution models have been Bayesian extensions of the model presented in \cite{copas1990}, including the work of \cite{liseo2011}, \cite{tancredi2011}, \cite{steorts2015}, \cite{steorts2016}, and \cite{marchant2021}.
These models are defined so that if an observation exactly matches its unique true latent record (hit), the observation has a point mass at the value of the true latent record in the likelihood.
If not (miss), the observation follows the same distribution as the true latent variable.
That is, let $x_{i}$ be an observation, $y_{i}$ a latent variable that generates $x_{i}$, $f_{y_{i}}$ the distribution of $y_{i}$, and $\delta_{y_{i}}$ a Dirac delta function at $y_{i}$.
Then, Bayesian entity resolution models using a hit-or-miss framework generally assume
\begin{equation*}
  x_{i}\mid y_{i}\stackrel{\text{ind}}{\sim}
  \begin{cases}
    \delta_{y_{i}}, & \text{if } x_{i} \text{ is a hit}   \\
    f_{y_{i}},      & \text{if } x_{i} \text{ is a miss.}
  \end{cases}
\end{equation*}
This formulation provides a flexible structure for the model to handle various types of variables, depending on the formulation of $f_{y_{i}}$, including categorical \citep{tancredi2011,steorts2016}, continuous \citep{liseo2011}, and string-valued variables, such as names or addresses \citep{steorts2015}.

\section{Comprehensive Hit-or-Miss Probabilistic Entity Resolution (CHOMPER) Model}
\label{sec:model}

In this section, we introduce the Comprehensive Hit-or-Miss Probabilistic Entity Resolution (CHOMPER) model.
The CHOMPER model generalizes probabilistic entity resolution models that use the hit-or-miss framework from two perspectives.
First, it provides more accurate linkage estimation in many real-world scenarios by accounting for potential local variability of an entity's observed value.
Second, it can handle any type of variable (fields; attributes) that belongs to the exponential family simultaneously.

Before defining the CHOMPER model, we first introduce pertinent notation.
Suppose there are $k$ files, with the $i$th file containing $n_{i}$ records.
If $k=1$, the model is applicable for deduplication, and if $k>1$, it is suitable for record linkage and deduplication simultaneously.
We assume there are $p$ common fields across the $k$ files that are used for the linkage.
We will denote $i=1,\cdots,k$, $j=1,\cdots,n_{i}$, and $\ell=1,\cdots,p$ as the indices of file, record, and variable, respectively, and $N_{\max}=\sum_{i=1}^{k}n_{i}$ as the total number of records across all files.

Let $\boldsymbol{x}_{ij}=\left(x_{ij1},\cdots,x_{ijp}\right)$ represent the $j$th record in the $i$th file, and $\boldsymbol{y}_{\lambda_{ij}}=\left(y_{\lambda_{ij}1},\cdots,y_{\lambda_{ij}p}\right)$ the true latent record, where $\boldsymbol{x}_{ij}$ is linked to $\boldsymbol{y}_{\lambda_{ij}}$ through the latent index $\lambda_{ij}$.
Without loss of generality, suppose that we have two different records, $\boldsymbol{x}_{11}$ and $\boldsymbol{x}_{22}$.
Then, our model considers both records to be linked (or generated from the same entity) if and only if they are linked to the same latent truth, $\boldsymbol{y}_{j'}$, i.e., $\lambda_{11}=\lambda_{22}=j'$.
Note that the traditional hit-or-miss framework introduced in Section~\ref{sec:background} considers an observation to \textit{hit} if $x_{ij\ell}=y_{\lambda_{ij}\ell}$ or \textit{miss} if $x_{ij\ell}$ is a distorted realization of $y_{\lambda_{ij}\ell}$.
Below, we relax this assumption.
Regardless, to distinguish the hit-or-miss status of an observation, we introduce a binary indicator variable $z_{ij\ell}$, where $z_{ij\ell}=0$ indicates a hit and $z_{ij\ell}=1$ a miss.

To allow researchers to flexibly model various types of variables and account for locally-varying truths, we introduce the following likelihood function:
\begin{equation}
  \label{eqn:likelihood-chomper}
  x_{ij\ell}\mid\lambda_{ij},y_{\lambda_{ij}\ell},z_{ij\ell},\boldsymbol{\eta}_{\ell},\boldsymbol{\zeta}_{\ell}\stackrel{\text{ind}}{\sim}
  f_{\ell}\left(x_{ij\ell}\mid\boldsymbol{\eta}_{\ell}\right)^{z_{ij\ell}}f_{\ell}\left(x_{ij\ell}\mid\boldsymbol{\zeta}_{\ell},y_{\lambda_{ij}\ell}\right)^{1-z_{ij\ell}},
\end{equation}
where $f_{\ell}$ is a distribution from the exponential family with natural parameters $\boldsymbol{\eta}_{\ell}$ or $\boldsymbol{\zeta}_{\ell}$.
When an observation is a hit without distortion (i.e., $z_{ij\ell}=0$), $x_{ij\ell}$ follows a distribution parameterized by $\boldsymbol{\zeta}_{\ell}$ and $y_{\lambda_{ij}\ell}$.
Here, $\boldsymbol{\zeta}_{\ell}$ that controls the plausible range of the local variability (i.e., the hitting range) is a hyperparameter specified based on a practitioner's domain knowledge, and $y_{\lambda_{ij}\ell}$ determines the location of the hitting range.
Specific examples for defining $\boldsymbol{\zeta}_{\ell}$ are discussed in the following sections.
However, if $x_{ij\ell}$ is distorted, we consider it a random realization from the latent distribution of $y_{\lambda_{ij}\ell}$ in the absence of a known distortion pattern.
Thus, both distorted $x_{ij\ell}$ and $y_{j'\ell}$ follow $f_{\ell}$ parameterized with $\boldsymbol{\eta}_{\ell}$, and $\boldsymbol{\eta}_{\ell}$ should have its own prior distribution corresponding to $f_{\ell}$.
For instance, we may assume $f_{\ell}$ is Gaussian with $\boldsymbol{\eta}_{\ell}=\left(\frac{\mu_{\ell}}{\sigma_{\ell}},-\frac{1}{2\sigma_{\ell}}\right)$ for the $\ell$th continuous variable, where $\mu_{\ell}$ and $\sigma_{\ell}$ follow a flat and inverse gamma prior, respectively.

Next, we define the prior distribution of $\lambda_{ij}$, which links records and is the main parameter of interest.
Note that $\lambda_{ij}$ is an index variable that links an observed record $\boldsymbol{x}_{ij}$ to a latent truth $\boldsymbol{y}_{\lambda_{ij}}$.
So, it can be any integer $j'\in\{1,\cdots,N_{\max}\}$, and the linkage structure is encoded in $k$ $n_{i}$-length vectors $\boldsymbol{\lambda}_{i}=\left(\lambda_{i1},\cdots,\lambda_{in_{i}}\right)$.
Reflecting a lack of prior knowledge, we assign a diffuse prior distribution to $\boldsymbol{\Lambda}=\left(\boldsymbol{\lambda}_{1},\cdots,\boldsymbol{\lambda}_{k}\right)$.
Other priors could also be chosen for the linkage structure, such as a microclustering prior \citep{betancourt2022a,betancourt2022,lee2022}.

Lastly, a Bernoulli distribution with probability of success $\beta_{\ell}$ is chosen for the prior distribution of the distortion indicator $z_{ij\ell}$, where $\beta_{\ell}\sim\text{Beta}\left(a_{\ell},b_{\ell}\right)$.
This allows researchers to encode domain knowledge about the distortion rate of the $\ell$th field by controlling the parameters $a_{\ell}$ and $b_{\ell}$.
For example, suppose an $\ell$th field has a low distortion rate, one can set $a_{\ell}$ and $b_{\ell}$ such that $\mathbb{E}\left[\beta_{\ell}\right]$ is close to 0.

Together, we formulate the CHOMPER model as follows:
\begin{equation}
  \label{eqn:chomper-mdl}
  \begin{aligned}
    x_{ij\ell}\mid\lambda_{ij},y_{\lambda_{ij}\ell},z_{ij\ell},\boldsymbol{\eta}_{\ell},\boldsymbol{\zeta}_{\ell}
                                           & \stackrel{\text{ind}}{\sim}f_{\ell}\left(x_{ij\ell}\mid\boldsymbol{\eta}_{\ell}\right)^{z_{ij\ell}}f_{\ell}\left(x_{ij\ell}\mid\boldsymbol{\zeta}_{\ell},y_{\lambda_{ij}\ell}\right)^{1-z_{ij\ell}} \\
    z_{ij\ell}\mid\beta_{\ell}             & \stackrel{\text{ind}}{\sim}\text{Bernoulli}\left(\beta_{\ell}\right)                                                                                                                                \\
    \beta_{\ell}                           & \stackrel{\text{ind}}{\sim}\text{Beta}\left(a_{\ell},b_{\ell}\right)                                                                                                                                \\
    y_{j'\ell}\mid\boldsymbol{\eta}_{\ell} & \stackrel{\text{ind}}{\sim}f_{\ell}\left(y_{j'\ell}\mid\boldsymbol{\eta}_{\ell}\right)                                                                                                              \\
    \boldsymbol{\eta}_{\ell}               & \stackrel{\text{ind}}{\sim}\pi\left(\boldsymbol{\eta}_{\ell}\right)                                                                                                                                 \\
    \pi\left(\boldsymbol{\Lambda}\right)   & \propto 1.
  \end{aligned}
\end{equation}
In summary, the principal contribution of the CHOMPER model lies in the flexible structure obtained by using an exponential family distribution for the likelihood of $x_{ij\ell}$ when it is a hit instead of a point mass.
Further, the parameter $\boldsymbol{\zeta}_{\ell}$ allows researchers to adjust the hitting range, thereby enabling the use of variables whose observations may have multiple truths in entity resolution.

\subsection{Connection to the Traditional Hit-or-Miss Framework}
\label{sec:connection-traditional-hit-or-miss}

In this section, we show how the CHOMPER model encompasses the traditional hit-or-miss framework, which considers an observation a hit only if it matches a single truth, as a special case.
Note that we can express the likelihood of $x_{ij\ell}$ when $z_{ij\ell}=0$ as follows:
\begin{equation*}
  f_{\ell}\left(x_{ij\ell}\mid\lambda_{ij},y_{\lambda_{ij}\ell},\boldsymbol{\zeta}_{\ell},z_{ij\ell}=0\right)
  =h\left(x_{ij\ell}\right)\exp\left(\boldsymbol{\zeta}_{\ell}x_{ij\ell}-A\left(\boldsymbol{\zeta}_{\ell}\right)\right),
\end{equation*}
where $h\left(x_{ij\ell}\right)$ is the base measure and $A\left(\boldsymbol{\zeta}_{\ell}\right)$ is the log-partition function.
Then, as established in Proposition~\ref{prop:degeneracy}, proven in Appendix A, we obtain a degenerate distribution at $y_{\lambda_{ij}\ell}$ by formulating $\boldsymbol{\zeta}_{\ell}$ to satisfy $A'\left(\boldsymbol{\zeta}_{\ell}\right)=y_{\lambda_{ij}\ell}$ and $A''\left(\boldsymbol{\zeta}_{\ell}\right)\rightarrow0$ for first and second order derivatives, $A'\left(\boldsymbol{\zeta}_{\ell}\right)$ and $A''\left(\boldsymbol{\zeta}_{\ell}\right)$, respectively.
Therefore, the model considers $x_{ij\ell}$ as a hit only if it matches $y_{\lambda_{ij}\ell}$, equivalent to the traditional hit-or-miss framework.
\begin{proposition}
  \label{prop:degeneracy}
  Let \(X\) be a random variable following an exponential family distribution \(f\) with a natural parameter \(\boldsymbol{\zeta}\) such that \(f(x\mid\boldsymbol{\zeta}) = h(x)\exp\left(\boldsymbol{\zeta}T\left(x\right)-A(\boldsymbol{\zeta})\right)\) for the base measure \(h(x)\), the log-partition function \(A\left(\boldsymbol{\zeta}\right)\), and a sufficient statistic \(T\left(X\right)\).
  Then, \(T\left(X\right)\stackrel{p}{\rightarrow} \mathbb{E}\left[T\left(X\right)\right]=A'\left(\boldsymbol{\zeta}\right)\) as \(A''\left(\boldsymbol{\zeta}\right)\rightarrow0\), creating a degenerate distribution concentrated at \(A'\left(\boldsymbol{\zeta}\right)\).
\end{proposition}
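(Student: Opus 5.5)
The plan is to combine the standard cumulant identities for exponential families with Chebyshev's inequality. Since the family is indexed by \(\boldsymbol{\zeta}\), we read ``\(A''(\boldsymbol{\zeta})\rightarrow0\)'' as a statement about a sequence \(\boldsymbol{\zeta}_m\) along which \(A''(\boldsymbol{\zeta}_m)\to 0\). For each \(m\) we have a random variable \(X_m\sim f(\cdot\mid\boldsymbol{\zeta}_m)\). Convergence in probability to a moving target \(A'(\boldsymbol{\zeta}_m)\) is understood as
\[
  T(X_m)-A'(\boldsymbol{\zeta}_m)\stackrel{p}{\rightarrow}0 .
\]
If in addition \(A'(\boldsymbol{\zeta}_m)\) is held fixed at a value \(y\), as in the construction of Section~\ref{sec:connection-traditional-hit-or-miss} where \(A'(\boldsymbol{\zeta}_\ell)=y_{\lambda_{ij}\ell}\), this gives \(T(X_m)\stackrel{p}{\rightarrow} y\). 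In that case the law of \(T(X_m)\) converges weakly to \(\delta_y\), which is the claimed degeneracy.

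\textbf{Step 1: moments.} Since \(f\) integrates to one,
\[
  \exp\left(A(\boldsymbol{\zeta})\right)=\int h(x)\exp\left(\boldsymbol{\zeta}T(x)\right)\,dx .
\]
On the interior of the natural parameter space we may differentiate under the integral sign, justified by dominated convergence using the finiteness of the moment generating function of \(T(X)\) near \(\boldsymbol{\zeta}\). Differentiating once gives \(A'(\boldsymbol{\zeta})=\mathbb{E}_{\boldsymbol{\zeta}}[T(X)]\). Differentiating twice gives \(A''(\boldsymbol{\zeta})=\mathbb{E}[T(X)^2]-\left(\mathbb{E}[T(X)]\right)^2=\mathrm{Var}_{\boldsymbol{\zeta}}(T(X))\). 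For a vector \(\boldsymbol{\zeta}\), \(A''\) is the covariance matrix of \(T(X)\), and \(A''\to0\) means convergence in matrix norm, or equivalently that every diagonal entry tends to zero.

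\textbf{Step 2: Chebyshev.} For any \(\varepsilon>0\),
\[
  P\left(\left|T(X_m)-A'(\boldsymbol{\zeta}_m)\right|>\varepsilon\right)\le \frac{\mathrm{tr}\,A''(\boldsymbol{\zeta}_m)}{\varepsilon^2}\rightarrow0 .
\]
This is the convergence in probability. In the scalar case the trace is just \(A''\), and in the vector case we apply the bound coordinatewise.

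\textbf{Step 3: degeneracy.} If \(A'(\boldsymbol{\zeta}_m)\to y\) (in particular if it equals \(y\) for every \(m\)), Slutsky's lemma upgrades Step 2 to \(T(X_m)\stackrel{p}{\rightarrow}y\). Convergence in probability to a constant implies convergence in distribution to \(\delta_y\). When \(T(x)=x\), as in the paper's likelihood, this says the hit distribution collapses to the point mass \(\delta_{y_{\lambda_{ij}\ell}}\) of the traditional framework.

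The main obstacle is not computational; it is the interpretation of the limit. The statement treats \(A'(\boldsymbol{\zeta})\) as fixed while \(A''(\boldsymbol{\zeta})\to0\), but both are functions of the same parameter. I will therefore state the result explicitly for a sequence \(\boldsymbol{\zeta}_m\), keeping \(A'(\boldsymbol{\zeta}_m)\) convergent or fixed. I will also note that such \(\boldsymbol{\zeta}_m\) may leave a fixed family, for instance the Gaussian with \(\sigma^2\to0\). In that case ``\(f\)'' denotes a sequence of members of a multiparameter family, and differentiability only needs to hold at each \(\boldsymbol{\zeta}_m\) in the interior of the parameter space.
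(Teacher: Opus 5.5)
Your proposal is correct and follows essentially the same route as the paper. Both arguments use the exponential-family identities \(\mathbb{E}[T(X)]=A'(\boldsymbol{\zeta})\) and \(\mathrm{Var}[T(X)]=A''(\boldsymbol{\zeta})\) and then apply Chebyshev's inequality. Your reading of the limit along a sequence \(\boldsymbol{\zeta}_m\), with \(A'(\boldsymbol{\zeta}_m)\) fixed or convergent, and your trace bound for the vector case, make precise what the paper's proof leaves informal.
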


\subsection{Example Distributions for Locally-Varying Hits}
\label{sec:comprehensive-hit}

In practice, it is common for an attribute of the same entity to have different values that all are truths without random distortion, which motivates the use of a \textit{locally-varying hit}.
This section discusses representative examples of variables that may admit multiple truths, recommended distributions for modeling them, and the corresponding strategy for specifying $\boldsymbol{\zeta}_{\ell}$.

Suppose researchers conducted two surveys over the course of two years, recording the highest educational level of each participant in four categories (A: High school graduate or less, B: Bachelor, C: Master, D: Ph.D.).
Participants' highest educational level can change even in the absence of measurement error.
They may finish their undergraduate coursework to earn a bachelor's degree (moved from category A to B) or earn a master's degree in two years (moved from category B to C).
Let $\boldsymbol{x}_{11}$ and $\boldsymbol{x}_{22}$ be responses from the same participant without measurement error (i.e., $x_{11\ell}=\text{A}$, $x_{22\ell}=\text{B}$, and $\lambda_{11}=\lambda_{22}=j'$).
Since it is possible for the same participant to have either A or B as their true response, the likelihood function should assign large probabilities to $x_{11\ell}$ and $x_{22\ell}$ when $y_{j'\ell}$ is either A or B.
This has been structurally impossible in previous hit-or-miss entity resolution models because the likelihood function assigns a positive probability only to the immutable single truth.
However, the proposed CHOMPER model makes it possible.
Since the highest educational level is a categorical variable, we assume a multinomial distribution for $f_{\ell}\left(x_{ij\ell}\mid\phi_{\ell},\tau_{\ell},y_{\lambda_{ij}\ell}\right)$ with $M_{\ell}$ levels in the $\ell$th field:
\begin{equation}
  \label{eqn:softmax-chomper}
  f_{\ell}\left(x_{ij\ell}\mid\phi_{\ell},\tau_{\ell},y_{\lambda_{ij}\ell}\right)
  =\frac{\phi_{\ell}^{\frac{1}{\tau_{\ell}}\mathbb{I}\left(x_{ij\ell}\in\mathcal{N}_{\varepsilon_{\ell}}\left(y_{\lambda_{ij}\ell}\right)\right)}}
  {\sum_{m=1}^{M_{\ell}}\phi_{\ell}^{\frac{1}{\tau_{\ell}}\mathbb{I}\left(m\in\mathcal{N}_{\varepsilon_{\ell}}\left(y_{\lambda_{ij}\ell}\right)\right)}},
\end{equation}
where $\mathcal{N}_{\varepsilon_{\ell}}\left(y_{\lambda_{ij}\ell}\right)$ is a closed $\varepsilon_{\ell}$-neighborhood of $y_{\lambda_{ij}\ell}$ such that $\mathcal{N}_{\varepsilon_{\ell}}\left(y_{\lambda_{ij}\ell}\right)=\{x\vert d\left(y_{\lambda_{ij}\ell},x\right)\leq\varepsilon_{\ell}\}$ for a distance function $d\left(y_{\lambda_{ij}\ell},x\right)$ with hyperparameters $\phi_{\ell}$ and $\tau_{\ell}$, which controls the temperature.
Together, $\varepsilon_{\ell}\geq0$, $\phi_{\ell}>1$, and $\tau_{\ell}>0$ define the hitting range and the intensity of the locally-varying hit for the $\ell$th field.
We use a graph shortest path distance for $d\left(y_{\lambda_{ij}\ell},x\right)$.
Namely, let $\mathcal{G}=\left(\mathcal{V},\mathcal{E}\right)$ be a graph where $\mathcal{V}$ is the set of all levels of a categorical variable and $\mathcal{E}$ is the set of all edges between the levels.
Then, we can define $d\left(y_{\lambda_{ij}\ell},x\right)=\min_{p\in\mathcal{P}\left(y_{\lambda_{ij}\ell},x\right)}\vert p\vert$, where $\mathcal{P}\left(y_{\lambda_{ij}\ell},x\right)$ is the set of all paths from $y_{\lambda_{ij}\ell}$ to $x$ in $\mathcal{G}$ and $\vert p\vert$ is the number of edges in path $p$.
For a nominal variable, we use a complete undirected graph $\mathcal{G}$; for an ordinal variable, we may use a path graph $\mathcal{G}$ with natural level ordering, yielding $d\left(y_{\lambda_{ij}\ell},x\right)=\vert y_{\lambda_{ij}\ell}-x\vert$.
Since educational level is ordinal and rapid changes are unexpected, a given category and its adjacent levels are considered as the hitting range with $\mathcal{E}=\left\{\left(\text{A},\text{B}\right),\left(\text{B},\text{C}\right),\left(\text{C},\text{D}\right)\right\}$.
Here, the model is able to assign a relatively large probability to $x_{ij\ell}$ when $x_{ij\ell}$ is $\varepsilon_{\ell}$-adjacent to $y_{\lambda_{ij}\ell}$, despite $x_{ij\ell}$ not being equal to $y_{\lambda_{ij}\ell}$.
Figure~\ref{fig:comprehensive-hit-multinomial} illustrates this example of the multinomial distribution parameterized in Equation~\eqref{eqn:softmax-chomper}.

\begin{figure}[H]
  \begin{center}
    \subfloat[$f_{\ell}$ when the model does not account local variability of a true value $y_{\lambda_{ij}\ell}$ with $\varepsilon_{\ell}=0$.\label{fig:comprehensive-hit-multinomial-1}]{\includegraphics[width=0.9\linewidth]{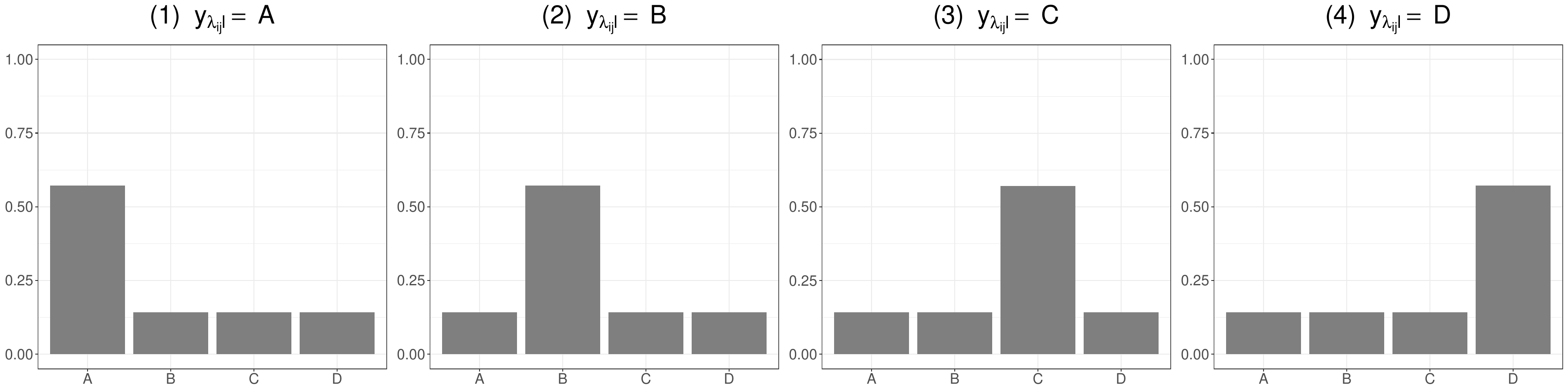} }
    \subfloat[$f_{\ell}$ when the model accounts local variability of a true value $y_{\lambda_{ij}\ell}$ with $\varepsilon_{\ell}=1$.\label{fig:comprehensive-hit-multinomial-2}]{\includegraphics[width=0.9\linewidth]{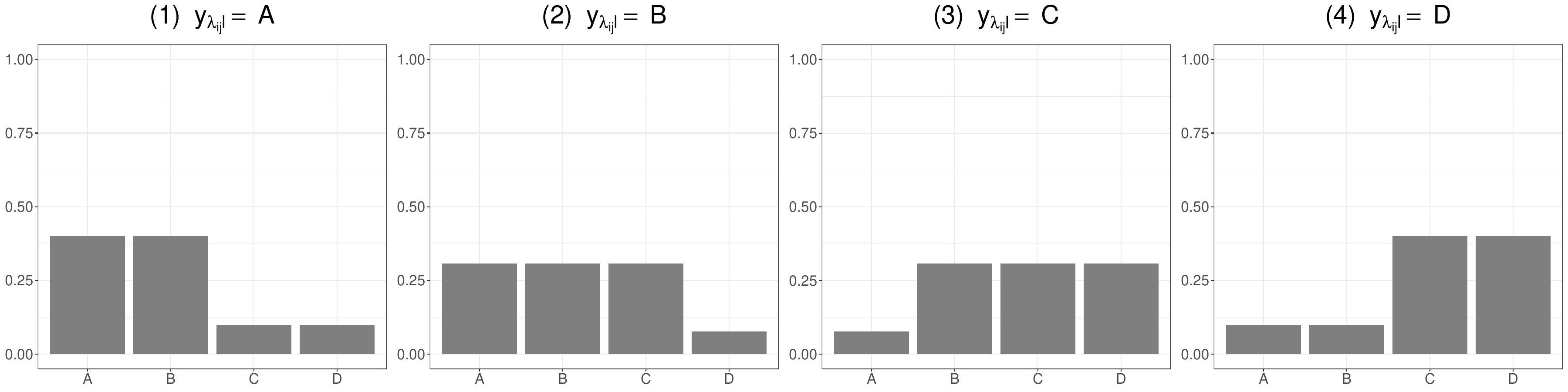} }
  \end{center}
  \caption{Probabilities of multinomial distributions according to the hitting range $\varepsilon_{\ell}$ with $\phi_{\ell}=2$ and $\tau_{\ell}=0.5$. (a) $\varepsilon_{\ell}=0$ concentrates the hitting range to a single true value by assigning a large probability to $y_{\lambda_{ij}\ell}$, while (b) $\varepsilon_{\ell}=1$ allows the observation to hit the truth that is $\varepsilon_{\ell}$-adjacent to $y_{\lambda_{ij}\ell}$ by assigning larger probabilities to these values.}
  \label{fig:comprehensive-hit-multinomial}
\end{figure}

When $\varepsilon_{\ell}=0$, the model approximates the traditional hit-or-miss assumption that an observation hits the true latent value only when $x_{ij\ell}$ exactly matches $y_{\lambda_{ij}\ell}$.
Thus, the likelihood function has a large probability when $x_{ij\ell}=y_{\lambda_{ij}\ell}$ for $z_{ij\ell}=0$ and a relatively small probability otherwise.
However, when $\varepsilon_{\ell}=1$, the likelihood function allows the observation to hit multiple values in the neighborhood of the truth without considering it random distortion by assigning large probabilities to the values where $x_{ij\ell}$ is $\varepsilon_{\ell}$-adjacent to $y_{\lambda_{ij}\ell}$.

Suppose that the researchers recorded net income, a continuous variable, in addition to educational level.
Given that net income can also vary due to changes in tax policy, it is reasonable to assume that observed values are hits if they exist within a local neighborhood of the truth.
Consequently, a locally-varying hit is justified, and the variance of $f_{\ell}\left(x_{ij\ell}\mid\boldsymbol{\zeta}_{\ell},y_{\lambda_{ij}\ell}\right)$ should be chosen to establish an appropriate hitting range.

For continuous variables, we can assume a Gaussian distribution for $f_{\ell}\left(x_{ij\ell}\mid\boldsymbol{\zeta}_{\ell},y_{\lambda_{ij}\ell}\right)$.
Setting the center of the likelihood for a locally-varying hit as $y_{\lambda_{ij}\ell}$, the natural parameter $\boldsymbol{\zeta}_{\ell}$ can be defined as follows:
\begin{equation}
  \label{eqn:gaussian-chomper}
  \boldsymbol{\zeta}_{\ell}=\left(\frac{y_{\lambda_{ij}\ell}}{\varepsilon_{\ell}},-\frac{1}{2\varepsilon_{\ell}}\right).
\end{equation}
That is, by using a large hyperparameter $\varepsilon_{\ell}$, we can widen the hitting range of $x_{ij\ell}$ and allow the model to comprehensively hit values in the neighborhood of $y_{\lambda_{ij}\ell}$.
Figure~\ref{fig:comprehensive-hit-gaussian} shows the likelihood function when $\mathbb{E}\left[x_{ij\ell}\mid z_{ij\ell}=1\right]=0$, $\text{Var}\left[x_{ij\ell}\mid z_{ij\ell}=1\right]=1$, and $y_{\lambda_{ij}\ell}=1$ with different $\varepsilon_{\ell}$'s.
From this, we observe that the hitting range of $x_{ij\ell}$ expands with an increase in $\varepsilon_{\ell}$ while choosing a smaller $\varepsilon_{\ell}$ leads $x_{ij\ell}$ to only be considered a hit when it is approximately equivalent to $y_{\lambda_{ij}\ell}$.

\begin{figure}[H]
  \begin{center}
    \includegraphics[width=0.9\linewidth]{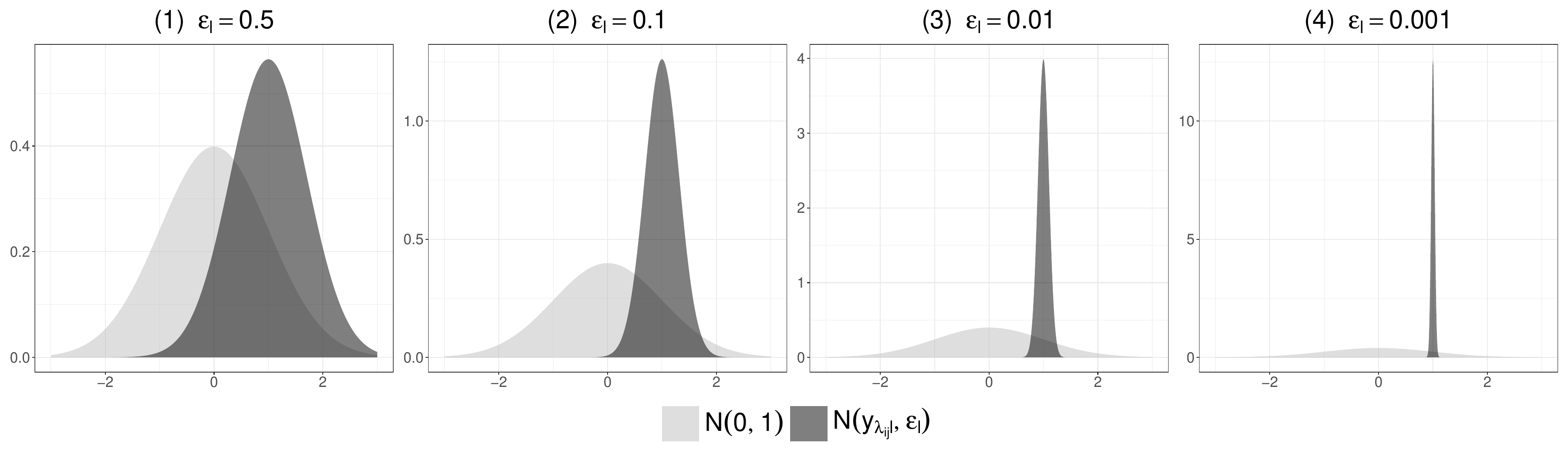}
  \end{center}
  \caption{Likelihood function with $f_{\ell}\sim\text{Gaussian}$ and $y_{\lambda_{ij}\ell}=1$ for different $\varepsilon_{\ell}$'s when the mean and variance of $x_{ij\ell}|z_{ij\ell}=1$ are $0$ and $1$, respectively.}
  \label{fig:comprehensive-hit-gaussian}
\end{figure}

\section{Inference}
\label{sec:inference}

For posterior inference, we propose two approaches: a Metropolis-Hastings within Gibbs MCMC algorithm containing a split-merge step and an evolutionary VI algorithm.
MCMC is relatively straightforward to implement when the full conditional distributions for parameters are available, and it yields asymptotically exact samples from the posterior.
However, MCMC can require substantial computation time due to slow mixing and convergence when the parameter space is large and complex.
To mitigate the slow mixing issue, we adopt a hybrid MCMC algorithm that applies a split-merge update \citep{jain2004} for the linkage parameter inside the Gibbs sampler.
VI commonly offers a shorter runtime than MCMC, but the variational approximation of the true posterior underestimates parameter uncertainty.
In addition, VI is sensitive to initialization and may converge to local optima.
To address these limitations, we develop an evolutionary VI algorithm \citep{whitley2001,givens2012}.

\subsection{Posterior Inference via Markov chain Monte Carlo}
\label{sec:mcmc}

Because the CHOMPER likelihood is a mixture of two distributions from the exponential family, it is possible to compute explicit full conditional distributions for all parameters of interest with conjugate priors for $\boldsymbol{\eta}_{\ell}$.
Nevertheless, the parameters of the posterior distribution of each $\lambda_{ij}$ are probability vectors for $N_{\max}$ categories, so the resulting computational cost scales as $\mathcal{O}\left(N_{\max}^{2}\right)$.
Additionally, the high-dimensional discrete parameter space is susceptible to poor mixing due to chains getting stuck in local modes.
Consequently, a Gibbs update may be ineffective in this setting.
Instead, we implement a split-merge algorithm that jointly updates $\lambda_{ij}$ and the corresponding $z_{ij\ell}$ and $\boldsymbol{y}_{\lambda_{ij}}$ in a single Metropolis-Hastings step (Appendix C).
This approach reduces the time complexity of the sampler and helps mixing by forcibly changing the state of records, either linked or not, further speeding up convergence \citep{jain2004}.
The full conditional distributions for all parameters and their derivations for CHOMPER are presented in Appendix B.

\subsection{Evolutionary Variational Inference}
\label{sec:evil}

In order to scale CHOMPER to large datasets commonly found in practice, we introduce a variational alternative that seeks a function $q^{\star}$ that approximates the posterior distribution.
Specifically, denoting $\boldsymbol{\xi}$ as the collection of all parameters $\{\boldsymbol{\Lambda},\boldsymbol{y},\boldsymbol{z},\boldsymbol{\beta},\boldsymbol{\eta}\}$ and $\boldsymbol{x}$ as the observed data, our goal is to solve the optimization problem:
\begin{equation}
  \label{eqn:vi-def}
  q^{\star}\left(\boldsymbol{\xi}\right)=\operatorname*{argmin}_{q\left(\boldsymbol{\xi}\right)}\mathtt{KL}\left(q\left(\boldsymbol{\xi}\right)\Vert\pi\left(\boldsymbol{\xi}\mid\boldsymbol{x}\right)\right),
\end{equation}
where $\pi\left(\boldsymbol{\xi}\mid\boldsymbol{x}\right)$ is the posterior distribution and $\mathtt{KL}\left(q\Vert\pi\right)$ is the Kullback-Leibler divergence between $q$ and $\pi$.
Searching the entire function space of $q$ is impractical, so it is recommended to traverse a prespecified smaller family of functions $\mathcal{Q}$.
We use a mean-field variational family, where the members of $\mathcal{Q}$ are mutually independent densities of each parameter \citep{bishop2006}.
Denoting each member (variational factor) as $q_{\xi_{i}}\left(\xi_{i}\right)$ for the parameter $\xi_{i}$, $q\left(\boldsymbol{\xi}\right)$ has the form
\begin{equation*}
  q\left(\boldsymbol{\xi}\right)
  =\prod_{i,j}q_{\lambda_{ij}}\left(\lambda_{ij}\right)\prod_{j',l}q_{y_{j'\ell}}\left(y_{j'\ell}\right)\prod_{i,j,l}q_{z_{ij\ell}}\left(z_{ij\ell}\right)
  \prod_{\ell}q_{\beta_{\ell}}\left(\beta_{\ell}\right)\prod_{\ell}q_{\boldsymbol{\eta}_{\ell}}\left(\boldsymbol{\eta}_{\ell}\right).
\end{equation*}

With the mean-field variational family, Coordinate Ascent VI (CAVI) is frequently used to find the optimal variational factor $q^{\star}$ in Equation~\eqref{eqn:vi-def}.
CAVI is an iterative deterministic optimization algorithm that updates the variational factors in turn until the evidence lower bound (ELBO) converges, based on the fact that the optimal factor is proportional to the exponential of the expected value of the log of the full conditional distribution \citep{bishop2006,blei2017}.
However, CAVI is sensitive to initialization and update order, and it may fall into a local optimum if the posterior distribution is multimodal \citep{hoffman2015}.
In order to help alleviate these issues, we propose a special initialization strategy for entity resolution (Appendix D.3) and an Evolutionary Variational Inference for Record Linkage (EVIL) algorithm.

Evolutionary optimization algorithms use a genetic heuristic to optimize the objective function, including crossover, selection, and mutation steps \citep{whitley2001,givens2012}.
\cite{mcnicholas2021} demonstrated that evolutionary algorithms can yield better performance in cluster analysis.
As CHOMPER performs entity resolution by clustering records based on latent true records, we expect EVIL to increase the chances of reaching the global optimum \citep{brusa2024} compared to conventional VI and ultimately improve linkage estimation.

The optimization begins by creating a generation of members with different initial values, $\boldsymbol{\Xi}^{\left(0\right)}=\{\boldsymbol{\xi}^{\left(0\right)}_{p}\}_{p=1}^{N_{P}}$, where $\boldsymbol{\xi}^{\left(0\right)}_{p}=\{\boldsymbol{\Lambda}^{\left(0\right)},\boldsymbol{y}^{\left(0\right)},\boldsymbol{z}^{\left(0\right)},\boldsymbol{\beta}^{\left(0\right)},\boldsymbol{\eta}^{\left(0\right)}\}_{p}$ and $N_{P}$ is the number of (parent) members.
In each generation, the algorithm proceeds as follows and terminates when the improvement in the ELBO is not significant.
\begin{enumerate}
  \item \textbf{CAVI Update}: The initial variational parameters of the $g$th generation, $\boldsymbol{\Xi}^{\left(0,g\right)}$, are updated via CAVI to obtain a new set of parameters, $\boldsymbol{\Xi}^{\left(1,g\right)}$.
        Since each member is independent, this step can be performed in parallel.
  \item \textbf{Crossover with Reconciliation}: Generate $N_{O}$ offspring by adopting single-point crossover to randomly selected pairs from $\boldsymbol{\Xi}^{\left(1,g\right)}$.
        Note that swapping the linkage index may break the consistency between the linkage structure and variational parameters.
  \item \textbf{Reconciliation}: Apply a \textit{reconciliation process} to correct the broken consistency between swapped linkage indices and variational factors.
        For example, if a crossover changes the linkage indices from $\lambda_{1}=\lambda_{2}=1$ and $\lambda_{3}=\lambda_{4}=2$ to $\lambda_{2}=1$ and $\lambda_{1}=\lambda_{3}=\lambda_{4}=2$, then $\boldsymbol{y}_{1},\boldsymbol{y}_{2},\boldsymbol{z}_{1},\boldsymbol{z}_{2},\boldsymbol{z}_{3},\boldsymbol{z}_{4}$, and corresponding variational parameters must be updated accordingly.
        Specifically, $\boldsymbol{y}_{1}$ and $\boldsymbol{y}_{2}$ must be updated to reflect the information of the newly linked records, and the distortion indicators must also be recalculated accordingly (See Figure~\ref{fig:evil-reconciliation}).
        Update the $N_{O}$ reconciled offspring via CAVI, obtaining $\boldsymbol{\Xi}^{\left(2,g\right)}$.
  \item \textbf{Selection}: Construct a new set of parents by selecting the top $N_{P}$ members from $\boldsymbol{\Xi}^{\left(1,g\right)}\cup\boldsymbol{\Xi}^{\left(2,g\right)}$ based on the ELBO.
  \item \textbf{Mutation with Reconciliation}: Mutate the set of variational factors based on a split-merge process \citep{jain2004} to improve the diversity of the population.
        If the new generation does not meet the convergence criterion, \textit{reconcile} the parameters and denote it as $\boldsymbol{\Xi}^{\left(4,g\right)}$.
        Reset the initial values to the mutated generation, such that $\boldsymbol{\Xi}^{\left(0,g+1\right)}=\boldsymbol{\Xi}^{\left(4,g\right)}$.
\end{enumerate}

\begin{figure}[H]
  \begin{center}
    \resizebox{0.9\textwidth}{!}{
      \begin{tikzpicture}
        \tikzset{
        % dots (x1, x2, ...)
        dot/.style={circle, fill=black, inner sep=1.5pt},
        % empty circles (y1, y2)
        circ/.style={circle, draw, minimum size=2.5cm, line width=1pt},
        % filled circles (y'1, y'2)
        circ_filled/.style={circ, fill=gray!30},
        % dashed lines (linkage)
        link/.style={dashed, line width=0.8pt},
        % new arrow style
        myarrow/.style={
        -{Stealth[length=3mm, width=3mm]}, % arrow head size
        line width=1pt, % arrow line width
        draw=black, 
        postaction={decorate}, 
        decoration={
            markings,
            mark=at position 0.5 with {\node[above, font=\scriptsize, text centered] {#1};}
          }
        }
        }
        
        % --- (a) ---
        \begin{scope}
          \node at (-2.7, 2.7) {\textbf{(a)}};
          
          % y1
          \node (y1a) [circ, label=right:$\boldsymbol{y}_{1}$] at (0, 0) {};
          \node (x1a) [star, fill=blue, inner sep=1.5pt, label=below left:$\boldsymbol{x}_{1}$] at (0, 0) {};
          \node (x2a) [dot, label=right:$\boldsymbol{x}_{2}$] at (0, 2.5) {};
          \draw [link] (x2a) -- (y1a.north); 
          
          % y2
          \node (y2a) [circ, label=right:$\boldsymbol{y}_{2}$] at (0, -3.5) {}; 
          \node (x3a) [dot, label=left:$\boldsymbol{x}_{3}$] at (-2.5, -2.5) {}; 
          \node (x4a) [dot, label=left:$\boldsymbol{x}_{4}$] at (-2.5, -4.5) {}; 
          \draw [link] (x3a) -- (y2a.150); 
          \draw [link] (x4a) -- (y2a.210);
        \end{scope}
        
        % --- (a) => (b) arrow (with text) ---
        \draw[myarrow={Crossover}] (1.9, -1.75) -- (3.9, -1.75); 
        
        % --- (b) ---
        \begin{scope}[xshift=7cm]
          \node at (-2.7, 2.7) {\textbf{(b)}};
          
          % y1
          \node (y1b) [circ, label=right:$\boldsymbol{y}_{1}$] at (0, 0) {};
          \node (x1b) [star, fill=red, inner sep=1.5pt, label=below left:$\boldsymbol{x}_{1}$] at (0, 0) {};
          \node (x2b) [dot, label=right:$\boldsymbol{x}_{2}$] at (0, 2.5) {};
          \draw [link] (x2b) -- (y1b.north);
          
          % y2
          \node (y2b) [circ, label=right:$\boldsymbol{y}_{2}$] at (0, -3.5) {};
          \node (x3b) [dot, label=left:$\boldsymbol{x}_{3}$] at (-2.5, -2.5) {};
          \node (x4b) [dot, label=left:$\boldsymbol{x}_{4}$] at (-2.5, -4.5) {};
          \draw [link] (x3b) -- (y2b.150);
          \draw [link] (x4b) -- (y2b.210);
          
          % new linkage in (b): (x1 and y2)
          \draw [link] (x1b) -- (y2b.north); 
        \end{scope}
        
        % --- (b) => (c) arrow (with text) ---
        \draw[myarrow={Reconciliation}] (8.9, -1.75) -- (10.9, -1.75);
        
        % --- (c) ---
        \begin{scope}[xshift=14cm]
          \node at (-2.7, 2.7) {\textbf{(c)}};
          
          % y'1 - filled circle; updated y1
          \node (y1c) [circ_filled, label=right:$\boldsymbol{y}'_{1}$] at (0, 0.5) {}; 
          \node (x1c) [dot, label=below left:$\boldsymbol{x}_{1}$] at (0, 0) {}; 
          \node (x2c) [dot, label=right:$\boldsymbol{x}_{2}$] at (0, 2.5) {};
          \draw [link] (x2c) -- (y1c.north);
          
          % y'2 - filled circle; updated y2
          \node (y2c) [circ_filled, label=right:$\boldsymbol{y}'_{2}$] at (0, -3.0) {}; 
          \node (x3c) [dot, label=left:$\boldsymbol{x}_{3}$] at (-2.5, -2.5) {}; 
          \node (x4c) [dot, label=left:$\boldsymbol{x}_{4}$] at (-2.5, -4.5) {}; 
          \draw [link] (x3c) -- (y2c.150);
          \draw [link] (x4c) -- (y2c.210);
          
          % new linkage in (c): (x1 and y'2)
          \draw [link] (x1c) -- (y2c.north);
        \end{scope}
        
      \end{tikzpicture}
    }
  \end{center}
  \caption{A schematic diagram of the crossover and reconciliation process in the proposed EVIL algorithm. The empty and filled circles represent the current and updated latent true records, respectively. The links between the observations and the latent truths are drawn with dashed lines. Starting from (a) with $\lambda_{1}=\lambda_{2}=1$, $\lambda_{3}=\lambda_{4}=2$, and $\boldsymbol{x}_{1}$ hitting $\boldsymbol{y}_{1}$ (i.e., $\boldsymbol{z}_{1}=0$, indicated with the blue star), the crossover process switches $\lambda_{1}$ to $2$, resulting in (b) and $\boldsymbol{z}_{1}=1$ (the red star). The reconciliation process then updates $\boldsymbol{y}'_{1}$ and $\boldsymbol{y}'_{2}$ based on the new linkage structure, while corresponding distortion indicators $\boldsymbol{z}_{1},\dots,\boldsymbol{z}_{4}$ and variational parameters are updated accordingly, resulting in (c).}
  \label{fig:evil-reconciliation}
\end{figure}

The evolutionary algorithm optimizes $N_{P}+N_{O}$ randomly initialized variational factors per generation.
Given $N_{E}$ total generations, the algorithm requires $N_{P}\times N_{O}\times N_{E}$ times the cost of optimizing a single objective function.
However, since the parents and offspring are independent, the update steps with CAVI can be easily parallelized, mitigating the increase in computational cost \citep{whitley2001}.
Appendix D provides implementation details for the EVIL algorithm and derivations of the variational factors.

\subsection{Point Estimation of Linkage Structure}
\label{sec:estimation}

Through the methods presented in Sections~\ref{sec:mcmc} and~\ref{sec:evil}, we can obtain the posterior samples and approximated posterior distribution of $\boldsymbol{\Lambda}$, respectively.
However, this is not sufficient for inference, as we often need to obtain a point estimate of the linkage structure.
The most common Bayes estimator is the one that minimizes a posterior expected loss.
Since CHOMPER clusters records into the latent true record of the same entity, we need to use a loss function to solve the clustering (partitioning) problem.
Such loss functions include Binder's loss \citep{binder1978} and the variation of information loss \citep{meila2007}.
In record linkage, minimizing Binder's loss usually yields better performance, because the variation of information loss tends to underestimate the number of clusters (i.e., the number of true entities) \citep{betancourt2022a}.
Therefore, we suggest obtaining the Bayes estimate of the linkage structure by solving
\begin{equation}
  \label{eqn:point-estimator}
  \hat{\boldsymbol{\Lambda}}=\arg\min_{\boldsymbol{\Lambda}}\mathbb{E}\left[\text{L}\left(\boldsymbol{\Lambda}^{\star},\boldsymbol{\Lambda}\right)\mid\boldsymbol{\Xi}\right],
\end{equation}
where $\text{L}\left(\boldsymbol{\Lambda}^{\star},\boldsymbol{\Lambda}\right)$ is Binder's loss function and $\boldsymbol{\Lambda}^{\star}$ is the true index of the partition.

The Bayes estimate is computed using the posterior similarity matrix based on MCMC samples or variational factors from EVIL.
$\hat{\boldsymbol{\Lambda}}$ satisfying Equation~\eqref{eqn:point-estimator} is found via the \texttt{R} package \texttt{salso} \citep{dahl2022,dahl2024}.
However, if the data are too large and \texttt{salso} is computationally infeasible, we use the posterior similarity matrix directly, treating pairs whose posterior similarity exceeds a threshold as linked.
Additionally, $\hat{\boldsymbol{\Lambda}}$ implicitly estimates the true number of entities $N\leq N_{\max}$ by counting unique indices in the estimate.

\section{Real Data Application}
\label{sec:application}

In this section, we apply the CHOMPER model to real-world data containing attributes that may have multiple truths.
To showcase the flexibility of our method, we compare estimation performance across various modeling approaches commonly taken in practice, such as excluding unreliable variables or naively treating local variability as random distortion.

We estimate the linkage structure of the Italian Survey on Household Income and Wealth (ISHIW) data \citep{bankofitaly2025} to find the same respondents over repeated occurrences of the survey.
The Bank of Italy conducts the survey every 2 years, and we link the most recent data from 2020 and 2022.
Since the true identifiers are provided for each individual, we can use them to evaluate the performance of the model.
The surveys include 15,198 and 23,057 participants, respectively, with 28,000 unique individuals participating in one or both surveys.
Among them, 10,255 (36.6\%) participated in both surveys, and we refer to this as the overlap ratio such that
\begin{equation*}
  \text{Overlap Ratio} = \frac{n\left(\bigcup_{1\leq i_{1}\leq i_{2}\leq k}\left(S_{i_{1}}\cap S_{i_{2}}\right)\right)}{n\left(\bigcup_{i=1}^{k} S_{i}\right)},
\end{equation*}
where $S_{i}$ is the set of entities in the $i$th file and $n\left(S\right)$ is the number of elements in the set.

\subsection{Model Specification}
\label{sec:ishiw-model-specification}

The ISHIW dataset consists of over 300 questions, including demographic, household status, employment status, and asset-related variables.
We will use 7 attributes to perform linkage (sex, year of birth, whether Italian or not, place of birth, education level, net annual income, and estimated price of residence).
See Appendix E for the description of each attribute.

We assume that the four demographic attributes (sex, year of birth, whether Italian or not, and place of birth) have strong, immutable signals, as they are expected to remain unchanged across survey periods.
However, educational level, net annual income, and estimated residence price may have multiple truths.
As discussed in Section~\ref{sec:comprehensive-hit}, locally-varying hits are reasonable for these variables, since one can complete a degree, a tax policy may change during the survey period, or a respondent's subjective evaluation of their residence may differ.

With these 7 variables, two reasonable approaches would mimic traditional hit-or-miss methods.
The first is a Conservative approach using only the 4 strongly immutable demographic variables, as in \cite{steorts2015} and \cite{steorts2016}.
The second is a Naive approach using all 7 variables without allowing for locally-varying hits, similar to \cite{marchant2021} and \cite{menezes2024}.
Alternatively, a researcher may apply CHOMPER to all 7 variables and account for potential multiple truths (Comprehensive).
Both the Conservative and Naive approaches estimate linkage structures with the traditional hit-or-miss framework, approximated in CHOMPER by setting $\varepsilon_{\ell}\rightarrow0$ and $\tau_{\ell}\rightarrow0$ in Equation~\eqref{eqn:softmax-chomper} and~\eqref{eqn:gaussian-chomper}, respectively.
However, the Comprehensive approach controls $\varepsilon_{\ell}$ and $\tau_{\ell}$ to enable locally-variable comprehensive hits.
For instance, we set $\varepsilon_{\text{Education}}=1$ as no rapid changes are expected, while the hitting ranges for net annual income and estimated residence price require domain expertise.
Empirical results (Section~\ref{sec:simulation}) suggest erring towards a wider hitting range when eliciting a prior; see Appendix F.4 for a sensitivity analysis.
Without prior information, we set $\varepsilon_{\text{Income}}=\varepsilon_{\text{Price}}=0.1$, representing 10\% of the total variability in the sample data.
Based on the 2020 ISHIW data (standard deviations of \euro{53,117} and \euro{323,180} for net annual income and estimated residence price, respectively), this setting allows local variability of $\pm$\euro{16,797} and $\pm$\euro{102,198} per observation in 2022.

\begin{table}[H]
  \centering\caption{Model specification according to the approaches for modeling the ISHIW data. Conservative and Naive approaches use infinitesimal hitting range, while Comprehensive approach sets wide hitting ranges for local variability around the truth.}
  \label{tab:ishiw-models}
  \renewcommand{\arraystretch}{0.8}
  \begin{tabular}{cccc}
    \hline
    \multirow{2}{*}{Approach} & \multicolumn{2}{c}{Number   of Attributes} & \multirow{2}{*}{Prior Assumption}                                                  \\ \cline{2-3}
                              & Single Truth                               & Local Variability                 &                                                \\ \hline
    Conservative              & 4                                          & -                                 & $\varepsilon_{\ell},\tau_{\ell}\rightarrow0$   \\
    Naive                     & 4                                          & 3                                 & $\varepsilon_{\ell},\tau_{\ell}\rightarrow0$   \\
    Comprehensive             & 4                                          & 3                                 & $\varepsilon_{\ell}>0,\tau_{\ell}\rightarrow0$ \\ \hline
  \end{tabular}
\end{table}

The dataset has 38,255 records, making it computationally expensive to link all records at once.
For large sample sizes, researchers often partition the data using a blocking variable, performing linkage estimation for each block separately \citep{jaro1989}.
This approach is similar to stratified sampling, as blocks are expected to be mutually exclusive or have different characteristics.
However, strictly blocking records inevitably causes missing links when truly linked records contain errors in the blocking variables \citep{murray2015}.
Despite this drawback, the full dataset is too large for existing record linkage methods, including CHOMPER using MCMC (CHOMPER-MCMC).
Therefore, we will fit the model using current residence area, defined by 20 administrative regions, as a blocking variable, consistent with prior studies \citep{steorts2015,steorts2016,marchant2021,menezes2024}.
We will also present results from modeling the entire dataset together to demonstrate the scalability of CHOMPER using EVIL (CHOMPER-EVIL).

To initialize the MCMC sampler, we set $\boldsymbol{\Lambda}$, $\boldsymbol{y}$, and $\boldsymbol{z}$ assuming no records are linked.
We assume that the categorical fields follow multinomial distributions, continuous variables follow Gaussian distributions, and all fields have low distortion, setting the average distortion rate to 1\%, such that $\beta_{\ell}\sim\text{Beta}\left(N_{\max}^{r}\times0.1\times0.01,N_{\max}^{r}\times0.1\right)$, where $N_{\max}^{r}$ is the number of records in region $r$ \citep{marchant2021}.
Detailed descriptions of the other priors and their hyperparameter specifications are in Appendix E.
CHOMPER-EVIL uses the same hyperparameters and priors as CHOMPER-MCMC, with initial values set via the strategy in Appendix D.3.
In addition, we set the number of members in each generation to $N_{P}=50$ and $N_{O}=100$, and the maximum number of evolutions to $N_{E}=50$.
We use AMD Milan EPYC 7543 Processors to run all models.

For inference, we use a point estimate of the linkage structure discussed in Section~\ref{sec:estimation} using $\boldsymbol{\Lambda}$s sampled through CHOMPER-MCMC or variational factors with the largest ELBO in the last generation.
The convergence of CHOMPER-MCMC is assessed via trace plots of the number of unique entities in $\boldsymbol{\Lambda}$ \citep{aleshin-guendel2024} (Appendix E.2).
For CHOMPER-EVIL, convergence is determined if the relative change in the ELBO between successive generations is less than $10^{-5}$, the mutation step increases the ELBO, or the number of evolutions reaches $N_{E}$.

The estimated linkage structure offers four types of information to evaluate the performance of the estimation: correct linkage (TP; true positive), correct non-linkage (TN; true negative), incorrect linkage (FP; false positive), and incorrect non-linkage (FN; false negative).
F\textsubscript{1}-score, false negative rate (FNR), and false discovery rate (FDR) can be derived from these quantities and will be used to evaluate and analyze the performance of the model for each scenario, where
\begin{equation*}
  \text{F\textsubscript{1}}=\frac{2\times\text{TP}}{2\times\text{TP}+\text{FP}+\text{FN}}, \quad
  \text{FNR}=\frac{\text{FN}}{\text{FN}+\text{TP}}, \quad\text{and}\quad
  \text{FDR}=\frac{\text{FP}}{\text{FP}+\text{TP}}.
\end{equation*}

\subsection{Results}
\label{sec:application-results}

Table~\ref{tab:ishiw-regional} presents the estimation performance blocked by region for each approach.
In terms of F\textsubscript{1}-score, the Comprehensive approach using MCMC for estimation outperforms both the Conservative and Naive approaches, with the exception of region 2.
Moreover, we observe that the additional variables used in the Naive approach adversely affected performance by resulting in very few (or no) estimated links.

\begin{table}[H]
  \centering
  \caption{Results of the linkage structure estimation of regional ISHIW data using MCMC and EVIL. Each row represents the estimated performance by region, and the total row shows the aggregated performance of regional estimation. The performances are measured with F\textsubscript{1}-score (F\textsubscript{1}), false negative rate (FNR), and false discovery rate (FDR). The results with the highest F\textsubscript{1}-score for each region are bolded.}
  \label{tab:ishiw-regional}
  \renewcommand{\arraystretch}{0.8}
  \begin{tabular}{ccccccccccccc}
    \hline
    \multirow{3}{*}{Region} & \multicolumn{3}{c}{\multirow{2}{*}{Conservative}} & \multicolumn{3}{c}{\multirow{2}{*}{Naive}} & \multicolumn{6}{c}{Comprehensive}                                                                                                                \\ \cline{8-13}
                            & \multicolumn{3}{c}{}                              & \multicolumn{3}{c}{}                       & \multicolumn{3}{c}{MCMC}          & \multicolumn{3}{c}{EVIL}                                                                                     \\ \cline{2-13}
                            & F\textsubscript{1}                                & FNR                                        & FDR                               & F\textsubscript{1}       & FNR  & FDR  & F\textsubscript{1} & FNR  & FDR  & F\textsubscript{1} & FNR  & FDR  \\ \hline
    1                       & 0.17                                              & 0.72                                       & 0.87                              & 0.00                     & 1.00 & 0.00 & \textbf{0.30}      & 0.66 & 0.73 & 0.13               & 0.84 & 0.89 \\
    2                       & \textbf{0.65}                                     & 0.49                                       & 0.13                              & 0.00                     & 1.00 & -    & 0.58               & 0.59 & 0.06 & 0.63               & 0.27 & 0.44 \\
    3                       & 0.15                                              & 0.70                                       & 0.90                              & 0.01                     & 1.00 & 0.25 & 0.28               & 0.69 & 0.75 & \textbf{0.32}      & 0.71 & 0.64 \\
    4                       & 0.13                                              & 0.80                                       & 0.90                              & 0.00                     & 1.00 & -    & 0.23               & 0.77 & 0.77 & \textbf{0.27}      & 0.69 & 0.76 \\
    5                       & 0.11                                              & 0.79                                       & 0.92                              & 0.00                     & 1.00 & 1.00 & 0.26               & 0.70 & 0.78 & \textbf{0.29}      & 0.72 & 0.70 \\
    6                       & 0.36                                              & 0.65                                       & 0.63                              & 0.01                     & 1.00 & 0.00 & \textbf{0.46}      & 0.61 & 0.45 & 0.44               & 0.44 & 0.64 \\
    7                       & 0.39                                              & 0.58                                       & 0.63                              & 0.01                     & 1.00 & 0.00 & \textbf{0.52}      & 0.55 & 0.38 & 0.48               & 0.44 & 0.58 \\
    8                       & 0.13                                              & 0.75                                       & 0.92                              & 0.01                     & 0.99 & 0.33 & 0.26               & 0.70 & 0.78 & \textbf{0.27}      & 0.74 & 0.72 \\
    9                       & 0.19                                              & 0.73                                       & 0.85                              & 0.00                     & 1.00 & 0.50 & \textbf{0.33}      & 0.65 & 0.69 & 0.29               & 0.63 & 0.76 \\
    10                      & 0.22                                              & 0.73                                       & 0.82                              & 0.00                     & 1.00 & 1.00 & 0.27               & 0.74 & 0.70 & \textbf{0.30}      & 0.64 & 0.74 \\
    11                      & 0.24                                              & 0.75                                       & 0.78                              & 0.01                     & 1.00 & 0.00 & \textbf{0.39}      & 0.64 & 0.59 & 0.29               & 0.61 & 0.77 \\
    12                      & 0.19                                              & 0.68                                       & 0.86                              & 0.00                     & 1.00 & -    & \textbf{0.33}      & 0.65 & 0.68 & 0.31               & 0.58 & 0.76 \\
    13                      & 0.20                                              & 0.75                                       & 0.83                              & 0.00                     & 1.00 & -    & \textbf{0.38}      & 0.61 & 0.63 & 0.19               & 0.69 & 0.86 \\
    14                      & 0.33                                              & 0.71                                       & 0.63                              & 0.00                     & 1.00 & -    & \textbf{0.44}      & 0.61 & 0.49 & 0.35               & 0.46 & 0.74 \\
    15                      & 0.05                                              & 0.89                                       & 0.97                              & 0.00                     & 1.00 & 0.80 & \textbf{0.12}      & 0.80 & 0.92 & 0.08               & 0.95 & 0.80 \\
    16                      & 0.05                                              & 0.85                                       & 0.97                              & 0.01                     & 1.00 & 0.78 & \textbf{0.09}      & 0.81 & 0.94 & \textbf{0.09}      & 0.95 & 0.55 \\
    17                      & 0.26                                              & 0.76                                       & 0.73                              & 0.04                     & 0.98 & 0.33 & \textbf{0.37}      & 0.67 & 0.59 & 0.33               & 0.46 & 0.76 \\
    18                      & 0.20                                              & 0.78                                       & 0.82                              & 0.00                     & 1.00 & -    & \textbf{0.29}      & 0.68 & 0.73 & 0.26               & 0.68 & 0.79 \\
    19                      & 0.05                                              & 0.86                                       & 0.97                              & 0.00                     & 1.00 & 1.00 & \textbf{0.13}      & 0.76 & 0.91 & 0.11               & 0.90 & 0.88 \\
    20                      & 0.10                                              & 0.82                                       & 0.94                              & 0.00                     & 1.00 & -    & \textbf{0.23}      & 0.69 & 0.82 & 0.17               & 0.84 & 0.82 \\ \hline
    Total                   & 0.13                                              & 0.77                                       & 0.91                              & 0.00                     & 1.00 & 0.62 & 0.23               & 0.70 & 0.81 & \textbf{0.25}      & 0.74 & 0.76 \\ \hline
  \end{tabular}
\end{table}

In general, the Conservative approach has a higher overall FDR than the Comprehensive approach.
Using only 4 categorical variables resulted in insufficient information, complicating the differentiation of distinct entities and causing records with the same value to be clustered into a single entity.
With the Naive approach, the model determines that records with marginally different values do not originate from the same entity, even with a higher distortion prior, resulting in undefined FDRs due to no links (Appendix E.3).
Consequently, with the Comprehensive approach, even if the records have different values, the locally-varying hit mechanism increases the probability that they pertain to the same entity.
Figure~\ref{fig:ishiw-network} visually illustrates this behavior.

\begin{figure}[H]
  \begin{center}
    \includegraphics[width=0.9\linewidth]{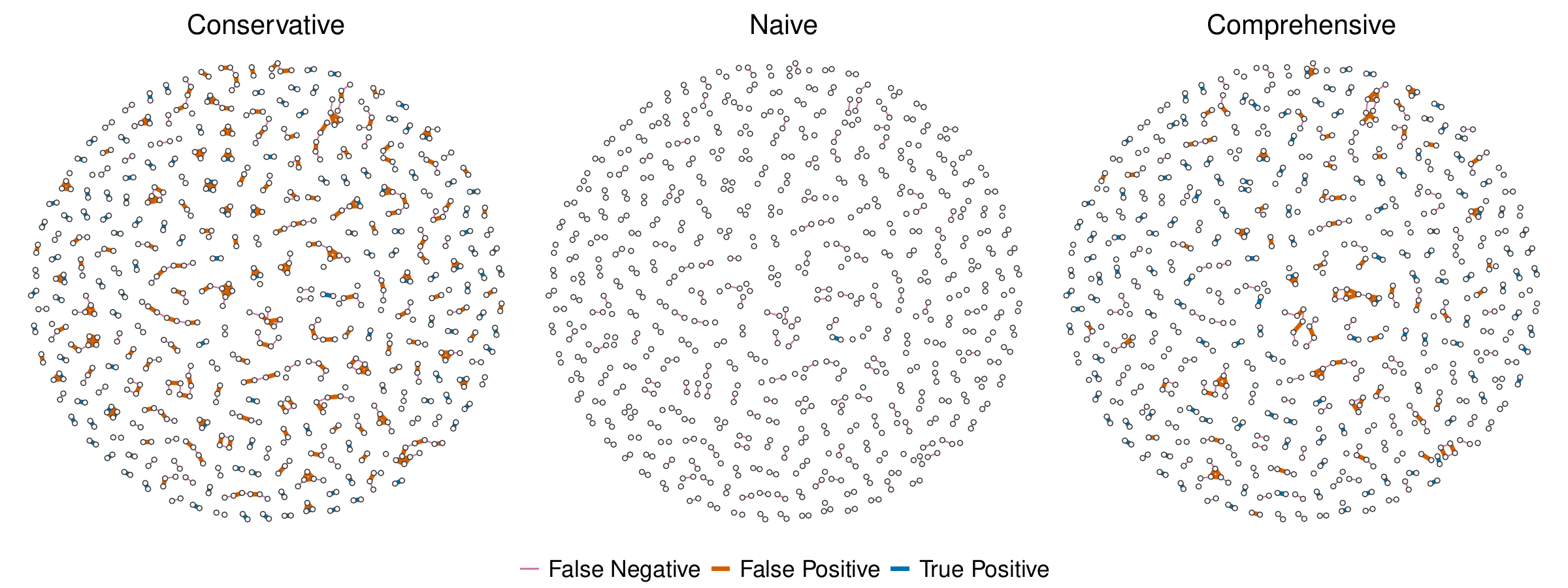}
  \end{center}
  \caption{Network graph using the linkage structure estimate for region 7. The nodes represent the records, and the edges represent the results of the estimation. Comparing the Comprehensive and Conservative approaches, we see that the Comprehensive approach reduces false positive links (thick orange edges) and increases true positive links (thick blue edges). We also observe that the Naive approach fails to identify the vast majority of links between records (thin purple edges).}
  \label{fig:ishiw-network}
\end{figure}

Figure~\ref{fig:ishiw-n-entities} shows the estimation performance of the number of unique entities for regions 3, 7, 12, and 20.
Although a considerable FDR reduction in regions 7 and 12 yielded an overestimation of the true number of unique entities, overall, the Comprehensive approach provides estimates closer to the true number of unique entities.
See Appendix E.3 for graphs for all regions.

\begin{figure}[H]
  \begin{center}
    \includegraphics[width=0.9\linewidth]{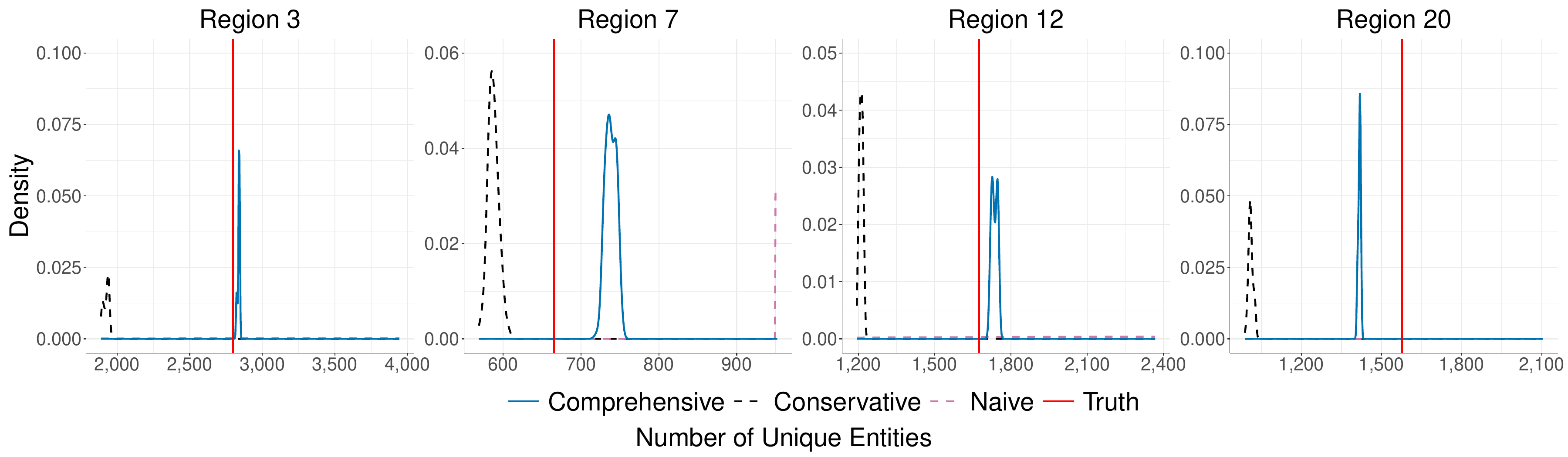}
  \end{center}
  \caption{Estimated density of the number of unique entities for regions 3, 7, 12, and 20. The red vertical line represents the true number of unique entities, and the solid blue, dashed black, and dashed purple lines represent the estimated density using MCMC samples from the Comprehensive, Conservative, and Naive approaches, respectively.}
  \label{fig:ishiw-n-entities}
\end{figure}

As mentioned in Section~\ref{sec:intro}, the CHOMPER model has the advantage of being able to estimate the linkage structure through VI.
If a Comprehensive approach is chosen, the performance of estimating $\boldsymbol{\Lambda}$ using CHOMPER-EVIL is superior to the performance using MCMC with the Conservative approach for 17 regions in terms of F\textsubscript{1}-score.
Moreover, it outperforms the Naive approach for all regions in terms of F\textsubscript{1}-score and FNR.

\begin{figure}[H]
  \begin{center}
    \includegraphics[width=0.9\linewidth]{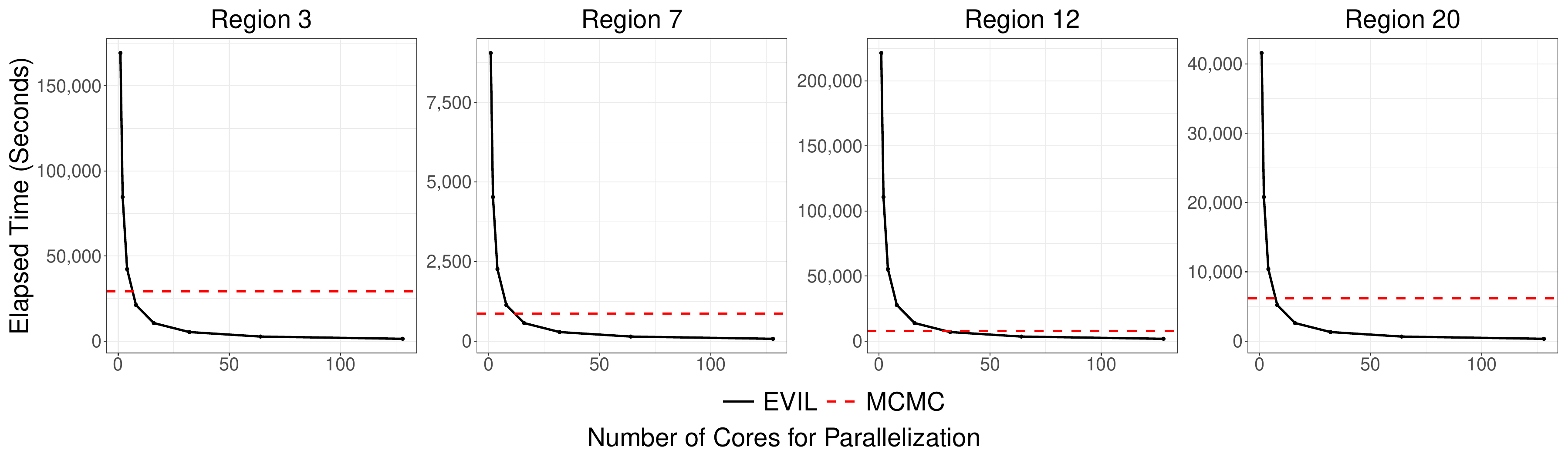}
  \end{center}
  \caption{Elapsed time in seconds of finding posterior distribution of $\boldsymbol{\Lambda}$ using CHOMPER-MCMC or CHOMPER-EVIL for the Comprehensive approach with varying number of cores used for parallelization for regions 3, 7, 12, and 20.}
  \label{fig:ishiw-timing-some}
\end{figure}

Figure~\ref{fig:ishiw-timing-some} shows that the elapsed time of fitting the model through EVIL (solid black line) decreases as the number of cores increases, whereas the speed of the MCMC algorithm (dashed red line) is not affected by the number of cores since each iteration is sequential.
While EVIL may require more time than MCMC on a single core due to up to $N_{P}\times N_{O}\times N_{E}$ independent CAVI optimizations, the $N_{P}\times N_{O}$ optimizations can be parallelized, allowing the approximate posterior distribution to be obtained faster than MCMC when a sufficient number of cores is used.
For example, in region 3, the elapsed time is shorter than that of MCMC with more than 8 cores, and in region 7, it becomes shorter with more than 16 cores.
Although both MCMC and a single CAVI optimization take longer as the data size increases, the elapsed time of EVIL may remain shorter regardless of data size, because EVIL includes randomness in its stopping criterion, stopping when the ELBO increases after the mutation step.
Appendix E.4 provides timing plots for all regions.

To further demonstrate the scalability of CHOMPER-EVIL, we fit the entire linkage structure of the ISHIW data without blocking, which, to our knowledge, has not been feasible with existing methods.
Note that since only 3 participants changed their residence area across surveys, the number of true negatives greatly increases, whereas the number of true positives remains similar to the blocked analysis.
We warm-started the algorithm using the regional results $\hat{\boldsymbol{\Lambda}}_{1},\cdots,\hat{\boldsymbol{\Lambda}}_{20}$ as the initial linkage structure, where $\hat{\boldsymbol{\Lambda}}_{r}$ is the estimate for region $r$.
We used a smaller population for the evolutionary algorithm ($N_{P}=3$ and $N_{O}=4$) to accommodate the increased memory demands of the full dataset.
To obtain a Bayes estimate, we determined linkage based on a posterior similarity matrix with a threshold of 0.5, as computing the entire point estimate with \texttt{salso} was computationally infeasible.
CHOMPER-EVIL completed in approximately 82.6 hours, resulting in F\textsubscript{1}-score$=0.19$, FNR$=0.67$, and FDR$=0.86$ by linking records across regions, similar to the aggregated regional MCMC performance values in Table~\ref{tab:ishiw-regional}.
However, CHOMPER-MCMC did not converge even after approximately 644.3 hours to generate 100,000 samples with a warm-start, as shown in Figure~\ref{fig:ishiw-traceplot-entire}.

\begin{figure}[H]
  \begin{center}
    \includegraphics[width=0.9\linewidth]{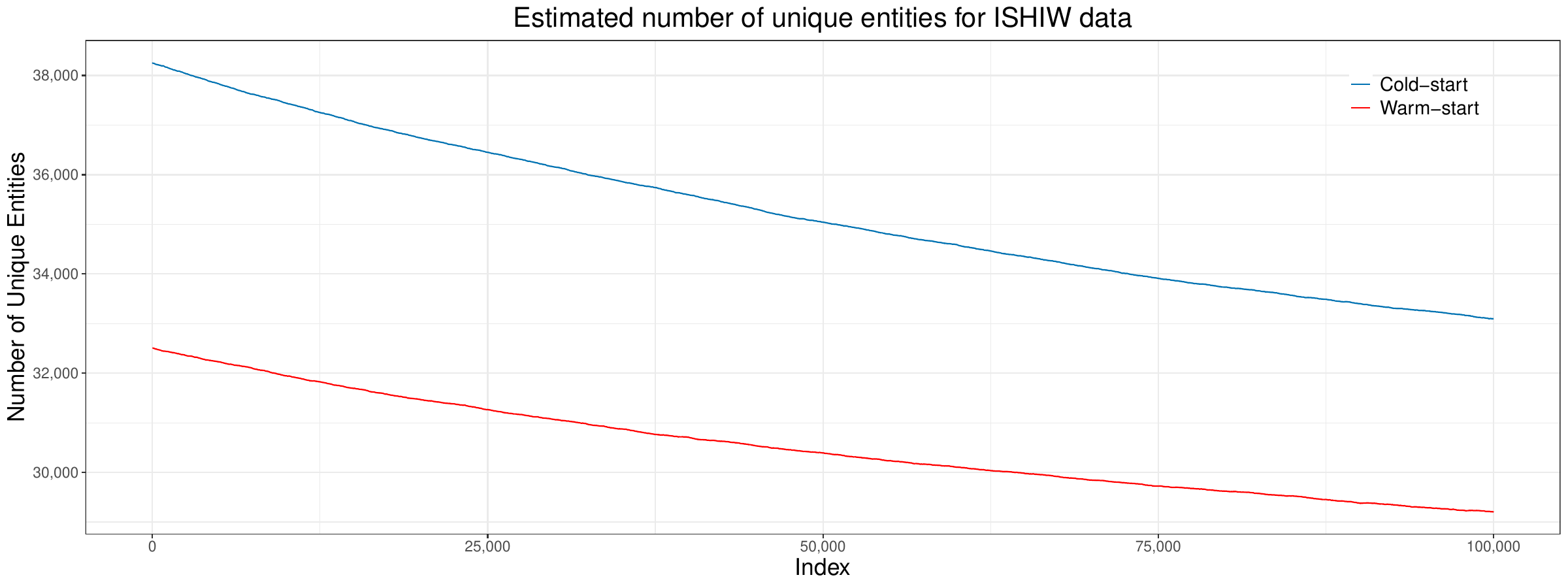}
  \end{center}
  \caption{Trace plots of the estimated number of unique $\lambda_{ij}$ for the entire ISHIW data using CHOMPER-MCMC.}
  \label{fig:ishiw-traceplot-entire}
\end{figure}

In summary, the Comprehensive approach demonstrated superior performance, while the Naive approach showed significant degradation.
The Conservative approach exhibited a high FDR by incorrectly linking distinct entities with identical values.
However, the Comprehensive approach effectively mitigated both high FDRs and FNRs.
The reduction in FDR should not be ignored if researchers plan to perform downstream analysis, as errors from a probabilistic record linkage model can bias downstream analyses in multiple ways \citep{doidge2019}.
Suppose practitioners merge one file that contains the dependent variable and another that contains the independent variables, and then run a regression analysis.
A model with a high FDR erroneously links different entities, matching a dependent variable with random independent variables, attenuating the corresponding estimated regression coefficient.
By contrast, a high FNR leaves many records unlinked, reducing the usable sample and statistical power.
Hence, the Comprehensive approach using the CHOMPER model can be a good choice for downstream analysis using the merged ISHIW data to limit attenuation in downstream inference with minimal loss in power.

\section{Simulation Study}
\label{sec:simulation}

This section aims to evaluate the effectiveness and robustness of the CHOMPER model through simulations using synthetic data for multiple cases that researchers may encounter but have not been explored in Section~\ref{sec:application}.
Although we demonstrated that our model can improve linkage structure estimation performance when adopting locally-varying hits to variables whose values differ from the truths without distortion, it is difficult to compare changes in model performance across different hitting ranges for locally-varying hits because the exact generation process of real data is unknown.

To investigate the performance of CHOMPER, we generated 30 sets of synthetic data according to the model defined in Equation~\eqref{eqn:chomper-mdl}.
The data consist of records from $N=750$ unique entities across $k=2$ files.
Each record comprises 5 discrete variables with 8 levels and 2 continuous variables, with 2 discrete and 2 continuous variables having multiple truths.
We further assume that the discrete variables follow a multinomial distribution and the continuous variables follow a Gaussian distribution.
Moreover, to analyze the effect of the data overlap ratio on estimation performance, we generated simulated data for cases with high (70\%), medium (50\%), and low (30\%) overlap ratios.

We designed simulations for two scenarios researchers might encounter: (Scenario 1) when the data contain only categorical variables and (Scenario 2) when the data contain both categorical and continuous variables.
Furthermore, researchers must choose a model that best fits their data.
We evaluate the Conservative, Naive, and Comprehensive approaches, as specified in Section~\ref{sec:ishiw-model-specification}.
Moreover, we compare to an Oracle model in which the true hitting range, which is available since the true data generating process is known, is provided to the CHOMPER model.
The true hitting range is chosen from the data-generating process so that it captures more than 95\% of the local variability.
However, this information is not available in practice.
Therefore, the selection of the hitting range depends on the researcher's domain knowledge.
In this simulation study, we evaluate the model performance by setting a slightly wider hitting range than the Oracle for the Comprehensive approach based on empirical findings.
See Appendix F.4 for a sensitivity analysis of the hitting ranges.

\begin{table}[H]
  \centering
  \caption{The average (standard deviation) of the linkage structure estimation performances of 30 simulation datasets with low overlap ratio (30\%) with the CHOMPER model using MCMC and EVIL. The performances are measured with F\textsubscript{1}-score (F\textsubscript{1}), false negative rate (FNR), and false discovery rate (FDR).}
  \label{tab:simulation-results}
  
  \begin{subtable}{\textwidth}
    \centering
    % \captionsetup{font=footnotesize}
    \caption{Scenario 1: Estimation performance when only categorical fields are used.}
    \label{tab:simulation-scenario-1}
    \scalebox{0.9}{
      \begin{tabular}{ccccccc}
        \hline
        \multirow{2}{*}{Approach} & \multicolumn{3}{c}{MCMC} & \multicolumn{3}{c}{EVIL}                                                                \\ \cline{2-7}
                                  & F\textsubscript{1}       & FNR                      & FDR         & F\textsubscript{1} & FNR         & FDR         \\ \hline
        Conservative              & 0.27 (0.02)              & 0.71 (0.03)              & 0.75 (0.02) & 0.31 (0.01)        & 0.46 (0.04) & 0.78 (0.01) \\
        Naive                     & 0.15 (0.03)              & 0.92 (0.02)              & 0.21 (0.08) & 0.23 (0.02)        & 0.86 (0.02) & 0.32 (0.07) \\
        Comprehensive             & 0.50 (0.03)              & 0.55 (0.04)              & 0.43 (0.04) & 0.57 (0.03)        & 0.43 (0.04) & 0.41 (0.03) \\
        Oracle                    & 0.46 (0.03)              & 0.66 (0.03)              & 0.31 (0.04) & 0.50 (0.03)        & 0.60 (0.03) & 0.33 (0.03) \\ \hline
      \end{tabular}
    }
  \end{subtable}
  
  \vspace{\baselineskip}
  
  \begin{subtable}{\textwidth}
    \centering
    % \captionsetup{font=footnotesize}
    \caption{Scenario 2: Estimation performance when both categorical and continuous fields are used.}
    \label{tab:simulation-scenario-2}
    \scalebox{0.9}{
      \begin{tabular}{ccccccc}
        \hline
        \multirow{2}{*}{Approach} & \multicolumn{3}{c}{MCMC} & \multicolumn{3}{c}{EVIL}                                                                \\ \cline{2-7}
                                  & F\textsubscript{1}       & FNR                      & FDR         & F\textsubscript{1} & FNR         & FDR         \\ \hline
        Conservative              & 0.27 (0.02)              & 0.71 (0.03)              & 0.75 (0.02) & 0.31 (0.01)        & 0.46 (0.04) & 0.78 (0.01) \\
        Naive                     & 0.02 (0.01)              & 0.99 (0.01)              & 0.04 (0.10) & 0.07 (0.02)        & 0.97 (0.01) & 0.14 (0.10) \\
        Comprehensive             & 0.58 (0.03)              & 0.48 (0.03)              & 0.34 (0.03) & 0.39 (0.01)        & 0.10 (0.02) & 0.75 (0.01) \\
        Oracle                    & 0.57 (0.04)              & 0.57 (0.04)              & 0.14 (0.03) & 0.69 (0.02)        & 0.09 (0.04) & 0.45 (0.03) \\ \hline
      \end{tabular}
    }
  \end{subtable}
\end{table}

Table~\ref{tab:simulation-results} shows the average performance on the 30 replicate datasets with low overlap ratio (30\%), including standard deviations in parentheses.
As in Section~\ref{sec:application}, the F\textsubscript{1}-score increases when more information is used for estimation and locally-varying hits are properly accounted for.
Note that in Scenario 2, since all the additional continuous fields contain multiple truths, the Conservative approach uses only categorical fields with a single immutable truth for estimation.
Thus, the performance is identical to that of the Conservative approach in Scenario 1.

In both scenarios, the Oracle and Comprehensive approaches outperformed the Conservative and Naive approaches in terms of F\textsubscript{1}-score.
Further, naively adding variables without considering local variability deteriorated model performance compared to the Conservative approach.
The Conservative approach created many false positives by linking records with identical values, whereas the Naive approach failed to link records with different values that originate from the same true latent record, resulting in many false negatives.
The greater performance degradation of the Naive approach relative to the Conservative approach in Scenario 2 than in Scenario 1 indicates that continuous fields are more sensitive to local variability.
By accounting for local variability, the Comprehensive and Oracle approaches linked records that the Naive approach left unlinked.
However, the changes in F\textsubscript{1}-score from the Oracle approach to the Comprehensive approach varied depending on the inferential methods and scenarios.
For CHOMPER-MCMC, the decrease in FNR exceeded the increase in FDR in both scenarios, yielding higher F\textsubscript{1}-scores for the Comprehensive approach than for the Oracle approach.
For CHOMPER-EVIL, the same pattern held in Scenario 1, but a large FDR increase due to a wider hitting range in Scenario 2 led to lower F\textsubscript{1}-scores.
The Comprehensive CHOMPER-EVIL increased both FDR and FNR at low overlap ratios but decreased FNR at medium and high overlap ratios, yet its F\textsubscript{1}-score remained below the Oracle approach.
For more detailed results on different overlap ratios and hitting range misspecification, see Appendix F.4.

In summary, the CHOMPER model enables researchers to use more information for linkage structure estimation by correctly handling attributes with multiple truths, a challenge commonly encountered but often ignored in practice.
It also demonstrates robustness to hitting range misspecification when all multiple truth variables follow a multinomial distribution.
Furthermore, using more information appropriately yields a more accurate estimate, affirming that the superior performance demonstrated in Section~\ref{sec:application} holds across various conditions in this simulation.

\section{Conclusion and Future Work}
\label{sec:conclusion}

The importance of integrating data collected from different subjects and lacking unique identifiers continues to increase, as current trends in data analysis emphasize the use of large-scale, anonymized datasets from multiple sources such as wearable devices, social media, and online platforms.
Previously existing probabilistic entity resolution models can handle only variables with a single true value per entity.
This severely limits the accuracy of entity resolution in many real world scenarios where the ``truth'' is not persistent across databases.

As a first attempt to solve this problem, we propose the CHOMPER model, which has three major contributions.
First, our model can handle cases where each entity may have variability around the single truth by proposing a new hitting mechanism in the likelihood structure for the observations.
By parameterizing the likelihood as a mixture of two distributions from the exponential family and introducing hyperparameters to control the hitting range, our model can flexibly account for both local variability and a single true value, improving entity resolution accuracy in many domains.
Second, our model is widely applicable to the data structures found in practice, as it can learn linkage structure using any fields that follow an exponential family distribution for entity resolution.
Finally, the proposed likelihood structure improves the scalability of the record linkage model by employing VI for estimation, which further extends its utility to modern data structures.

There are several directions in which the concept of multiple truths can be extended to meet research objectives.
First, we can investigate a likelihood structure that reflects the spatio-temporal tendency of variables with multiple truths \citep{robach2025}.
Furthermore, although the model proposed in this paper assumes that the fields are mutually independent, it would be advantageous to account for dependency among linkage variables.
Moreover, it may be possible to develop a model that can link records in a streaming setting, as in \cite{taylor2024} and \cite{taylor2025}, because traditional hit-or-miss entity resolution models have been limited to fixed datasets.

\bigskip
\begin{center}
  {\large\bf SUPPLEMENTARY MATERIAL}
\end{center}

\begin{description}

  \item[Supplementary Appendices:] A document including the proof of Proposition~\ref{prop:degeneracy}, derivation of full conditional distributions and variational factors, description of EVIL's initialization strategy and optimization algorithms, hyperparameters used in simulation studies and real data analysis, convergence diagnostics, and additional experimental results. (.pdf file)
  \item[Reproducible Code Base:] R code, source code (C++/Rcpp), and datasets to reproduce all results in this manuscript. (.zip file)
  \item[R-package \texttt{chomper}:] An R package implementing CHOMPER and a vignette demonstrating its use are available on CRAN.
  \\ Link: \url{https://cran.r-project.org/web/packages/chomper/index.html}
        
\end{description}

\bibliographystyle{agsm}
\bibliography{chomper}

\appendix
\clearpage % 메인 본문과 분리되도록 페이지를 넘김

% --- 1. Supplementary 타이틀 수동 생성 ---
\begin{center}
    \vspace*{2cm} % 위쪽 여백 조절
    \textbf{\LARGE Supplementary Appendices for ``A Comprehensive Bayesian Approach to Entity Resolution for Data with Multiple Truths''}
\end{center}
\vspace{1cm}

\setcounter{table}{0}
\renewcommand{\thetable}{S\arabic{table}}
\setcounter{figure}{0}
\renewcommand{\thefigure}{S\arabic{figure}}
\setcounter{equation}{0}
\renewcommand{\theequation}{S\arabic{equation}}

\section{Proof of Proposition 1}
\label{app:degeneracy}

In this section, we restate Proposition 1 in Section 2.1, and provide a proof.

\begin{proposition}
  \label{prop:app-degeneracy}
  Let \(X\) be a random variable following an exponential family distribution \(f\) with a natural parameter \(\boldsymbol{\zeta}\) such that \(f(x\mid\boldsymbol{\zeta}) = h(x)\exp\left(\boldsymbol{\zeta}T\left(x\right)-A(\boldsymbol{\zeta})\right)\) for the base measure \(h(x)\), the log-partition function \(A\left(\boldsymbol{\zeta}\right)\), and a sufficient statistic \(T\left(X\right)\).
  Then, \(T\left(X\right)\stackrel{p}{\rightarrow} \mathbb{E}\left[T\left(X\right)\right]=A'\left(\boldsymbol{\zeta}\right)\) as \(A''\left(\boldsymbol{\zeta}\right)\rightarrow0\), creating a degenerate distribution concentrated at \(A'\left(\boldsymbol{\zeta}\right)\).
\end{proposition}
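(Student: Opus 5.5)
The plan is to use the standard cumulant identities of an exponential family together with Chebyshev's inequality. In the natural parametrization, $A(\boldsymbol{\zeta})=\log\int h(x)\exp\left(\boldsymbol{\zeta}T(x)\right)\,dx$ is the log-partition function. The first step is to show that $A$ is the cumulant generating function of $T(X)$, shifted.

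For $t$ small enough that $\boldsymbol{\zeta}+t$ lies in the interior of the natural parameter space, I would write
\begin{equation*}
  M_{T}(t)=\mathbb{E}\left[e^{tT(X)}\right]=\int h(x)\exp\left((\boldsymbol{\zeta}+t)T(x)-A(\boldsymbol{\zeta})\right)dx=\exp\left(A(\boldsymbol{\zeta}+t)-A(\boldsymbol{\zeta})\right).
\end{equation*}
The only technical point is differentiating under the integral sign at $t=0$. This is justified by the standard fact that $A$ is analytic (in particular $C^{2}$) on the interior of the natural parameter space. I would assume $\boldsymbol{\zeta}$ is interior, which is implicit in the statement's use of $A'$ and $A''$. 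Differentiating $\log M_{T}$ once and twice at $t=0$ then gives
\begin{equation*}
  \mathbb{E}\left[T(X)\right]=A'(\boldsymbol{\zeta}),\qquad \operatorname{Var}\left[T(X)\right]=A''(\boldsymbol{\zeta}).
\end{equation*}
This yields the identity $\mathbb{E}[T(X)]=A'(\boldsymbol{\zeta})$ asserted in the proposition.

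The second step is to formalize the limit. I would read ``as $A''(\boldsymbol{\zeta})\rightarrow0$'' as a sequence of parameters $\boldsymbol{\zeta}_{n}$ with laws $X_{n}$ and $A''(\boldsymbol{\zeta}_{n})\rightarrow0$. The conclusion is then taken to mean $T(X_{n})-A'(\boldsymbol{\zeta}_{n})\stackrel{p}{\rightarrow}0$, or $T(X_{n})\stackrel{p}{\rightarrow}c$ when $A'(\boldsymbol{\zeta}_{n})\rightarrow c$. The Gaussian case in Equation~\eqref{eqn:gaussian-chomper} fits this pattern, with $A'$ fixed at $y_{\lambda_{ij}\ell}$ and the variance $\varepsilon_{\ell}\rightarrow0$.

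For every $\epsilon>0$, Chebyshev's inequality gives
\begin{equation*}
  \Pr\left(\left|T(X_{n})-A'(\boldsymbol{\zeta}_{n})\right|>\epsilon\right)\leq\frac{A''(\boldsymbol{\zeta}_{n})}{\epsilon^{2}}\rightarrow0,
\end{equation*}
which is the claimed convergence in probability. When the mean is held fixed, or converges to some $c$, this convergence in probability to a constant is equivalent to convergence in distribution to the point mass $\delta_{c}$. That is the degenerate distribution the proposition refers to, and it recovers the Dirac hit $\delta_{y_{i}}$ of Section~\ref{sec:background}.

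The main obstacle is step one, not the probabilistic argument. It requires justifying the interchange of differentiation and integration, and making precise the sense of the limit in a statement that is phrased loosely. For vector $\boldsymbol{\zeta}$ and $T$, I would replace $A''$ by the Hessian $\nabla^{2}A(\boldsymbol{\zeta})=\operatorname{Cov}\left[T(X)\right]$ and the Chebyshev step by the bound $\Pr\left(\Vert T-\nabla A\Vert>\epsilon\right)\leq\operatorname{tr}\left(\nabla^{2}A\right)/\epsilon^{2}$. Everything else stays the same.
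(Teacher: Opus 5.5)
Your proposal is correct and follows essentially the same route as the paper's proof: it combines the exponential-family identities $\mathbb{E}[T(X)]=A'(\boldsymbol{\zeta})$ and $\text{Var}[T(X)]=A''(\boldsymbol{\zeta})$ with Chebyshev's inequality, which bounds the tail probability by $A''(\boldsymbol{\zeta})/\epsilon^{2}\rightarrow0$. The differences are that you derive those identities from the cumulant generating function $A(\boldsymbol{\zeta}+t)-A(\boldsymbol{\zeta})$ instead of citing them, and you make the limit precise (a sequence $\boldsymbol{\zeta}_{n}$ with possibly moving means, and the vector case via $\operatorname{tr}\nabla^{2}A$), which the paper leaves implicit.
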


\begin{proof}
  By the properties of the exponential family, the mean and variance of the sufficient statistic can be obtained as
  \begin{equation*}
    \begin{aligned}
      \mathbb{E}\left[T\left(X\right)\right] & = \frac{d}{d\boldsymbol{\zeta}}A\left(\boldsymbol{\zeta}\right)=A'\left(\boldsymbol{\zeta}\right) \text{ and } \\
      \text{Var}\left[T\left(X\right)\right] & = \frac{d^{2}}{d\boldsymbol{\zeta}^{2}}A\left(\boldsymbol{\zeta}\right)=A''\left(\boldsymbol{\zeta}\right).
    \end{aligned}
  \end{equation*}
  Then for an arbitrary $\epsilon>0$,
  \begin{equation*}
    \mathbb{P}\left(\lvert T\left(X\right)-A'\left(\boldsymbol{\zeta}\right)\rvert\geq\epsilon\right)\leq\frac{\text{Var}\left[T\left(X\right)\right]}{\epsilon^{2}}
  \end{equation*}
  by Chebyshev's inequality. As $\text{Var}\left[T\left(X\right)\right]=A''\left(\boldsymbol{\zeta}\right)\rightarrow0$ by assumption,
  \begin{equation*}
    \lim_{A''\left(\boldsymbol{\zeta}\right)\rightarrow0}\mathbb{P}\left(\lvert T\left(X\right)-A'\left(\boldsymbol{\zeta}\right)\rvert\geq\epsilon\right)=0.
  \end{equation*}
  Thus, the result above satisfies the definition of convergence in probability, which results in
  \begin{equation*}
    T\left(X\right)\stackrel{p}{\rightarrow}A'\left(\boldsymbol{\zeta}\right)
  \end{equation*}
  by Slutsky's theorem.
  This implies that as the variance vanishes, the distribution of the sufficient statistic concentrates entirely around its expectation.
  Thus, the distribution degenerates to a point mass at $A'\left(\boldsymbol{\zeta}\right)$.
\end{proof}

\section{Full Conditional Distributions}
\label{app:full-conditional}

This section provides derivations for the full conditional distributions of the CHOMPER model defined in Equation (2) of Section 2 in the main manuscript.
The full conditional distributions in this section are used in both CHOMPER-MCMC and CHOMPER-EVIL, because the CAVI updates require the log of the full conditional distributions \citep{bishop2006,blei2017}.

We assume that there are $p$ variables, of which $\ell_{1}$ follow multinomial distributions and $p-\ell_{1}+1$ follow Gaussian distributions.
Furthermore, we use conjugate priors for each likelihood.
That is, we assume a CHOMPER model of the following form:

\begin{equation}
  \label{eqn:chomper-mdl-appendix}
  \begin{aligned}
    x_{ij\ell}\mid\lambda_{ij},y_{\lambda_{ij}\ell},z_{ij\ell}
                                                                       & \stackrel{\text{ind}}{\sim}
    \begin{cases}
      \text{MN}\left(1,\boldsymbol{\theta}_{\ell}\right)^{z_{ij\ell}}\text{MN}\left(1,\tilde{\phi}_{\ell}\left(x_{ij\ell},y_{\lambda_{ij}\ell},\tau_{\ell},\varepsilon_{\ell}\right)\right)^{1-z_{ij\ell}} & \text{for } \ell=1,\cdots,\ell_{1}    \\
      \text{N}\left(\eta_{\ell},\sigma_{\ell}\right)^{z_{ij\ell}}\text{N}\left(y_{\lambda_{ij}\ell},\varepsilon_{\ell}\right)^{1-z_{ij\ell}}                                                               & \text{for } \ell=\ell_{1}+1,\cdots,p
    \end{cases} \\
    z_{ij\ell}\mid\beta_{\ell}                                         & \stackrel{\text{ind}}{\sim}\text{Bernoulli}\left(\beta_{\ell}\right)                                                                                                                                 \\
    y_{j'\ell}\mid\boldsymbol{\theta}_{\ell},\eta_{\ell},\sigma_{\ell} & \stackrel{\text{ind}}{\sim}
    \begin{cases}
      \text{MN}\left(1,\boldsymbol{\theta}_{\ell}\right) & \text{for } \ell=1,\cdots,\ell_{1}    \\
      \text{N}\left(\eta_{\ell},\sigma_{\ell}\right)     & \text{for } \ell=\ell_{1}+1,\cdots,p
    \end{cases}                                                                                                                                                                             \\
    \boldsymbol{\theta}_{\ell}                                         & \stackrel{\text{ind}}{\sim}\text{Dirichlet}\left(\boldsymbol{\mu}_{\ell}\right)                                                                                                                      \\
    \pi\left(\eta_{\ell}\right)                                        & \propto 1                                                                                                                                                                                            \\
    \sigma_{\ell}                                                      & \stackrel{\text{ind}}{\sim}\text{Inverse-Gamma}\left(a^{\sigma}_{\ell},b^{\sigma}_{\ell}\right)                                                                                                      \\
    \beta_{\ell}                                                       & \stackrel{\text{ind}}{\sim}\text{Beta}\left(a_{\ell},b_{\ell}\right)                                                                                                                                 \\
    \pi\left(\boldsymbol{\Lambda}\right)                               & \propto 1,
  \end{aligned}
\end{equation}
where
\begin{equation*}
  \tilde{\phi}_{\ell}\left(x_{ij\ell},y_{\lambda_{ij}\ell},\tau_{\ell},\varepsilon_{\ell}\right)
  =\frac{\phi_{\ell}^{\frac{1}{\tau_{\ell}}\mathbb{I}\left(\lvert x_{ij\ell}-y_{\lambda_{ij}\ell}\rvert\leq\varepsilon_{\ell}\right)}}{\sum_{m=1}^{M_{\ell}}\phi_{\ell}^{\frac{1}{\tau_{\ell}}\mathbb{I}\left(\lvert x_{ij\ell}-m\rvert\leq\varepsilon_{\ell}\right)}}
\end{equation*}
for $\varepsilon_{\ell}\geq0$, $\tau_{\ell}>0$, and $\phi_{\ell}>1$.

\subsection{Full Conditional Distributions of the Model}
\label{app:full-conditional-summary}

The CHOMPER model defined in Equation~\eqref{eqn:chomper-mdl-appendix} has full conditional distributions of
\begin{equation*}
  \begin{aligned}
    \boldsymbol{\theta}_{\ell}\mid\boldsymbol{\Xi}_{-\boldsymbol{\theta}_{\ell}}
     & \sim\text{Dirichlet}\left(\left\{\alpha'_{\ell m}\right\}_{m=1}^{M_{\ell}}\right)                                                                                                                                                                                                                         \\
    \eta_{\ell}\mid\boldsymbol{\Xi}_{-\eta_{\ell}}
     & \sim\text{N}\left(\frac{\sum_{j'}y_{j'\ell}+\sum_{i,j}z_{ij\ell}x_{ij\ell}}{N_{\max}+\sum_{i,j}z_{ij\ell}},\frac{\sigma_{\ell}}{N_{\max}+\sum_{i,j}z_{ij\ell}}\right)                                                                                                                                     \\
    \sigma_{\ell}\mid\boldsymbol{\Xi}_{-\sigma_{\ell}}
     & \sim\text{Inverse-Gamma}\left(a^{\sigma}_{\ell}+\frac{1}{2}\left[N_{\max}+\sum_{i,j}z_{ij\ell}\right],b'^{\sigma}_{\ell}\right)                                                                                                                                                                           \\
    \beta_{\ell}\mid\boldsymbol{\Xi}_{-\beta_{\ell}}
     & \sim\text{Beta}\left(a_{\ell}+\sum_{i=1}^{k}\sum_{j=1}^{n_{i}}z_{ij\ell},b_{\ell}+\sum_{i=1}^{k}\sum_{j=1}^{n_{i}}\left(1-z_{ij\ell}\right)\right)                                                                                                                                                        \\
    \lambda_{ij}\mid\boldsymbol{\Xi}_{-\lambda_{ij}}
     & \sim\text{MN}\left(1,\left\{\frac{\nu'_{ij,j'}}{\sum_{u=1}^{N_{\max}}\nu'_{ij,u}}\right\}_{j'=1}^{N_{\max}}\right)                                                                                                                                                                                        \\
    y_{j'\ell}\mid\boldsymbol{\Xi}_{-y_{j'\ell}}
     & \sim\begin{cases}
             \text{MN}\left(1,\left\{\frac{\gamma'_{j'\ell,m}}{\sum_{r=1}^{M_{\ell}}\gamma'_{j'\ell,r}}\right\}_{m=1}^{M_{\ell}}\right) & \text{for }\ell=1,\cdots,\ell_{1}    \\
             \text{N}\left(\eta'_{j'\ell},\sigma'_{j'\ell}\right)                                                                       & \text{for }\ell=\ell_{1}+1,\cdots,p
           \end{cases} \\
    z_{ij\ell}\mid\boldsymbol{\Xi}_{-z_{ij\ell}}
     & \sim\text{Bernoulli}\left(p_{ij\ell}\right),
  \end{aligned}
\end{equation*}
where
\begin{align*}
  \alpha'_{\ell m}   & = \mu_{\ell m}+\sum_{j'=1}^{N_{\max}}\mathbb{I}\left(y_{j'\ell}=m\right)+\sum_{i=1}^{k}\sum_{j=1}^{n_{i}}z_{ij\ell}\mathbb{I}\left(x_{ij\ell}=m\right)                                                                                                                                                  \\
  b'^{\sigma}_{\ell} & = b^{\sigma}_{\ell}+\frac{1}{2}\left(\sum_{j'}\left(y_{j'\ell}-\eta_{\ell}\right)^{2}+\sum_{i,j}z_{ij\ell}\left(x_{ij\ell}-\eta_{\ell}\right)^{2}\right)                                                                                                                                                \\
  \nu'_{ij,j'}       & = \prod_{\ell=1}^{\ell_{1}}\phi_{\ell}^{\frac{1}{\tau_{\ell}}\left(1-z_{ij\ell}\right)\mathbb{I}\left(\lvert y_{j'\ell}-x_{ij\ell}\rvert\leq\varepsilon_{\ell}\right)}\prod_{\ell=\ell_{1}+1}^{p}\exp\left(-\frac{\left(1-z_{ij\ell}\right)}{2\varepsilon_{\ell}}\left(x_{ij\ell}-y_{j'\ell}\right)^{2}\right) \\
  \gamma'_{j'\ell,m} & = \phi_{\ell}^{\frac{1}{\tau_{\ell}}\sum_{\{i,j:\lvert x_{ij\ell}-m\rvert\leq\varepsilon_{\ell}\}}\mathbb{I}\left(\lambda_{ij}=j'\right)\left(1-z_{ij\ell}\right)}\theta_{\ell m}                                                                                                                       \\
  \eta'_{j'\ell}     & = \frac{\eta_{\ell}\varepsilon_{\ell}+\sigma_{\ell}\sum_{i,j}\mathbb{I}\left(\lambda_{ij}=j'\right)\left(1-z_{ij\ell}\right)x_{ij\ell}}{\varepsilon_{\ell}+\sigma_{\ell}\sum_{i,j}\mathbb{I}\left(\lambda_{ij}=j'\right)\left(1-z_{ij\ell}\right)}                                                      \\
  \sigma'_{j'\ell}   & = \frac{\varepsilon_{\ell}\sigma_{\ell}}{\varepsilon_{\ell}+\sigma_{\ell}\sum_{i,j}\mathbb{I}\left(\lambda_{ij}=j'\right)\left(1-z_{ij\ell}\right)}
\end{align*}
and
\begin{align*}
  p_{ij\ell}=
  \begin{cases}
    \frac{
    \beta_{\ell}\prod_{m}\theta_{\ell m}^{\mathbb{I}\left(x_{ij\ell}=m\right)}
    }{
    \beta_{\ell}\prod_{m}\theta_{\ell m}^{\mathbb{I}\left(x_{ij\ell}=m\right)}+c_{ij\ell}^{-1}\left(1-\beta_{\ell}\right)\left(\phi_{\ell}^{\frac{1}{\tau_{\ell}}\mathbb{I}\left(\lvert y_{j'\ell}-x_{ij\ell}\rvert\leq\varepsilon_{\ell}\right)\mathbb{I}\left(\lambda_{ij}=j'\right)}\right)
    } & \text{if }\ell=1,\cdots,\ell_{1}    \\
    \frac{
      \beta_{\ell}\frac{1}{\sqrt{2\pi\sigma_{\ell}}}\exp\left(-\frac{1}{2\sigma_{\ell}}\left(x_{ij\ell}-\eta_{\ell}\right)^{2}\right)
    }{
      \splitfrac{
        \beta_{\ell}\frac{1}{\sqrt{2\pi\sigma_{\ell}}}\exp\left(-\frac{1}{2\sigma_{\ell}}\left(x_{ij\ell}-\eta_{\ell}\right)^{2}\right)
      }{
        +\left(1-\beta_{\ell}\right)\frac{1}{\sqrt{2\pi\varepsilon_{\ell}}}\exp\left(-\frac{1}{2\varepsilon_{\ell}}\left(x_{ij\ell}-\mathbb{I}\left(\lambda_{ij}=j'\right)y_{j'\ell}\right)^{2}\right)
      }
    } & \text{if }\ell=\ell_{1}+1,\cdots,p
  \end{cases}
\end{align*}
for
\begin{equation*}
  c_{ij\ell}=\sum_{m=1}^{M_{\ell}}\phi_{\ell}^{\frac{1}{\tau_{\ell}}\mathbb{I}\left(\lvert x_{ij\ell}-m\rvert\leq\varepsilon_{\ell}\right)}.
\end{equation*}

\subsection{Detailed Derivation of the Full Conditional Distributions}
\label{app:detailed-derivation-full-conditionals}

Based on the definition of the model in Equation~\eqref{eqn:chomper-mdl-appendix}, we can find the joint posterior distribution of the model:
\begin{equation}
  \label{eqn:joint-posterior}
  \begin{aligned}
    \pi\left(\boldsymbol{\Lambda},\boldsymbol{y},\boldsymbol{z},\boldsymbol{\theta},\boldsymbol{\beta},\boldsymbol{\eta},\boldsymbol{\sigma}\mid\boldsymbol{x}\right)
    \propto & \prod_{i,j,\ell,m}^{\ell=1,\cdots,\ell_{1}}\left[\theta_{\ell m}^{\mathbb{I}\left(x_{ij\ell}=m\right)}\right]^{z_{ij\ell}}
    \times \prod_{j',\ell,m}^{\ell=1,\cdots,\ell_{1}}\theta_{\ell m}^{\mathbb{I}\left(y_{j'\ell}=m\right)}
    \times \prod_{\ell ,m}^{\ell=1,\cdots,\ell_{1}}\theta_{\ell m}^{\mu_{\ell m}-1}                                                                                                                                                \\
            & \times \prod_{i,j,\ell}^{\ell=1,\cdots,\ell_{1}}\tilde{\phi}_{\ell}\left(x_{ij\ell},y_{\lambda_{ij}\ell},\tau_{\ell},\varepsilon_{\ell}\right)^{1-z_{ij\ell}}                                                        \\
            & \times \prod_{i,j,\ell}^{\ell=1,\cdots,p}\beta_{\ell}^{z_{ij\ell}}\left(1-\beta_{\ell}\right)^{1-z_{ij\ell}}
    \times \prod_{\ell}^{\ell=1,\cdots,p}\beta_{\ell}^{a_{\ell}-1}\left(1-\beta_{\ell}\right)^{b_{\ell}-1}                                                                                                                         \\
            & \times \prod_{i,j,\ell}^{\ell=\ell_{1}+1,\cdots,p}\left[\frac{1}{\sqrt{2\pi\sigma_{\ell}}}\exp\left(-\frac{1}{2\sigma_{\ell}}\left(x_{ij\ell}-\eta_{\ell}\right)^{2}\right)\right]^{z_{ij\ell}}                      \\
            & \times \prod_{i,j,\ell}^{\ell=\ell_{1}+1,\cdots,p}\left[\frac{1}{\sqrt{2\pi\varepsilon_{\ell}}}\exp\left(-\frac{1}{2\varepsilon_{\ell}}\left(x_{ij\ell}-y_{\lambda_{ij}\ell}\right)^{2}\right)\right]^{1-z_{ij\ell}} \\
            & \times \prod_{j',\ell}^{\ell=\ell_{1}+1,\cdots,p}\frac{1}{\sqrt{2\pi\sigma_{\ell}}}\exp\left(-\frac{1}{2\sigma_{\ell}}\left(y_{j'\ell}-\eta_{\ell}\right)^{2}\right)                                                 \\
            & \times \prod_{\ell}^{\ell=\ell_{1}+1,\cdots,p}\sigma_{\ell}^{-a^{\sigma}_{\ell}-1}\exp\left(-\frac{b^{\sigma}_{\ell}}{\sigma_{\ell}}\right)
  \end{aligned}
\end{equation}
for $\boldsymbol{\Lambda}=\left\{\lambda_{ij}\right\}_{i,j}$, $\boldsymbol{y}=\left(\boldsymbol{y}_{1},\cdots,\boldsymbol{y}_{N_{\max}}\right)$, $\boldsymbol{z}=\left\{z_{ij\ell}\right\}_{i,j,\ell}$, $\boldsymbol{\theta}=\left(\boldsymbol{\theta}_{1},\cdots,\boldsymbol{\theta}_{\ell_{1}}\right)$, $\boldsymbol{\beta}=\left(\beta_{1},\cdots,\beta_{p}\right)$, $\boldsymbol{\eta}=\left(\eta_{\ell_{1}+1},\cdots,\eta_{p}\right)$, $\boldsymbol{\sigma}=\left(\sigma_{\ell_{1}+1},\cdots,\sigma_{p}\right)$, and $\boldsymbol{x}=\left\{\boldsymbol{x}_{ij}\right\}_{i,j}$.
Hereafter, we will derive the full conditional distributions of the parameters and the latent variables from Equation~\eqref{eqn:joint-posterior} so that we can draw inference of the linkage structure, $\boldsymbol{\Lambda}$.

Before the derivation, we require a trick for bringing out $\lambda_{ij}$ from the subscripts of $y_{\lambda_{ij}\ell}$.
Note that $y_{\lambda_{ij}\ell}=\mathbb{I}\left(\lambda_{ij}=j'\right)y_{j'\ell}$ so that $x_{ij\ell}-y_{\lambda_{ij}\ell}=x_{ij\ell}$ for all $i$ and $j$ where $\lambda_{ij}\neq j'$, which implies that we can bring out the terms related to $y_{j'\ell}$ in the following manner:
\begin{equation}
  \label{eqn:trick_for_extracting_lambda}
  \begin{aligned}
    \prod_{i,j}\left[\frac{1}{\sqrt{2\pi\varepsilon_{\ell}}}\exp\left(-\frac{1}{2\varepsilon_{\ell}}\left(x_{ij\ell}-y_{\lambda_{ij}\ell}\right)^{2}\right)\right]^{z_{ij\ell}}
    \propto & \prod_{i,j}\left[\exp\left(-\frac{1}{2\varepsilon_{\ell}}\left(x_{ij\ell}-y_{\lambda_{ij}\ell}\right)^{2}\right)\right]^{z_{ij\ell}}                               \\
    =       & \prod_{\{i,j:\lambda_{ij}=j'\}}\left[\exp\left(-\frac{1}{2\varepsilon_{\ell}}\left(x_{ij\ell}-y_{j'\ell}\right)^{2}\right)\right]^{z_{ij\ell}}                     \\
            & \times\prod_{\{i,j:\lambda_{ij}\neq j'\}}\left[\exp\left(-\frac{1}{2\varepsilon_{\ell}}x_{ij\ell}^{2}\right)\right]^{z_{ij\ell}}                                   \\
    \propto & \prod_{i,j}\left[\exp\left(-\frac{1}{2\varepsilon_{\ell}}\left(x_{ij\ell}-y_{j'\ell}\right)^{2}\right)\right]^{z_{ij\ell}\mathbb{I}\left(\lambda_{ij}=j'\right)}.
  \end{aligned}
\end{equation}

For a simpler notation, let $\boldsymbol{\Xi}_{-\xi}$ be a set of parameters and data excluding $\xi$.
If we only consider the kernel related to $\boldsymbol{\theta}_{\ell}$, we observe
\begin{equation*}
  \begin{aligned}
    \pi\left(\boldsymbol{\theta}_{\ell}\mid\boldsymbol{\Xi}_{-\boldsymbol{\theta}_{\ell}}\right)
    \propto & \prod_{i,j,m}\theta_{\ell m}^{z_{ij\ell}\mathbb{I}\left(x_{ij\ell}=m\right)}
    \times \prod_{j',m}\theta_{\ell m}^{\mathbb{I}\left(y_{j'\ell}=m\right)}
    \times \prod_{m}\theta_{\ell m}^{\mu_{\ell m}-1}                                                                                                                        \\
    \propto & \prod_{m=1}^{M_{\ell}}\theta_{\ell m}^{\mu_{\ell m}+\sum_{j'}\mathbb{I}\left(y_{j'\ell}=m\right)+\sum_{i,j}z_{ij\ell}\mathbb{I}\left(x_{ij\ell}=m\right)-1}.
  \end{aligned}
\end{equation*}
Thus, $\boldsymbol{\theta}_{\ell}\mid\boldsymbol{\Xi}_{-\boldsymbol{\theta}_{\ell}}\sim\text{Dirichlet}\left(\left\{\alpha'_{\ell m}\right\}_{m=1}^{M_{\ell}}\right)$ for
\begin{equation*}
  \alpha'_{\ell m}=\mu_{\ell m}+\sum_{j'=1}^{N_{\max}}\mathbb{I}\left(y_{j'\ell}=m\right)+\sum_{i=1}^{k}\sum_{j=1}^{n_{i}}z_{ij\ell}\mathbb{I}\left(x_{ij\ell}=m\right).
\end{equation*}

Next, if one collects the terms related to $\eta_{\ell}$ and $\sigma_{\ell}$, it is observed that
\begin{equation*}
  \begin{aligned}
    \pi\left(\eta_{\ell},\sigma_{\ell}\mid\boldsymbol{\Xi}_{-\eta_{\ell},-\sigma_{\ell}}\right)
    \propto & \prod_{j'}\frac{1}{\sqrt{2\pi\sigma_{\ell}}}\exp\left(-\frac{1}{2\sigma_{\ell}}\left(y_{j'\ell}-\eta_{\ell}\right)^{2}\right)                                                                                                                                   \\
            & \times \prod_{i,j}\frac{1}{\sqrt{2\pi\sigma_{\ell}}}\exp\left(-\frac{1}{2\sigma_{\ell}}\left(x_{ij\ell}-\eta_{\ell}\right)^{2}\right)\times \sigma_{\ell}^{-a^{\sigma}_{\ell}-1}\exp\left(-\frac{b^{\sigma}_{\ell}}{\sigma_{\ell}}\right)                       \\
    \propto & \sigma_{\ell}^{-\frac{1}{2}N_{\max}}\exp\left(-\frac{1}{2\sigma_{\ell}}\sum_{j'}\left(\eta_{\ell}-y_{j'\ell}\right)^{2}\right)                                                                                                                                  \\
            & \times \sigma_{\ell}^{-\frac{1}{2}\sum_{i,j}z_{ij\ell}}\exp\left(-\frac{1}{2\sigma_{\ell}}\sum_{i,j}z_{ij\ell}\left(x_{ij\ell}-\eta_{\ell}\right)^{2}\right)\times\sigma_{\ell}^{-a^{\sigma}_{\ell}-1}\exp\left(-\frac{b^{\sigma}_{\ell}}{\sigma_{\ell}}\right) \\
    \propto & \sigma_{\ell}^{-\left(a^{\sigma}_{\ell}+\frac{1}{2}\left\{N_{\max}+\sum_{i,j}z_{ij\ell}\right\}\right)-1}                                                                                                                                                       \\
            & \times \exp\left(-\frac{1}{\sigma_{\ell}}\left[b^{\sigma}_{\ell}+\frac{1}{2}\left\{\sum_{j'}\left(\eta_{\ell}-y_{j'\ell}\right)^{2}+\sum_{i,j}z_{ij\ell}\left(\eta_{\ell}-x_{ij\ell}\right)^{2}\right\}\right]\right).
  \end{aligned}
\end{equation*}
Therefore,
\begin{equation*}
  \begin{aligned}
    \eta_{\ell}\mid\boldsymbol{\Xi}_{-\eta_{\ell}}
     & \sim\text{N}\left(\frac{\sum_{j'}y_{j'\ell}+\sum_{i,j}z_{ij\ell}x_{ij\ell}}{N_{\max}+\sum_{i,j}z_{ij\ell}},\frac{\sigma_{\ell}}{N_{\max}+\sum_{i,j}z_{ij\ell}}\right)\text{ and}                                                                                      \\
    \sigma_{\ell}\mid\boldsymbol{\Xi}_{-\sigma_{\ell}}
     & \sim\text{Inverse-Gamma}\left(a^{\sigma}_{\ell}+\frac{1}{2}\left[N_{\max}+\sum_{i,j}z_{ij\ell}\right],b^{\sigma}_{\ell}+\frac{1}{2}\left[\sum_{j'}\left(y_{j'\ell}-\eta_{\ell}\right)^{2}+\sum_{i,j}z_{ij\ell}\left(x_{ij\ell}-\eta_{\ell}\right)^{2}\right]\right).
  \end{aligned}
\end{equation*}

The full conditional distribution of $\beta_{\ell}$ is obtained similarly:
\begin{equation*}
  \beta_{\ell}\mid\boldsymbol{\Xi}_{-\beta_{\ell}}
  \sim\text{Beta}\left(a_{\ell}+\sum_{i=1}^{k}\sum_{j=1}^{n_{i}}z_{ij\ell},b_{\ell}+\sum_{i=1}^{k}\sum_{j=1}^{n_{i}}\left(1-z_{ij\ell}\right)\right).
\end{equation*}

The next step is deriving the full conditional distribution of $\lambda_{ij}$, where the trick in Equation~\eqref{eqn:trick_for_extracting_lambda} is needed.
If we take out $\lambda_{ij}$ from the subscripts of $y_{\lambda_{ij}\ell}$ and consider the terms related to $\lambda_{ij}$ from Equation~\eqref{eqn:joint-posterior}, we find
\begin{equation*}
  \begin{aligned}
    \pi\left(\lambda_{ij}\mid\boldsymbol{\Xi}_{-\lambda_{ij}}\right)
    \propto & \left[\prod_{\ell=1}^{\ell_{1}}\phi_{\ell}^{\frac{1}{\tau_{\ell}}\mathbb{I}\left(\lvert y_{j'\ell}-x_{ij\ell}\rvert\leq\varepsilon_{\ell}\right)\left(1-z_{ij\ell}\right)}
    \times \prod_{\ell=\ell_{1}+1}^{p}\exp\left(-\frac{\left(1-z_{ij\ell}\right)}{2\varepsilon_{\ell}}\left(x_{ij\ell}-y_{j'\ell}\right)^{2}\right)\right]^{\mathbb{I}\left(\lambda_{ij}=j'\right)}.
  \end{aligned}
\end{equation*}
That is, the full conditional distribution of $\lambda_{ij}$ follows a multinomial distribution such that
\begin{equation*}
  \lambda_{ij}\mid\boldsymbol{\Xi}_{-\lambda_{ij}}
  \sim\text{MN}\left(1,\left\{\frac{\nu'_{ij,j'}}{\sum_{u=1}^{N_{\max}}\nu'_{ij,u}}\right\}_{j'=1}^{N_{\max}}\right),
\end{equation*}
where
\begin{equation*}
  \nu'_{ij,j'}=\prod_{\ell=1}^{\ell_{1}}\phi_{\ell}^{\frac{1}{\tau_{\ell}}\mathbb{I}\left(\lvert y_{j'\ell}-x_{ij\ell}\rvert\leq\varepsilon_{\ell}\right)\left(1-z_{ij\ell}\right)}\prod_{\ell=\ell_{1}+1}^{p}\exp\left(-\frac{\left(1-z_{ij\ell}\right)}{2\varepsilon_{\ell}}\left(x_{ij\ell}-y_{j'\ell}\right)^{2}\right).
\end{equation*}

Applying the trick in~\eqref{eqn:trick_for_extracting_lambda} again and considering the terms that contain $y_{j'\ell}$, we obtain
\begin{equation*}
  \pi\left(y_{j'\ell}\mid\boldsymbol{\Xi}_{-y_{j'\ell}}\right)
  \propto \phi_{\ell}^{\frac{1}{\tau_{\ell}}\sum_{\left\{i,j:\lvert x_{ij\ell}-y\rvert\leq\varepsilon_{\ell}\right\}}\mathbb{I}\left(\lambda_{ij}=j'\right)\left(1-z_{ij\ell}\right)}\prod_{m=1}^{M_{\ell}}\theta_{\ell m}^{\mathbb{I}\left(y_{j'\ell}=y\right)}
\end{equation*}
for $\ell=1,\cdots,\ell_{1}$, and
\begin{equation*}
  \begin{aligned}
    \pi\left(y_{j'\ell}\mid\boldsymbol{\Xi}_{-y_{j'\ell}}\right)
    \propto & \prod_{i,j}\exp\left(-\frac{\left(1-z_{ij\ell}\right)}{2\varepsilon_{\ell}}\left(y_{j'\ell}^{2}\mathbb{I}\left(\lambda_{ij}=j'\right)-2y_{j'\ell}\mathbb{I}\left(\lambda_{ij}=j'\right)x_{ij\ell}\right)\right) \\
            & \times \exp\left(-\frac{1}{2\sigma_{\ell}}\left(y_{j'\ell}^{2}-2y_{j'\ell}\eta_{\ell}\right)^{2}\right)
  \end{aligned}
\end{equation*}
for $\ell=\ell_{1}+1,\cdots,p$, which yields
\begin{equation*}
  \begin{aligned}
    \pi\left(y_{j'\ell}\mid\boldsymbol{\Xi}_{-y_{j'\ell}}\right)
    \propto & \exp\left(-\frac{1}{2\varepsilon_{\ell}\sigma_{\ell}}y_{j'\ell}^{2}\left[\varepsilon_{\ell}+\sigma_{\ell}\sum_{i,j}\mathbb{I}\left(\lambda_{ij}=j'\right)\left(1-z_{ij\ell}\right)\right]\right)                                          \\
            & \times \exp\left(-\frac{1}{2\varepsilon_{\ell}\sigma_{\ell}}\left(-2y_{j'\ell}\right)\left[\eta_{\ell}\varepsilon_{\ell}+\sigma_{\ell}\sum_{i,j}\mathbb{I}\left(\lambda_{ij}=j'\right)\left(1-z_{ij\ell}\right)x_{ij\ell}\right]\right).
  \end{aligned}
\end{equation*}
Therefore, for the continuous fields, the full conditional distribution of $y_{j'\ell}$ follows a normal distribution:
\begin{equation*}
  \pi\left(y_{j'\ell}\mid\boldsymbol{\Xi}_{-y_{j'\ell}}\right)
  \propto\text{N}\left(\frac{\eta_{\ell}\varepsilon_{\ell}+\sigma_{\ell}\sum_{i,j}\mathbb{I}\left(\lambda_{ij}=j'\right)\left(1-z_{ij\ell}\right)x_{ij\ell}}{\varepsilon_{\ell}+\sigma_{\ell}\sum_{i,j}\mathbb{I}\left(\lambda_{ij}=j'\right)\left(1-z_{ij\ell}\right)},
  \frac{\varepsilon_{\ell}\sigma_{\ell}}{\varepsilon_{\ell}+\sigma_{\ell}\sum_{i,j}\mathbb{I}\left(\lambda_{ij}=j'\right)\left(1-z_{ij\ell}\right)}\right).
\end{equation*}
Hence,
\begin{equation*}
  y_{j'\ell}\mid\boldsymbol{\Xi}_{-y_{j'\ell}}
  \sim
  \begin{cases}
    \text{MN}\left(\left\{\frac{\gamma'_{j'\ell,m}}{\sum_{r=1}^{M_{\ell}}\gamma'_{j'\ell,r}}\right\}_{m=1}^{M_{\ell}}\right) & \text{ for }\ell=1,\cdots,\ell_{1}     \\
    \text{N}\left(\eta'_{j'\ell},\sigma'_{j'\ell}\right)                                                                     & \text{ for }\ell=\ell_{1}+1,\cdots,p,
  \end{cases}
\end{equation*}
where
\begin{equation*}
  \begin{aligned}
    \gamma'_{j'\ell,m} & = \phi_{\ell}^{\frac{1}{\tau_{\ell}}\sum_{\left\{i,j:\lvert x_{ij\ell}-m\rvert\leq\varepsilon_{\ell}\right\}}\mathbb{I}\left(\lambda_{ij}=j'\right)\left(1-z_{ij\ell}\right)}\theta_{\ell m}                                                       \\
    \eta'_{j'\ell}     & = \frac{\eta_{\ell}\varepsilon_{\ell}+\sigma_{\ell}\sum_{i,j}\mathbb{I}\left(\lambda_{ij}=j'\right)\left(1-z_{ij\ell}\right)x_{ij\ell}}{\varepsilon_{\ell}+\sigma_{\ell}\sum_{i,j}\mathbb{I}\left(\lambda_{ij}=j'\right)\left(1-z_{ij\ell}\right)} \\
    \sigma'_{j'\ell}   & = \frac{\varepsilon_{\ell}\sigma_{\ell}}{\varepsilon_{\ell}+\sigma_{\ell}\sum_{i,j}\mathbb{I}\left(\lambda_{ij}=j'\right)\left(1-z_{ij\ell}\right)}.
  \end{aligned}
\end{equation*}

Finally, as
\begin{equation*}
  \begin{aligned}
    \pi\left(z_{ij\ell}\mid\boldsymbol{\Xi}_{-z_{ij\ell}}\right)
    \propto & \prod_{m=1}^{M_{\ell}}\left[\theta_{\ell m}^{\mathbb{I}\left(x_{ij\ell}=m\right)}\right]^{z_{ij\ell}}
    \times\tilde{\phi}_{\ell}\left(x_{ij\ell},y_{\lambda_{ij}\ell},\tau_{\ell},\varepsilon_{\ell}\right)^{1-z_{ij\ell}}
    \times\beta_{\ell}^{z_{ij\ell}}\left(1-\beta_{\ell}\right)^{1-z_{ij\ell}}                                                                                                                                         \\
            & \times \left[\frac{1}{\sqrt{2\pi\sigma_{\ell}}}\exp\left(-\frac{1}{2\sigma_{\ell}}\left(x_{ij\ell}-\eta_{\ell}\right)^{2}\right)\right]^{z_{ij\ell}}                                                    \\
            & \times \left[\frac{1}{\sqrt{2\pi\varepsilon_{\ell}}}\exp\left(-\frac{1}{2\varepsilon_{\ell}}\left(x_{ij\ell}-\mathbb{I}\left(\lambda_{ij}=j'\right)y_{j'\ell}\right)^{2}\right)\right]^{1-z_{ij\ell}},
  \end{aligned}
\end{equation*}
we obtain
\begin{equation*}
  z_{ij\ell}\mid\boldsymbol{\Xi}_{-z_{ij\ell}}\sim\text{Bernoulli}\left(\rho_{ij\ell}\right),
\end{equation*}
where
\begin{equation*}
  \rho_{ij\ell}=
  \begin{cases}
    \frac{\beta_{\ell}\prod_{m}\theta_{\ell m}^{\mathbb{I}\left(x_{ij\ell}=m\right)}}
    {\beta_{\ell}\prod_{m}\theta_{\ell m}^{\mathbb{I}\left(x_{ij\ell}=m\right)}+\frac{1}{c_{ij\ell}}\left(1-\beta_{\ell}\right)\left(\phi_{\ell}^{\frac{1}{\tau_{\ell}}\mathbb{I}\left(\lvert y_{j'\ell}-x_{ij\ell}\rvert\leq\varepsilon_{\ell}\right)\mathbb{I}\left(\lambda_{ij}=j'\right)}\right)} & \text{ if }\ell=1,\cdots,\ell_{1}    \\
    \frac{\beta_{\ell}\frac{1}{\sqrt{2\pi\sigma_{\ell}}}\exp\left(-\frac{1}{2\sigma_{\ell}}\left(x_{ij\ell}-\eta_{\ell}\right)^{2}\right)}
    {\splitfrac{\beta_{\ell}\frac{1}{\sqrt{2\pi\sigma_{\ell}}}\exp\left(-\frac{1}{2\sigma_{\ell}}\left(x_{ij\ell}-\eta_{\ell}\right)^{2}\right)}
    {+\left(1-\beta_{\ell}\right)\frac{1}{\sqrt{2\pi\varepsilon_{\ell}}}\exp\left(-\frac{1}{2\varepsilon_{\ell}}\left(x_{ij\ell}-\mathbb{I}\left(\lambda_{ij}=j'\right)y_{j'\ell}\right)^{2}\right)}}                                                                                                 & \text{ if }\ell=\ell_{1}+1,\cdots,p
  \end{cases}
\end{equation*}
for
\begin{equation*}
  c_{ij\ell}=\sum_{m=1}^{M_{\ell}}\phi_{\ell}^{\frac{1}{\tau_{\ell}}\mathbb{I}\left(\lvert x_{ij\ell}-m\rvert\leq\varepsilon_{\ell}\right)}.
\end{equation*}

\section{Hybrid MCMC with Split-Merge Algorithm}
\label{app:mcmc-split-merge}

This section explains the hybrid MCMC algorithm that applies a split-merge update for the linkage parameter inside the Gibbs sampler.
As full conditional distributions for all parameters are available if conjugate priors are used as in Appendix~\ref{app:full-conditional}, the explanation uses the CHOMPER model defined in Equation (2) in the main manuscript, assuming $\pi\left(\boldsymbol{\eta}_{\ell}\right)$ is a conjugate prior for $f_{\ell}\left(x_{ij\ell}\mid\boldsymbol{\eta}_{\ell}\right)$ for $\ell=1,\cdots,p$.
With $N_{\text{MCMC}}$ representing the number of MCMC iterations and $N_{\text{Split-Merge}}$ the number of split-merge updates within each MCMC iteration, the algorithm is as follows:

\begin{algorithm}[H]
  \caption{Hybrid MCMC with Split-Merge Algorithm for CHOMPER}
  \label{alg:mcmc}
  
  \begin{algorithmic}[1]
    
    \Procedure{Hybrid MCMC}{$\boldsymbol{x}$}
    \State Initialize $\boldsymbol{\xi}=\left\{\boldsymbol{\Lambda},\boldsymbol{y},\boldsymbol{z},\boldsymbol{\beta},\boldsymbol{\eta}\right\}$.
    \For{$m=1,\cdots,N_{\text{MCMC}}$}
    \For{$s=1,\cdots,N_{\text{Split-Merge}}$}
    \State Randomly select a pair of records, $\boldsymbol{x}_{i_{1}j_{1}}$ and $\boldsymbol{x}_{i_{2}j_{2}}$, from $\boldsymbol{x}$.
    \If{$\lambda_{i_{1}j_{1}}=\lambda_{i_{2}j_{2}}$}\Comment{Split step}
    \State Collect a set $\boldsymbol{\mathcal{C}}$ of $\boldsymbol{x}_{ij}$ such that $\lambda_{ij}=\lambda_{i_{1}j_{1}}=\lambda_{i_{2}j_{2}}$.
    \State Randomly sample $j'$ from $\left\{1,\cdots,N_{\max}\right\}\setminus\boldsymbol{\Lambda}$.
    \State Randomly assign the records in $\boldsymbol{\mathcal{C}}$ to either $j'$ or $\lambda_{ij}$.
    \State Replace $\boldsymbol{y}_{j'}$ and $\boldsymbol{y}_{\lambda_{ij}}$ with one of the assigned records.
    \Else\Comment{Merge step}
    \State Collect a set $\boldsymbol{\mathcal{C}}$ of $\boldsymbol{x}_{ij}$ such that $\lambda_{ij}=\lambda_{i_{1}j_{1}}$ or $\lambda_{ij}=\lambda_{i_{2}j_{2}}$.
    \State Randomly sample $j'$ from $\left\{\lambda_{i_{1}j_{1}},\lambda_{i_{2}j_{2}}\right\}$.
    \State Assign the records in $\boldsymbol{\mathcal{C}}$ to $j'$.
    \State Replace $\boldsymbol{y}_{j'}$ with one of the assigned records.
    \EndIf
    \State Generate $z_{ij\ell}^{\star}$ from the corresponding full conditional distribution.
    \State Accept proposal if $r\leq\rho$ for an acceptance probability $\rho$ and $r\sim\text{Unif}\left(0,1\right)$.
    \EndFor
    \State Sample $\boldsymbol{\beta}$ and $\boldsymbol{\eta}$ from the corresponding full conditional distributions.
    \EndFor
    \EndProcedure
    
  \end{algorithmic}
\end{algorithm}
The acceptance probability $\rho$ for the split-merge steps is defined as:
\begin{equation*}
  \label{eqn:acceptance-probability}
  \rho=
  \begin{cases}
    \left(\frac{1}{2}\right)^{2-n_{\boldsymbol{\mathcal{C}}}}\frac{\pi\left(\boldsymbol{\Lambda}^{\star},\boldsymbol{y}^{\star},\boldsymbol{z}^{\star},\boldsymbol{\beta},\boldsymbol{\eta}\mid\boldsymbol{x}\right)}{\pi\left(\boldsymbol{\Lambda},\boldsymbol{y},\boldsymbol{z},\boldsymbol{\beta},\boldsymbol{\eta}\mid\boldsymbol{x}\right)}, & \text{for split}  \\
    \left(\frac{1}{2}\right)^{n_{\boldsymbol{\mathcal{C}}}-2}\frac{\pi\left(\boldsymbol{\Lambda}^{\star},\boldsymbol{y}^{\star},\boldsymbol{z}^{\star},\boldsymbol{\beta},\boldsymbol{\eta}\mid\boldsymbol{x}\right)}{\pi\left(\boldsymbol{\Lambda},\boldsymbol{y},\boldsymbol{z},\boldsymbol{\beta},\boldsymbol{\eta}\mid\boldsymbol{x}\right)}, & \text{for merge,}
  \end{cases}
\end{equation*}
where $n_{\boldsymbol{\mathcal{C}}}$ is the number of records in $\boldsymbol{\mathcal{C}}$ and $\boldsymbol{\Lambda}^{\star},\boldsymbol{z}^{\star}$, and $\boldsymbol{y}^{\star}$ are proposal samples from a split-merge update.

\section{Variational Inference}
\label{app:vi}

The proposed EVIL algorithm uses Coordinate Ascent Variational Inference (CAVI) to approximate the posterior distribution of the model for each member of a generation.
The variational factors are obtained by calculating the right-hand side of the following equation \citep{bishop2006,blei2017}:
\begin{equation*}
  q_{j'}^{\star}\left(\xi_{j'}\right)\propto\exp\left(\mathbb{E}_{-j'}\left[\log\pi\left(\xi_{j'}\mid\boldsymbol{\Xi}_{-j'},\boldsymbol{x}\right)\right]\right)
\end{equation*}
for
\begin{equation*}
  \mathbb{E}_{-j'}\left[\pi\left(\xi_{j'}\mid\boldsymbol{\Xi}_{-j'},\boldsymbol{x}\right)\right]
  =\int\cdots\int\log\pi\left(\xi_{j'}\mid\boldsymbol{\Xi}_{-j'},\boldsymbol{x}\right)\prod_{i\neq j'}q_{i}\left(\xi_{i}\right)d\xi_{i}.
\end{equation*}

\subsection{Variational Factors}
\label{app:vi-factors}

Note that the full conditional distributions have explicit distributional forms.
So, the expectation of the log of the full conditional distribution can be calculated analytically.
We obtain the variational factors as follows:
\begin{equation*}
  \begin{aligned}
    q_{\boldsymbol{\theta}_{\ell}}\left(\boldsymbol{\theta}_{\ell}\right)
     & \sim\text{Dirichlet}\left(\left\{\alpha_{\ell m}\right\}_{m=1}^{M_{\ell}}\right)                                                                                                                                                                                                                      \\
    q_{\eta_{\ell}}\left(\eta_{\ell}\right)
     & \sim\text{N}\left(\frac{\sum_{j'=1}^{N_{\max}}\mathbb{E}_{-\eta_{\ell}}\left[y_{j'\ell}\right]+\sum_{i,j}x_{ij\ell}\mathbb{E}_{-\eta_{\ell}}\left[z_{ij\ell}\right]}{N_{\max}+\sum_{i,j}\mathbb{E}_{-\eta_{\ell}}\left[z_{ij\ell}\right]},
    \frac{\mathbb{E}_{-\eta_{\ell}}^{-1}\left[\frac{1}{\sigma_{\ell}}\right]}{N_{\max}+\sum_{i,j}\mathbb{E}_{-\eta_{\ell}}\left[z_{ij\ell}\right]}\right)                                                                                                                                                    \\
    q_{\sigma_{\ell}}\left(\sigma_{\ell}\right)
     & \sim\text{Inverse-Gamma}\left(a^{\sigma}_{\ell}+\frac{1}{2}\left[N_{\max}+\sum_{i,j}\mathbb{E}_{-\sigma_{\ell}}\left[z_{ij\ell}\right]\right],\tilde{b}^{\sigma}_{\ell}\right)                                                                                                                        \\
    q_{\beta_{\ell}}\left(\beta_{\ell}\right)
     & \sim\text{Beta}\left(a_{\ell}+\sum_{i,j}\mathbb{E}_{-\beta_{\ell}}\left[z_{ij\ell}\right],b_{\ell}+\sum_{i,j}\sum_{i,j}\left(1-\mathbb{E}_{-\beta_{\ell}}\left[z_{ij\ell}\right]\right)\right)                                                                                                        \\
    q_{\lambda_{ij}}\left(\lambda_{ij}\right)
     & \sim\text{MN}\left(1,\left\{\frac{\nu_{ij,j'}}{\sum_{u=1}^{N_{\max}}\nu_{ij,u}}\right\}_{j'=1}^{N_{\max}}\right)                                                                                                                                                                                      \\
    q_{y_j'\ell}\left(y_{j'\ell}\right)
     & \sim
    \begin{cases}
      \text{MN}\left(1,\left\{\frac{\gamma_{j'\ell,m}}{\sum_{r=1}^{M_{\ell}}\gamma_{j'\ell,r}}\right\}_{m=1}^{M_{\ell}}\right) & \text{for }\ell=1,\cdots,\ell_{1}    \\
      \text{N}\left(\tilde{\eta}_{j'\ell},\tilde{\sigma}_{j'\ell}\right)                                                       & \text{for }\ell=\ell_{1}+1,\cdots,p
    \end{cases} \\
    q_{z_{ij\ell}}\left(z_{ij\ell}\right)
     & \sim\text{Bernoulli}\left(\rho_{ij\ell}\right),
  \end{aligned}
\end{equation*}
where
\begin{equation*}
  \alpha_{\ell m}=\mu_{\ell m}+\sum_{j'=1}^{N_{\max}}\mathbb{E}_{-\boldsymbol{\theta}_{\ell}}\left[\mathbb{I}\left(y_{j'\ell}=m\right)\right]+\sum_{i=1}^{k}\sum_{j=1}^{n_{i}}\mathbb{E}_{-\boldsymbol{\theta}_{\ell}}\left[z_{ij\ell}\right]\mathbb{I}\left(x_{ij\ell}=m\right)
\end{equation*}
\begin{equation*}
  \begin{aligned}
    \tilde{b}^{\sigma}_{\ell}=b^{\sigma}_{\ell}+\frac{1}{2}\Bigg[ & \sum_{j'}\left(\mathbb{E}_{-\sigma_{\ell}}\left[y_{j'\ell}^{2}\right]-2\mathbb{E}_{-\sigma_{\ell}}\left[y_{j'\ell}\right]\mathbb{E}_{-\sigma_{\ell}}\left[\eta_{\ell}\right]+\mathbb{E}_{-\sigma_{\ell}}\left[\eta_{\ell}^{2}\right]\right) \\
                                                                  & +\sum_{i,j}\left(\mathbb{E}_{-\sigma_{\ell}}\left[z_{ij\ell}\right]\left\{x_{ij\ell}^{2}-2x_{ij\ell}\mathbb{E}_{-\sigma_{\ell}}\left[\eta_{\ell}\right]+\mathbb{E}_{-\sigma_{\ell}}\left[\eta_{\ell}^{2}\right]\right\}\right)\Bigg]
  \end{aligned}
\end{equation*}
\begin{equation*}
  \begin{aligned}
    \nu_{ij,j'}=\exp\Bigg[ & \sum_{\ell=1}^{\ell_{1}}\mathbb{E}_{-\lambda_{ij}}\left[1-z_{ij\ell}\right]\mathbb{E}_{-\lambda_{ij}}\left[\mathbb{I}\left(\lvert y_{j'\ell}-x_{ij\ell}\rvert\leq\varepsilon_{\ell}\right)\right]\log\phi_{\ell}^{\frac{1}{\tau_{\ell}}}                           \\
                           & -\sum_{\ell=\ell_{1}+1}^{p}\frac{1}{2\varepsilon_{\ell}}\mathbb{E}_{-\lambda_{ij}}\left[1-z_{ij\ell}\right]\left\{x_{ij\ell}^{2}-2x_{ij\ell}\mathbb{E}_{-\lambda_{ij}}\left[y_{j'\ell}\right]+\mathbb{E}_{-\lambda_{ij}}\left[y_{j'\ell}^{2}\right]\right\}\Bigg]
  \end{aligned}
\end{equation*}
\begin{equation*}
  \gamma_{j'\ell,m}=\phi_{\ell}^{\frac{1}{\tau_{\ell}}\sum_{\left\{i,j:\lvert x_{ij\ell}-m\rvert\leq\varepsilon_{\ell}\right\}}\mathbb{E}_{-y_{j'\ell}}\left[\mathbb{I}\left(\lambda_{ij}=j'\right)\right]\mathbb{E}_{-y_{j'\ell}}\left[1-z_{ij\ell}\right]}\exp\left(\mathbb{E}_{-y_{j'\ell}}\left[\log\theta_{\ell m}\right]\right)
\end{equation*}
\begin{equation*}
  \tilde{\eta}_{j'\ell}=
  \frac{\mathbb{E}_{-y_{j'\ell}}\left[\eta_{\ell}\right]\mathbb{E}_{-y_{j'\ell}}\left[\frac{1}{\sigma_{\ell}}\right]+\frac{1}{\varepsilon_{\ell}}\sum_{i,j}x_{ij\ell}\mathbb{E}_{-y_{j'\ell}}\left[\mathbb{I}\left(\lambda_{ij}=j'\right)\right]\mathbb{E}_{-y_{j'\ell}}\left[1-z_{ij\ell}\right]}
  {\mathbb{E}_{-y_{j'\ell}}\left[\frac{1}{\sigma}_{\ell}\right]+\frac{1}{\varepsilon_{\ell}}\sum_{i,j}\mathbb{E}_{-y_{j'\ell}}\left[\mathbb{I}\left(\lambda_{ij}=j'\right)\right]\mathbb{E}_{-y_{j'\ell}}\left[1-z_{ij\ell}\right]}
\end{equation*}
\begin{equation*}
  \tilde{\sigma}_{j'\ell}=\left(\mathbb{E}_{-y_{j'\ell}}\left[\frac{1}{\sigma}_{\ell}\right]+\frac{1}{\varepsilon_{\ell}}\sum_{i,j}\mathbb{E}_{-y_{j'\ell}}\left[\mathbb{I}\left(\lambda_{ij}=j'\right)\right]\mathbb{E}_{-y_{j'\ell}}\left[1-z_{ij\ell}\right]\right)^{-1}
\end{equation*}
and
\begin{equation*}
  \rho_{ij\ell}=\frac{\rho_{ij\ell ,1}}{\rho_{ij\ell ,1}+\rho_{ij\ell ,2}}
\end{equation*}
for
\begin{equation*}
  \rho_{ij\ell ,1}
  =\begin{cases}
    \exp\left(\mathbb{E}_{-z_{ij\ell}}\left[\log\beta_{\ell}\right]+\sum_{m=1}^{M_{\ell}}\mathbb{I}\left(x_{ij\ell}=m\right)\mathbb{E}_{-z_{ij\ell}}\left[\log\theta_{\ell m}\right]\right) \quad\text{if }\ell=1,\cdots,\ell_{1} \\
    \begin{aligned}
      \exp & \Bigg(\mathbb{E}_{-z_{ij\ell}}\left[\log\beta_{\ell}\right]-\frac{1}{2}\log2\pi-\frac{1}{2}\mathbb{E}_{-z_{ij\ell}}\left[\log\sigma_{\ell}\right]                                                                            \\
           & \quad-\frac{1}{2}\mathbb{E}_{-z_{ij\ell}}\left[\frac{1}{\sigma_{\ell}}\right]\Big(x_{ij\ell}^{2}-2x_{ij\ell}\mathbb{E}_{-z_{ij\ell}}\left[\eta_{\ell}\right]+\mathbb{E}_{-z_{ij\ell}}\left[\eta_{\ell}^{2}\right]\Big)\Bigg) \\
           & \qquad\qquad\qquad\qquad\qquad\qquad\qquad\qquad\qquad\qquad\qquad\qquad\text{if }\ell=\ell_{1}+1,\cdots,p
    \end{aligned}
  \end{cases}
\end{equation*}
and
\begin{equation*}
  \rho_{ij\ell ,2}
  =\begin{cases}
    \begin{aligned}
      \exp & \Bigg(\mathbb{E}_{-z_{ij\ell}}\left[\log\left(1-\beta_{\ell}\right)\right]                                                                                                                                                                                         \\
           & \quad+\mathbb{E}_{-z_{ij\ell}}\left[\mathbb{I}\left(\lvert y_{j'\ell}-x_{ij\ell}\rvert\leq\varepsilon_{\ell}\right)\right]\mathbb{E}_{-z_{ij\ell}}\left[\mathbb{I}\left(\lambda_{ij}=j'\right)\right]\log\phi_{\ell}^{\frac{1}{\tau_{\ell}}}-\log c_{ij\ell}\Bigg) \\
           & \qquad\qquad\qquad\qquad\qquad\qquad\qquad\qquad\qquad\qquad\qquad\qquad\text{if }\ell=1,\cdots,\ell_{1}
    \end{aligned} \\
    \begin{aligned}
      \exp & \Bigg(\mathbb{E}_{-z_{ij\ell}}\left[\log\left(1-\beta_{\ell}\right)\right]-\frac{1}{2}\log2\pi\varepsilon_{\ell}                                                                             \\
           & \quad-\frac{1}{2\varepsilon_{\ell}}\Big(x_{ij\ell}^{2}-2x_{ij\ell}\mathbb{E}_{-z_{ij\ell}}\left[\mathbb{I}\left(\lambda_{ij}=j'\right)\right]\mathbb{E}_{-z_{ij\ell}}\left[y_{j'\ell}\right] \\
           & \qquad\qquad\qquad\qquad\qquad\qquad\quad+\mathbb{E}_{-z_{ij\ell}}\left[\mathbb{I}\left(\lambda_{ij}=j'\right)\right]\mathbb{E}_{-z_{ij\ell}}\left[y_{j'\ell}^{2}\right]\Big)\Bigg)          \\
           & \qquad\qquad\qquad\qquad\qquad\qquad\qquad\qquad\qquad\qquad\qquad\qquad\text{if }\ell=\ell_{1}+1,\cdots,p.
    \end{aligned}
  \end{cases}
\end{equation*}

\subsection{Evidence Lower Bound}
\label{app:vi-evidence}

The Evidence Lower Bound (ELBO) is defined as the difference between the expected value of the log of the current variational factor and the expectation of the log of the joint distribution of the parameters of interest, $\boldsymbol{\xi}$.
Defining $q^{\left(t\right)}\left(\boldsymbol{\xi}\right)$ as the variational factor at $t$th iteration, we can derive the ELBO as:
\begin{equation*}
  \label{eqn:elbo}
  \begin{aligned}
    \text{ELBO}\left(q^{\left(t\right)}\right)
    = & \mathbb{E}_{q^{\left(t\right)}}\left[\log\pi\left(\boldsymbol{\xi},\boldsymbol{x}\right)\right]-\mathbb{E}_{q^{\left(t\right)}}\left[\log q^{\left(t\right)}\left(\boldsymbol{\xi}\right)\right]                                         \\
    \propto
      & \sum_{i,j,\ell}\mathbb{E}_{q^{\left(t\right)}}\left[\log\pi\left(x_{ij\ell}\mid\boldsymbol{\xi}\right)+\log\pi\left(z_{ij\ell}\mid\boldsymbol{\beta}\right)-\log q_{z_{ij\ell}}^{\left(t\right)}\left(z_{ij\ell}\right)\right]           \\
      & +\sum_{j',\ell}\mathbb{E}_{q^{\left(t\right)}}\left[\log\pi\left(y_{j'\ell}\mid\boldsymbol{\theta},\boldsymbol{\eta},\boldsymbol{\sigma}\right)-\log q_{y_{j'\ell}}^{\left(t\right)}\left(y_{j'\ell}\right)\right]                       \\
      & +\sum_{\ell}\mathbb{E}_{q^{\left(t\right)}}\left[\log\pi\left(\beta_{\ell}\right)-\log q_{\beta_{\ell}}^{\left(t\right)}\left(\beta_{\ell}\right)\right]                                                                                 \\
      & +\sum_{\ell=1}^{\ell_{1}}\sum_{m=1}^{M_{\ell}}\mathbb{E}_{q^{\left(t\right)}}\left[\log\pi\left(\theta_{\ell m}\right)-\log q_{\theta_{\ell m}}^{\left(t\right)}\left(\theta_{\ell m}\right)\right]                                      \\
      & +\sum_{\ell=\ell_{1}+1}^{p}\mathbb{E}_{q^{\left(t\right)}}\left[\log\pi\left(\sigma_{\ell}\right)-\log q_{\sigma_{\ell}}^{\left(t\right)}\left(\sigma_{\ell}\right)-\log q_{\eta_{\ell}}^{\left(t\right)}\left(\eta_{\ell}\right)\right] \\
      & -\sum_{i,j}\mathbb{E}_{q^{\left(t\right)}}\left[\log q_{\lambda_{ij}}^{\left(t\right)}\left(\lambda_{ij}\right)\right].
  \end{aligned}
\end{equation*}

\subsection{Initialization Strategy and Update Process of CAVI}
\label{app:vi-init}

As we have mentioned in Section 3 and Section 3.2 in the main manuscript, the initialization of the variational factors and the update procedure of CAVI are crucial for optimization.
We start by setting initial values for $\boldsymbol{\Lambda}$.

First, we identify the identically matching records.
Then, we take a sample of them with probability $p_{\text{Init}}$ and assign indices $j'$ to the corresponding $\lambda_{ij}$.
If we were to use a single CAVI for posterior estimation, we could set the initial values to the matching records; however, we sample those records to introduce randomness for the members in the evolutionary algorithm.
Next, we initialize $\boldsymbol{\nu}_{ij}=\left(\nu_{ij,1},\cdots,\nu_{ij,N_{\max}}\right)$ by placing a high probability on the $j'$th element for $\lambda_{ij}=j'$.
Next, $\gamma_{j'\ell}$, $\tilde{\eta}_{j'\ell}$, and $\tilde{\sigma}_{j'\ell}$ are initialized.
$\gamma_{j'\ell}$ is initialized similar to the $\boldsymbol{\nu}_{ij}$'s, giving a high probability to the assigned $m$th element for $y_{j'\ell}=m$.
Additionally, $\tilde{\eta}_{j'\ell}$ and $\tilde{\sigma}_{j'\ell}$ are initialized with the mean of allocated records and the squared range of the records, respectively.

We complete the initialization by setting the initial values of $\rho_{ij\ell}$ based on the mean of the prior distribution of $\beta_{\ell}$.
Let $\beta_{\ell}^{\star}$ be the distortion rate derived from the researchers' domain knowledge.
Then, following the suggestion of \cite{marchant2021}, we set the initial $\rho_{ij\ell}$ as $\mathbb{E}\left[\beta_{\ell}\right]$ for
\begin{equation*}
  \beta_{\ell}\sim\text{Beta}\left(N_{\max}\times0.1\times\beta_{\ell}^{\star},N_{\max}\times0.1\right).
\end{equation*}

With those initial values, the variational factors are updated in the order of $\boldsymbol{\theta}_{\ell}$, $\eta_{\ell}$, $\sigma_{\ell}$, $\beta_{\ell}$, $\lambda_{ij}$, $y_{j'\ell}$, and $z_{ij\ell}$.
Finally, the update of the variational factors is repeated until it satisfies the convergence criterion.

The detailed CAVI update process is
\begin{enumerate}
  \item Update $\boldsymbol{\theta}_{\ell}$:
        \begin{equation*}
          q_{\boldsymbol{\theta}_{\ell}}^{\left(t+1\right)}\left(\boldsymbol{\theta}_{\ell}\right)
          \sim\text{Dirichlet}\left(\left\{\alpha_{\ell m}^{\left(t+1\right)}\right\}_{m=1}^{M_{\ell}}\right),
        \end{equation*}
        where
        \begin{equation*}
          \alpha_{\ell m}^{\left(t+1\right)}=\mu_{\ell m}+\sum_{j'=1}^{N_{\max}}\frac{\gamma_{j'\ell,m}^{\left(t\right)}}{\sum_{r=1}^{M_{\ell}}\gamma_{j'\ell,r}^{\left(t\right)}}+\sum_{i,j}\mathbb{I}\left(x_{ij\ell}=m\right)\rho_{ij\ell}^{\left(t\right)}.
        \end{equation*}
  \item Update $\eta_{\ell}$:
        \begin{equation*}
          q_{\eta_{\ell}}^{\left(t+1\right)}\left(\eta_{\ell}\right)
          \sim\text{N}\left(\frac{\sum_{j'}\tilde{\eta}_{j'\ell}^{\left(t\right)}+\sum_{i,j}x_{ij\ell}\rho_{ij\ell}^{\left(t\right)}}{N_{\max}+\sum_{i,j}\rho_{ij\ell}^{\left(t\right)}},
          \frac{\tilde{b}_{\ell}^{\sigma,\left(t\right)}}{\left(N_{\max}+\sum_{i,j}\rho_{ij\ell}^{\left(t\right)}\right)\left(a_{\ell}^{\sigma}+\frac{1}{2}\left[N_{\max}+\sum_{i,j}\rho_{ij\ell}^{\left(t\right)}\right]\right)}\right).
        \end{equation*}
  \item Update $\sigma_{\ell}$:
        \begin{equation*}
          q_{\sigma_{\ell}}^{\left(t+1\right)}\left(\sigma_{\ell}\right)
          \sim\text{Inverse-Gamma}\left(a^{\sigma}_{\ell}+\frac{1}{2}\left[N_{\max}+\sum_{i,j}\rho_{ij\ell}^{\left(t\right)}\right],
          \tilde{b}^{\sigma,\left(t+1\right)}_{\ell}\right),
        \end{equation*}
        where
        \begin{equation*}
          \tilde{b}^{\sigma,\left(t+1\right)}_{\ell}
          =\tilde{b}^{\sigma}_{\ell}
          +\frac{1}{2}\Bigg[
            \sum_{j'}\left(\tilde{\sigma}_{j'\ell}^{\left(t\right)}+\left(\tilde{\eta}_{j'\ell}^{\left(t\right)}\right)^{2}-2\tilde{\eta}_{j'\ell}^{\left(t\right)}\zeta_{\ell 1}^{\left(t\right)}+\zeta_{\ell 2}^{\left(t\right)}\right)
            +\sum_{i,j}\left(\rho_{ij\ell}^{\left(t\right)}\left\{x_{ij\ell}^{2}-2x_{ij\ell}\zeta_{\ell 1}^{\left(t\right)}+\zeta_{\ell 2}^{\left(t\right)}\right\}\right)\Bigg]
        \end{equation*}
        for
        \begin{equation*}
          \begin{aligned}
            \zeta_{\ell 1}^{\left(t\right)}
             & = \frac{\sum_{j'}\tilde{\eta}_{j'\ell}^{\left(t\right)}+\sum_{i,j}x_{ij\ell}\rho_{ij\ell}^{\left(t\right)}}{N_{\max}+\sum_{i,j}\rho_{ij\ell}^{\left(t\right)}}\text{ and }                                                                                                                                                                                                                                \\
            \zeta_{\ell 2}^{\left(t\right)}
             & = \left(\frac{\sum_{j'}\tilde{\eta}_{j'\ell}^{\left(t\right)}+\sum_{i,j}x_{ij\ell}\rho_{ij\ell}^{\left(t\right)}}{N_{\max}+\sum_{i,j}\rho_{ij\ell}^{\left(t\right)}}\right)^{2}+\frac{\tilde{b}_{\ell}^{\sigma,\left(t\right)}}{\left(N_{\max}+\sum_{i,j}\rho_{ij\ell}^{\left(t\right)}\right)\left(a_{\ell}^{\sigma}+\frac{1}{2}\left[N_{\max}+\sum_{i,j}\rho_{ij\ell}^{\left(t\right)}\right]\right)}.
          \end{aligned}
        \end{equation*}
  \item Update $\beta_{\ell}$:
        \begin{equation*}
          q_{\beta_{\ell}}^{\left(t+1\right)}\left(\beta_{\ell}\right)
          \sim\text{Beta}\left(a_{\ell}+\sum_{i,j}\rho_{ij\ell}^{\left(t\right)},b_{\ell}+\sum_{i,j}\left(1-\rho_{ij\ell}^{\left(t\right)}\right)\right).
        \end{equation*}
  \item Update $\lambda_{ij}$:
        \begin{equation*}
          q_{\lambda_{ij}}^{\left(t+1\right)}\left(\lambda_{ij}\right)
          \sim\text{MN}\left(1,\left\{\frac{\nu_{ij,j'}^{\left(t+1\right)}}{\sum_{u=1}^{N_{\max}}\nu_{ij,u}^{\left(t+1\right)}}\right\}_{j'=1}^{N_{\max}}\right),
        \end{equation*}
        where
        \begin{equation*}
          \begin{aligned}
            \nu_{ij,j'}^{\left(t+1\right)}=
            \exp\Bigg[ & \sum_{\ell=1}^{\ell_{1}}\left(1-\rho_{ij\ell}^{\left(t\right)}\right)\left(\frac{\sum_{\left\{r':\lvert x_{ij\ell}-r'\rvert\leq\varepsilon_{\ell}\right\}}\gamma_{j'\ell,r'}^{\left(t\right)}}{\sum_{r=1}^{M_{\ell}}\gamma_{j'\ell,r}^{\left(t\right)}}\right)\log\phi_{\ell}^{\frac{1}{\tau_{\ell}}} \\
                       & -\sum_{\ell=\ell_{1}+1}^{p}\frac{1}{2\varepsilon_{\ell}}\left(1-\rho_{ij\ell}^{\left(t\right)}\right)\left(x_{ij\ell}^{2}-2x_{ij\ell}\tilde{\eta}_{j'\ell}^{\left(t\right)}+\tilde{\sigma}_{j'\ell}^{\left(t\right)}+\left(\tilde{\eta}_{j'\ell}^{\left(t\right)}\right)^{2}\right) \Bigg].
          \end{aligned}
        \end{equation*}
  \item Update $y_{j'\ell}$:
        \begin{equation*}
          q_{y_{j'\ell}}^{\left(t+1\right)}\left(y_{j'\ell}\right)
          \sim\begin{cases}
            \text{MN}\left(1,\left\{\frac{\gamma_{j'\ell,m}^{\left(t+1\right)}}{\sum_{r=1}^{M_{\ell}}\gamma_{j'\ell,r}^{\left(t+1\right)}}\right\}_{m=1}^{M_{\ell}}\right) & \text{for }\ell=1,\cdots,\ell_{1}     \\
            \text{N}\left(\tilde{\eta}_{j'\ell}^{\left(t+1\right)},\tilde{\sigma}_{j'\ell}^{\left(t+1\right)}\right)                                                       & \text{for }\ell=\ell_{1}+1,\cdots,p,
          \end{cases}
        \end{equation*}
        where
        \begin{equation*}
          \gamma_{j'\ell,m}^{\left(t+1\right)}
          =\phi_{\ell}^{\frac{1}{\tau_{\ell}}\sum_{\left\{i,j:\lvert x_{ij\ell}-m\rvert\leq\varepsilon_{\ell}\right\}}\frac{\nu_{ij,j'}^{\left(t+1\right)}}{\sum_{u=1}^{N_{\max}}\nu_{ij,u}^{\left(t+1\right)}}\left(1-\rho_{ij\ell}^{\left(t\right)}\right)}\exp\left[\psi\left(\alpha_{\ell m}^{\left(t+1\right)}\right)-\psi\left(\sum_{r=1}^{M_{\ell}}\alpha_{\ell r}^{\left(t+1\right)}\right)\right]
        \end{equation*}
        and
        \begin{equation*}
          \begin{aligned}
            \tilde{\sigma}_{j'\ell}^{\left(t+1\right)}
            = & \left(\frac{1}{\tilde{b}_{\ell}^{\sigma,\left(t+1\right)}}\left(a_{\ell}^{\sigma}+\frac{1}{2}\left[N_{\max}+\sum_{i,j}\rho_{ij\ell}^{\left(t\right)}\right]\right)+\frac{1}{\varepsilon_{\ell}}\left[\sum_{i,j}\frac{\nu_{ij,j'}^{\left(t+1\right)}}{\sum_{u=1}^{N_{\max}}\nu_{ij,u}^{\left(t+1\right)}}\left(1-\rho_{ij\ell}^{\left(t\right)}\right)\right]\right)^{-1}                \\
            \tilde{\eta}_{j'\ell}^{\left(t+1\right)}
            = & \sigma_{j'\ell}^{\left(t+1\right)}\left[\frac{\left(\sum_{j'}\tilde{\eta}_{j'\ell}^{\left(t\right)}+\sum_{i,j}x_{ij\ell}\rho_{ij\ell}^{\left(t\right)}\right)\left(a_{\ell}^{\sigma}+\frac{1}{2}\left[N_{\max}+\sum_{i,j}\rho_{ij\ell}^{\left(t\right)}\right]\right)}{\left(N_{\max}+\sum_{i,j}\rho_{ij\ell}^{\left(t\right)}\right)\tilde{b}_{\ell}^{\sigma,\left(t+1\right)}}\right] \\
              & +\sigma_{j'\ell}^{\left(t+1\right)}\left[\frac{1}{\varepsilon_{\ell}}\left[\sum_{i,j}x_{ij\ell}\frac{\nu_{ij,j'}^{\left(t+1\right)}}{\sum_{u=1}^{N_{\max}}\nu_{ij,u}^{\left(t+1\right)}}\left(1-\rho_{ij\ell}^{\left(t\right)}\right)\right]\right],
          \end{aligned}
        \end{equation*}
        with the digamma function $\psi$.
  \item Update $z_{ij\ell}$:
        \begin{equation*}
          q_{\rho_{ij\ell}}^{\left(t+1\right)}\left(\rho_{ij\ell}\right)\sim\text{Bernoulli}\left(\rho_{ij\ell}^{\left(t+1\right)}\right) \text{ for }\rho_{ij\ell}^{\left(t+1\right)}=\frac{\rho_{ij\ell ,1}^{\left(t+1\right)}}{\rho_{ij\ell ,1}^{\left(t+1\right)}+\rho_{ij\ell ,2}^{\left(t+1\right)}},
        \end{equation*}
        where
        \begin{equation*}
          \begin{aligned}
            \rho_{ij\ell ,1}^{\left(t+1\right)}
             & =\begin{cases}
                  \begin{aligned}
                \exp\Bigg[ & \psi\left(a_{\ell}+\sum_{i,j}\rho_{ij\ell}^{\left(t\right)}\right)-\psi\left(a_{\ell}+b_{\ell}+N_{\max}\right)                                                                                             \\
                           & +\sum_{m=1}^{M_{\ell}}\mathbb{I}\left(x_{ij\ell}=m\right)\left\{\psi\left(\alpha_{\ell m}^{\left(t+1\right)}\right)-\psi\left(\sum_{r=1}^{M_{\ell}}\alpha_{\ell r}^{\left(t+1\right)}\right)\right\}\Bigg] \\
                           & \qquad\qquad\qquad\qquad\qquad\qquad\qquad\qquad\qquad\qquad\text{if }\ell=1,\cdots,\ell_{1}
              \end{aligned} \\
                  \begin{aligned}
                \exp\Bigg[ & \psi\left(a_{\ell}+\sum_{i,j}\rho_{ij\ell}^{\left(t\right)}\right)-\psi\left(a_{\ell}+b_{\ell}+N_{\max}\right)                                                                                                                                                   \\
                           & -\frac{1}{2}\log2\pi-\frac{1}{2}\left(\log\tilde{b}^{\sigma,\left(t+1\right)}_{\ell}-\psi\left(a^{\sigma}_{\ell}+\frac{1}{2}\left[N_{\max}+\sum_{i,j}\rho_{ij\ell}^{\left(t\right)}\right]\right)\right)                                                         \\
                           & -\frac{a^{\sigma}_{\ell}+\frac{1}{2}\left[N_{\max}+\sum_{i,j}\rho_{ij\ell}^{\left(t\right)}\right]}{2\tilde{b}_{\ell}^{\sigma,\left(t+1\right)}}\left(x_{ij\ell}^{2}-2x_{ij\ell}\zeta_{\ell 1}^{\left(t+1\right)}+\zeta_{\ell 2}^{\left(t+1\right)}\right)\Bigg] \\
                           & \qquad\qquad\qquad\qquad\qquad\qquad\qquad\qquad\qquad\qquad\text{if }\ell=\ell_{1},\cdots,p
              \end{aligned}
                \end{cases}
          \end{aligned}
        \end{equation*}
        and
        \begin{equation*}
          \begin{aligned}
            \rho_{ij\ell ,2}^{\left(t+1\right)}
             & =\begin{cases}
                  \begin{aligned}
                \exp\Bigg[ & \psi\left(b_{\ell}+\sum_{i,j}\left(1-\rho_{ij\ell}^{\left(t\right)}\right)\right)-\psi\left(a_{\ell}+b_{\ell}+N_{\max}\right)                                                                                                                                                                                                                                 \\
                           & +\log\phi_{\ell}^{\frac{1}{\tau_{\ell}}}\sum_{j'}\left[\frac{\sum_{\left\{r':\lvert x_{ij\ell}-r'\rvert\leq\varepsilon_{\ell}\right\}}\gamma_{j'\ell,r'}^{\left(t+1\right)}}{\sum_{r=1}^{M_{\ell}}\gamma_{j'\ell,r}^{\left(t+1\right)}}\frac{\nu_{ij,j'}^{\left(t+1\right)}}{\sum_{u=1}^{N_{\max}}\nu_{ij,u}^{\left(t+1\right)}}\right]-\log c_{ij\ell}\Bigg] \\
                           & \qquad\qquad\qquad\qquad\qquad\qquad\qquad\qquad\qquad\qquad\text{if }\ell=1,\cdots,\ell_{1}
              \end{aligned} \\
                  \begin{aligned}
                \exp\Bigg[ & \psi\left(b_{\ell}+\sum_{i,j}\left(1-\rho_{ij\ell}^{\left(t\right)}\right)\right)-\psi\left(a_{\ell}+b_{\ell}+N_{\max}\right)                                                                                                                                               \\
                           & -\frac{1}{2}\log2\pi\varepsilon_{\ell}-\frac{1}{2\varepsilon_{\ell}}\Big(x_{ij\ell}^{2}-2x_{ij\ell}\sum_{j'}\left[\frac{\nu_{ij,j'}^{\left(t+1\right)}}{\sum_{u=1}^{N_{\max}}\nu_{ij,u}^{\left(t+1\right)}}\tilde{\eta}_{j'\ell}^{\left(t+1\right)}\right]                  \\
                           & \qquad\qquad\qquad\qquad\quad+\sum_{j'}\left[\frac{\nu_{ij,j'}^{\left(t+1\right)}}{\sum_{u=1}^{N_{\max}}\nu_{ij,u}^{\left(t+1\right)}}\left\{\tilde{\sigma}_{j'\ell}^{\left(t+1\right)}+\left(\tilde{\eta}_{j'\ell}^{\left(t+1\right)}\right)^{2}\right\}\right]\Big)\Bigg] \\
                           & \qquad\qquad\qquad\qquad\qquad\qquad\qquad\qquad\qquad\qquad\text{if }\ell=\ell_{1},\cdots,p
              \end{aligned}
                \end{cases}
          \end{aligned}
        \end{equation*}
        with
        \begin{equation*}
          \begin{aligned}
            \zeta_{\ell 1}^{\left(t+1\right)}
            = & \frac{\sum_{j'}\tilde{\eta}_{j'\ell}^{\left(t+1\right)}+\sum_{i,j}x_{ij\ell}\rho_{ij\ell}^{\left(t\right)}}{N_{\max}+\sum_{i,j}\rho_{ij\ell}^{\left(t\right)}}                                                                \\
            \zeta_{\ell 2}^{\left(t+1\right)}
            = & \left(\frac{\sum_{j'}\tilde{\eta}_{j'\ell}^{\left(t+1\right)}+\sum_{i,j}x_{ij\ell}\rho_{ij\ell}^{\left(t\right)}}{N_{\max}+\sum_{i,j}\rho_{ij\ell}^{\left(t\right)}}\right)^{2}                                               \\
              & + \frac{\tilde{b}_{\ell}^{\sigma,\left(t+1\right)}}{\left(N_{\max}+\sum_{i,j}\rho_{ij\ell}^{\left(t\right)}\right)\left(a_{\ell}^{\sigma}+\frac{1}{2}\left[N_{\max}+\sum_{i,j}\rho_{ij\ell}^{\left(t\right)}\right]\right)}.
          \end{aligned}
        \end{equation*}
\end{enumerate}

\subsection{Evolutionary Variational Inference for Record Linkage}
\label{app:vi-evil}

This section revisits the Evolutionary Variational Inference for Record Linkage (EVIL) algorithm described in Section 3.2 in the main manuscript and provides a summarized algorithm to aid future implementations and extensions.

Evolutionary algorithms solve optimization problems by mimicking biological evolution, searching for optimal solutions through selection, crossover, mutation, and iterative optimization \citep{whitley2001,givens2012}.
Before the main optimization process starts, researchers must generate the initial population and set the hyperparameters.
The initial generation consists of $N_{P}$ members with different initial values, where the initialization strategy described in Appendix~\ref{app:vi-init} is used.
If we denote the initial values of member $p$ as $\boldsymbol{\xi}_{p}^{\left(0\right)}=\{\boldsymbol{\Lambda}^{\left(0\right)},\boldsymbol{y}^{\left(0\right)},\boldsymbol{z}^{\left(0\right)},\boldsymbol{\beta}^{\left(0\right)},\boldsymbol{\eta}^{\left(0\right)}\}_{p}$, the first generation can be represented as a set $\boldsymbol{\Xi}^{\left(0\right)}=\{\boldsymbol{\xi}_{p}^{\left(0\right)}\}_{p=1}^{N_{P}}$.
Additionally, the number of offspring generated during the evolutionary process, $1\leq N_{O}\leq\binom{N_{P}}{2}$, the maximum number of generations $N_{E}$, and the tolerance level $\epsilon_{\text{tol}}$ serving as the criterion for terminating the evolution must be set.

EVIL begins with $\boldsymbol{\Xi}^{\left(0\right)}$ and finds the optimal approximate posterior distribution of $\boldsymbol{\Lambda}$ through the iteration of the following process.
For a summary, see Algorithm~\ref{alg:evil}.
\begin{enumerate}
  \item \textbf{Update $\boldsymbol{\Xi}^{\left(0\right)}$ via CAVI to create $\boldsymbol{\Xi}^{\left(1\right)}$:}
        Each member $\boldsymbol{\xi}_{p}^{\left(0\right)}$ is updated to find approximate posterior distributions using CAVI.
        The algorithm stops if the relative change in the ELBO is less than the prespecified tolerance, $\epsilon_{\text{tol}}$, such that
        \begin{equation*}
          \left|
          \frac
          {\text{ELBO}\left(q^{\left(t\right)}\right)-\text{ELBO}\left(q^{\left(t-1\right)}\right)}
          {\text{ELBO}\left(q^{\left(t-1\right)}\right)}\right|<\epsilon_{\text{tol}},
        \end{equation*}
        or the number of iterations reaches the maximum number of iterations for CAVI.
        The updated population is denoted as $\boldsymbol{\Xi}^{\left(1\right)}$.
  \item \textbf{Single-point crossover with $\boldsymbol{\Xi}^{\left(1\right)}$ to create $N_{O}$ offspring:}
        From $\boldsymbol{\Xi}^{\left(1\right)}$, randomly select $N_{O}$ pairs of distinct members.
        Implement a single-point crossover on the set of $N_{O}$ pairs.
        That is, suppose that $\boldsymbol{\xi}_{1}^{\left(1\right)}$ and $\boldsymbol{\xi}_{2}^{\left(1\right)}$ belong to the selected pair.
        Randomly choose an index $\left(i',j'\right)$ and swap the linkage indices corresponding to the records after $\boldsymbol{x}_{i'j'}$.
        Denote the set of offspring as $\boldsymbol{\Xi}_{C}^{\left(1\right)}=\{\boldsymbol{\xi}_{p}^{C}\}_{p=1}^{N_{O}}$.
  \item \textbf{Reconcile broken linkage structure of $N_{O}$ offspring:}
        As the linkage indices are swapped in the crossover step, the consistency between the linkage structure and variational factors may be broken.
        Apply a reconciliation process to correct this by recalculating the variational parameters based on the new linkage structure.
        For instance, assuming $\lambda_{1}=\lambda_{2}=1$ and $\lambda_{3}=\lambda_{4}=2$ as the linkage structure before crossover, $\boldsymbol{x}_{1}$ and $\boldsymbol{x}_{2}$ are linked through $\boldsymbol{y}_{1}$, and $\boldsymbol{x}_{3}$ and $\boldsymbol{x}_{4}$ are linked through $\boldsymbol{y}_{2}$.
        If the value of $\lambda_{1}$ is switched to $2$, then $\boldsymbol{x}_{1}$, $\boldsymbol{x}_{3}$, and $\boldsymbol{x}_{4}$ are now linked through $\boldsymbol{y}_{2}$, and $\boldsymbol{x}_{2}$ considers $\boldsymbol{y}_{1}$ as a true latent record, but no other linked records exist.
        Accordingly, we need to update $\boldsymbol{y}_{1}$ and $\boldsymbol{y}_{2}$ based on the new linkage structure and modify corresponding distortion indicators $\boldsymbol{z}_{1},\cdots,\boldsymbol{z}_{4}$.
        Specifically, variational parameters corresponding to $\boldsymbol{y}_{1}$, $\boldsymbol{y}_{2}$, and $\boldsymbol{z}_{1},\cdots,\boldsymbol{z}_{4}$ are updated based on $\lambda_{2}=1$ and $\lambda_{1}=\lambda_{3}=\lambda_{4}=2$.
        Moreover, the remaining variational parameters $\boldsymbol{\beta}$ and $\boldsymbol{\eta}$ also need to be updated based on the new linkage structure.
        Denote the set of reconciled offspring as $\boldsymbol{\Xi}_{R}^{\left(1\right)}=\{\boldsymbol{\xi}_{p}^{R}\}_{p=1}^{N_{O}}$.
  \item \textbf{Update $N_{O}$ offspring via CAVI to create $\boldsymbol{\Xi}^{\left(2\right)}$:}
        Each member $\boldsymbol{\xi}_{p}^{R}$ is updated to estimate the approximate posterior distributions using CAVI.
        The algorithm stops if it meets the same condition as in the first step.
        The updated population is denoted as $\boldsymbol{\Xi}^{\left(2\right)}$.
  \item \textbf{Select $N_{P}$ members from $\boldsymbol{\Xi}^{\left(1\right)}\cup\boldsymbol{\Xi}^{\left(2\right)}$ based on the ELBO:}
        Select the top $N_{P}$ best-performing members from $\boldsymbol{\Xi}^{\left(2\right)}$ based on the ELBO.
        Namely, create an ordered set $S=\{\tilde{\boldsymbol{\xi}}_{\left(1\right)},\cdots,\tilde{\boldsymbol{\xi}}_{\left(N_{P}+N_{O}\right)}\}$ from $\boldsymbol{\Xi}^{\left(1\right)}\cup\boldsymbol{\Xi}^{\left(2\right)}=\{\tilde{\boldsymbol{\xi}}_{p}\}_{p=1}^{N_{P}+N_{O}}$, where
        \begin{equation*}
          \text{ELBO}\left(\tilde{\boldsymbol{\xi}}_{\left(1\right)}\right)\geq\text{ELBO}\left(\tilde{\boldsymbol{\xi}}_{\left(2\right)}\right)\geq\cdots\geq\text{ELBO}\left(\tilde{\boldsymbol{\xi}}_{\left(N_{P}+N_{O}\right)}\right).
        \end{equation*}
        Then, select the first $N_{P}$ members as the new parents such that $\tilde{\boldsymbol{\Xi}}=\{\tilde{\boldsymbol{\xi}}_{\left(p\right)}\}_{p=1}^{N_{P}}$.
        Set $\text{ELBO}^{\left(t\right)}=\text{ELBO}\left(\tilde{\boldsymbol{\xi}}_{\left(1\right)}\right)$ for the $t$th generation.
        If the relative difference between $\text{ELBO}^{\left(t\right)}$ and $\text{ELBO}^{\left(t-1\right)}$ is less than $\epsilon_{\text{tol}}$ or $\text{ELBO}^{\left(t\right)}\leq\text{ELBO}^{\left(t-1\right)}$ or the number of generations reaches the maximum number of generations, $N_{E}$, terminate the optimization.
        Note that the second condition can occur because of the following mutation step.
        If the optimization is terminated, $\tilde{\boldsymbol{\xi}}_{\left(1\right)}$ is considered as the optimal solution.
        Otherwise, go to the next step.
  \item \textbf{Mutate $\tilde{\boldsymbol{\Xi}}$ and reconcile broken linkage structure:}
        Mutate $\tilde{\boldsymbol{\Xi}}$ through a split-merge process \citep{jain2004}.
        Specifically, randomly select two records $\boldsymbol{x}_{i_{1}j_{1}}$ and $\boldsymbol{x}_{i_{2}j_{2}}$ from a member $\tilde{\boldsymbol{\xi}}_{\left(p\right)}$ in $\tilde{\boldsymbol{\Xi}}$.
        If they are linked, i.e., $\lambda_{i_{1}j_{1}}=\lambda_{i_{2}j_{2}}$, split both records by assigning either $\lambda_{i_{1}j_{1}}$ or $\lambda_{i_{2}j_{2}}$ to an index that is not used yet.
        Otherwise, merge them by making $\lambda_{i_{1}j_{1}}=\lambda_{i_{2}j_{2}}$.
        Then, the consistency between the linkage structure and variational factors is broken similar to the crossover step.
        Apply the reconciliation process in the third step to correct this.
        Replace the parameters after the mutation and reconciliation with $\boldsymbol{\Xi}^{\left(0\right)}$ and go to the next generation.
\end{enumerate}

\begin{algorithm}[H]
  \caption{\textbf{E}volutionary Algorithm of \textbf{V}ariational \textbf{I}nference for Record \textbf{L}inkage (EVIL)}
  \label{alg:evil}
  
  \begin{algorithmic}[1]
    
    \Procedure{EVIL}{$\boldsymbol{x}$}
    \State Initialize $\boldsymbol{\Xi}^{\left(0\right)}$ with $N_{P}$ sets of initial values.
    \State $t\leftarrow1$
    \While{$\text{Relative change of ELBO}>\epsilon_{\text{tol}}$}  \Comment{$\epsilon_{\text{tol}}=\text{convergence criterion}$}
    
    \State Update $\boldsymbol{\Xi}^{\left(0\right)}\rightarrow\boldsymbol{\Xi}^{\left(1\right)}$ via CAVI.
    \State Crossover $\boldsymbol{\Xi}^{\left(1\right)}\rightarrow\boldsymbol{\Xi}_{C}^{\left(1\right)}$. \Comment{Crossover $N_{P}$ parents to generate $N_{O}$ offspring}
    \State Reconcile broken linkage structure of members in $\boldsymbol{\Xi}_{C}^{\left(1\right)}\rightarrow\boldsymbol{\Xi}_{R}^{\left(1\right)}$.
    \State Update $\boldsymbol{\Xi}_{R}^{\left(1\right)}\rightarrow\boldsymbol{\Xi}^{\left(2\right)}$ via CAVI.
    \State Select $N_{P}$ members from $\boldsymbol{\Xi}^{\left(1\right)}\cup\boldsymbol{\Xi}^{\left(2\right)}$ based on ELBO$\rightarrow\tilde{\boldsymbol{\Xi}}$.
    \State Set $\text{ELBO}^{\left(t\right)}\leftarrow\max_{\text{ELBO}}\left(\tilde{\boldsymbol{\Xi}}\right)$.
    
    \If{$\text{Relative change from }\text{ELBO}^{\left(t-1\right)}\text{ to }\text{ELBO}^{\left(t\right)}\leq\epsilon_{\text{tol}}$}
    \State Break the loop.
    \EndIf
    
    \State Mutate and reconcile $\tilde{\boldsymbol{\Xi}}$.
    \State Replace $\tilde{\boldsymbol{\Xi}}\rightarrow\boldsymbol{\Xi}^{\left(0\right)}$.
    \State $t\leftarrow t+1$
    \EndWhile  
    \EndProcedure
    
  \end{algorithmic}
\end{algorithm}

\section{Details for the Application Study}
\label{app:ishiw}

In this section, we present the detailed specifications of the model used to fit the Italian Survey on Household Income and Wealth (ISHIW) data \citep{bankofitaly2025}, convergence diagnostics, and model results.
Additionally, we compare the elapsed times of CHOMPER-MCMC and CHOMPER-EVIL to demonstrate the scalability of EVIL.
Table~\ref{tab:ishiw-variables} presents the variables used for the linkage structure estimation.

\begin{table}[H]
  \centering\caption{Variables used for the linkage structure estimation in the ISHIW data.}
  \label{tab:ishiw-variables}
  \begin{tabular}{clc}
    \hline
    Name      & \multicolumn{1}{c}{Description}      & Type        \\ \hline
    SEX       & Sex                                  & Categorical \\
    ANASC     & Year of birth                        & Categorical \\
    CIT       & Nationality (whether Italian or not) & Categorical \\
    NASREG    & Place of birth                       & Categorical \\
    STUDIO    & Education level                      & Categorical \\
    NETINCOME & Net annual income                    & Continuous  \\
    VALABIT   & Estimated price of residence         & Continuous  \\ \hline
  \end{tabular}
\end{table}

\subsection{Prior and Hyperparameters}
\label{app:ishiw-prior}

In this section, we describe the prior and hyperparameter specifications assumed for the CHOMPER model defined in Equation~\eqref{eqn:chomper-mdl-appendix} in various settings.
The Conservative approach uses only 4 categorical variables, so $p=\ell_{1}=4$.
Since the Conservative approach assumes that all variables have strong, immutable signals, we set the hitting range to $\varepsilon_{\ell}=0$ and $\tau_{\ell}=10^{-2}$.
The Naive and Comprehensive approaches use an additional categorical field with multiple truths and 2 continuous fields with multiple truths, so $p=7$ and $\ell_{1}=5$.
We set $\phi_{\ell}=2$ for all approaches.
We defined the fifth field as education level, and the sixth and seventh fields as net annual income and estimated price of residence, respectively.
Both continuous fields were standardized to have a mean of 0 and a variance of 1 before analysis.

The Naive approach does not consider locally-varying hits, so we set $\varepsilon_{5}=0$, $\tau_{5}=10^{-2}$, and $\varepsilon_{6}=\varepsilon_{7}=10^{-4}$.
However, the Comprehensive approach accounts for multiple truths in the attributes.
As education level is not expected to change rapidly within two years, we set $\varepsilon_{5}=1$ and $\tau_{5}=10^{-2}$ to include adjacent categories of the education level within the hitting range.
For continuous fields such as net annual income and estimated price of residence, defining the threshold for what changes are considered ``noise'' is challenging.
In particular, it is not possible to measure the local variability of a variable, even when the true linkage structure is known.
Therefore, we recommend following the opinion of domain experts in practice and selecting a hitting range that errs toward a wider range, as CHOMPER demonstrates robustness with a slightly broader hitting range and occasionally outperforms models with the true range (See Appendix~\ref{app:simulation-results}).

We selected $\varepsilon_{6}=\varepsilon_{7}=0.1$ which is interpreted as 10\% of the total variability for each attribute in the sample data.
Based on ISHIW data from 2020, the standard deviations of net annual income and estimated price of residence are \euro{53,117} and \euro{323,180}, respectively.
Therefore, the choice of $\varepsilon_{6}=\varepsilon_{7}=0.1$ implies that the model with the Comprehensive approach captures local variability of $\pm$\euro{16,797} and $\pm$\euro{102,198} for each observation in 2022, respectively.

\begin{table}[!ht]
  \centering\caption{Hitting range parameters according to the approaches for fitting the ISHIW data. Conservative and Naive approaches use infinitesimal hitting range, while Comprehensive approach sets wide hitting range for multiple truths.}
  \label{tab:ishiw-hitting-range}
  \renewcommand{\arraystretch}{0.8}
  \begin{tabular}{cccc}
    \hline
    \multirow{2}{*}{Approach} & \multirow{2}{*}{Single Truth}                 & \multicolumn{2}{c}{Local Variability}                                        \\ \cline{3-4}
                              &                                               & Multinomial                                   & Gaussian                     \\ \hline
    Conservative              & $\varepsilon_{\ell}=0$, $\tau_{\ell}=10^{-2}$ & -                                             & -                            \\
    Naive                     & $\varepsilon_{\ell}=0$, $\tau_{\ell}=10^{-2}$ & $\varepsilon_{\ell}=0$, $\tau_{\ell}=10^{-2}$ & $\varepsilon_{\ell}=10^{-4}$ \\
    Comprehensive             & $\varepsilon_{\ell}=0$, $\tau_{\ell}=10^{-2}$ & $\varepsilon_{\ell}=1$, $\tau_{\ell}=10^{-2}$ & $\varepsilon_{\ell}=10^{-1}$ \\ \hline
  \end{tabular}
\end{table}

We used the parameterization method proposed in \cite{marchant2021} as described in Section 4.1 of the main manuscript to set the prior distortion rate for each attribute to 1\%.
We set the hyperparameter for the probability of each level in categorical fields to a length-$M_{\ell}$ one vector, $\boldsymbol{\mu}_{\ell}=\boldsymbol{1}_{\ell}$, and the hyperparameters for the variance of continuous fields were set to 0.01 for both shape and scale parameters.
In other words, the following prior distributions were assumed:
\begin{equation*}
  \begin{aligned}
    \beta_{\ell}               & \sim\text{Beta}\left(N_{\max}^{r}\times0.1\times0.01,N_{\max}^{r}\times0.1\right)        \\
    \boldsymbol{\theta}_{\ell} & \sim\text{Dirichlet}\left(\boldsymbol{1}_{\ell}\right)\text{ for }\ell=1,\cdots,\ell_{1} \\
    \sigma_{\ell}              & \sim\text{Inverse-Gamma}\left(0.01,0.01\right)\text{ for }\ell=\ell_{1}+1,\cdots,p,
  \end{aligned}
\end{equation*}
where $N_{\max}^{r}$ is the number of records in region $r$.

Next, we describe the initialization strategy for CHOMPER-MCMC.
As there is no prior information about the linkage structure, $\lambda_{ij}$ are assigned indices from 1 to $N_{\max}$ without duplication, assuming that there is no linkage in the data, and $\boldsymbol{y}_{j'}$ are initialized to $\boldsymbol{x}_{ij}$ corresponding to each $\lambda_{ij}$.
Consequently, all initial $z_{ij\ell}$ are set to 0 because observed records and initial true records are the same according to the initialized $\boldsymbol{\Lambda}$.
$\boldsymbol{\theta}_{\ell}$ is initialized as a vector consisting of $M_{\ell}$ elements, each valued with $\frac{1}{M_{\ell}}$, and $\eta_{\ell}$ and $\sigma_{\ell}$ are initialized as the mean and variance of $\boldsymbol{y}_{\cdot\ell}$, respectively.
Finally, $\beta_{\ell}$ is initialized as the mean of the beta distribution, $\frac{0.01}{1.01}\approx0.01$.

Finally, we specify the hyperparameters for CHOMPER-EVIL's evolutionary process and the number of samples for CHOMPER-MCMC.
When fitting the ISHIW data with living region as a blocking variable using CHOMPER-EVIL, we set the number of parents to $N_{P}=50$, the number of offspring to $N_{O}=100$, and the maximum number of evolutions to $N_{E}=50$.
Additionally, evolution was set to stop when the relative change in the ELBO between generations was less than $10^{-5}$.
For CHOMPER-MCMC, we performed 1,000 split-merge updates per Gibbs iteration and used the last 1,000 converged samples for inference, based on the convergence diagnostics in Appendix~\ref{app:ishiw-convergence}.
However, when fitting the entire ISHIW dataset together, we set $N_{P}=3$, $N_{O}=4$, and $N_{E}=50$, with the same stopping condition applied.

\subsection{Convergence Diagnostics}
\label{app:ishiw-convergence}

This section provides convergence diagnostics for CHOMPER-MCMC.
The convergence diagnostics were performed using trace plots of the number of unique $\lambda_{ij}$ in each MCMC sample, which represents the number of unique entities in the data \citep{aleshin-guendel2024}.
Figure~\ref{fig:ishiw-regional-convergence} shows the trace plots of the number of unique $\lambda_{ij}$ in each region for each approach.
We determined the number of burn-in samples visually from Figure~\ref{fig:ishiw-regional-convergence}, and 1,000 MCMC samples after burn-in were used for inference.
The number of burn-in samples for each region is provided in Table~\ref{tab:ishiw-burn-in-samples}.

\begin{figure}[!ht]
  \begin{center}
    \includegraphics[width=0.9\linewidth]{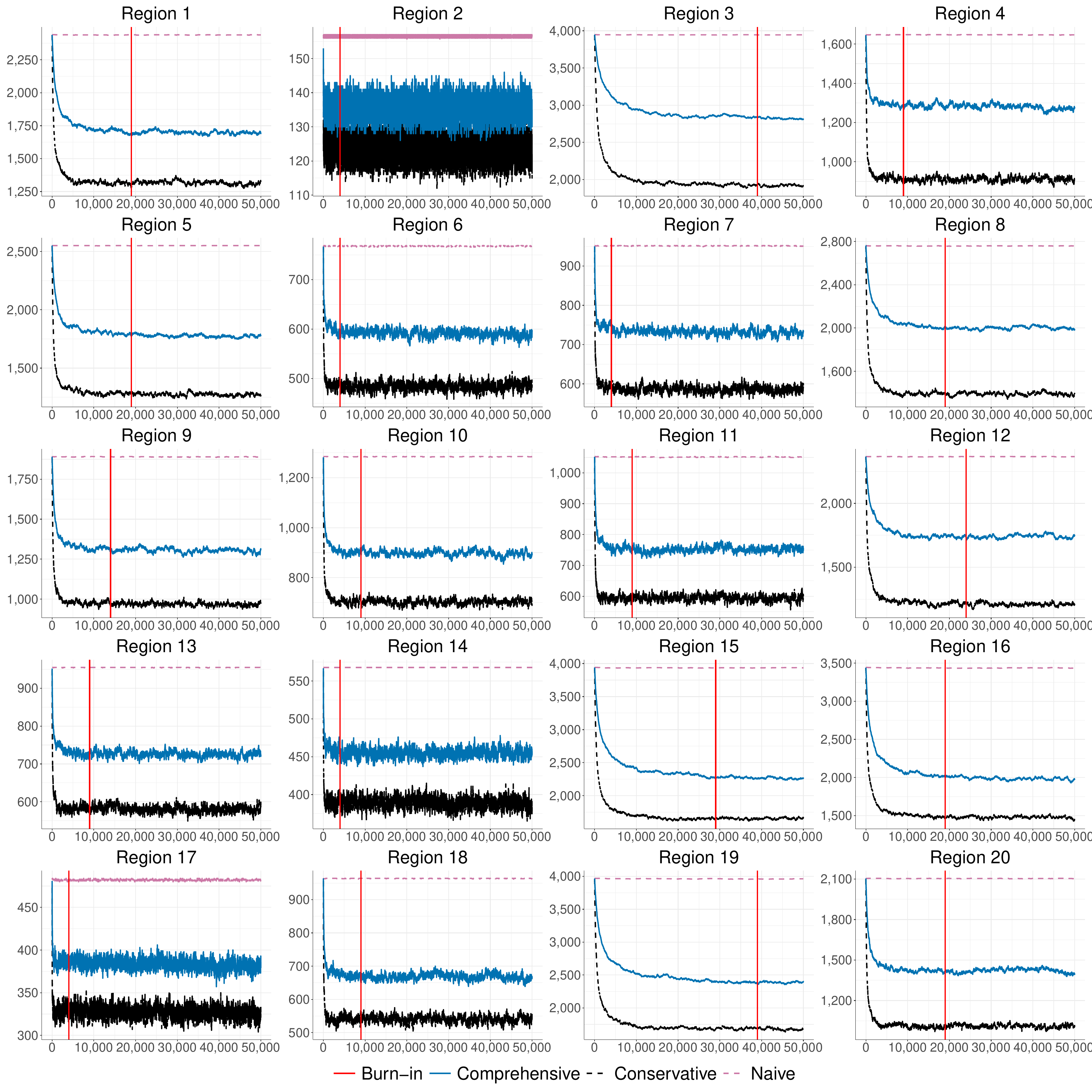}
  \end{center}
  \caption{Trace plots of the number of unique $\lambda_{ij}$ in each region for each approach. Different numbers of burn-in samples were chosen based on the trace plot for each region.}
  \label{fig:ishiw-regional-convergence}
\end{figure}

\begin{table}[!ht]
  \centering\caption{Number of burn-in samples for each region. 1,000 MCMC samples after burn-in were used for inference.}
  \label{tab:ishiw-burn-in-samples}
  \renewcommand{\arraystretch}{0.8}
  \begin{tabular}{clllll}
    \hline
    Region      & \multicolumn{1}{c}{1}  & \multicolumn{1}{c}{2}  & \multicolumn{1}{c}{3}  & \multicolumn{1}{c}{4}  & \multicolumn{1}{c}{5}  \\ \hline
    \#(Burn-in) & 19,000                 & 4,000                  & 39,000                 & 9,000                  & 19,000                 \\ \hline
    Region      & \multicolumn{1}{c}{6}  & \multicolumn{1}{c}{7}  & \multicolumn{1}{c}{8}  & \multicolumn{1}{c}{9}  & \multicolumn{1}{c}{10} \\ \hline
    \#(Burn-in) & 4,000                  & 4,000                  & 19,000                 & 14,000                 & 9,000                  \\ \hline
    Region      & \multicolumn{1}{c}{11} & \multicolumn{1}{c}{12} & \multicolumn{1}{c}{13} & \multicolumn{1}{c}{14} & \multicolumn{1}{c}{15} \\ \hline
    \#(Burn-in) & 9,000                  & 24,000                 & 9,000                  & 4,000                  & 29,000                 \\ \hline
    Region      & \multicolumn{1}{c}{16} & \multicolumn{1}{c}{17} & \multicolumn{1}{c}{18} & \multicolumn{1}{c}{19} & \multicolumn{1}{c}{20} \\ \hline
    \#(Burn-in) & 19,000                 & 4,000                  & 9,000                  & 39,000                 & 19,000                 \\ \hline
  \end{tabular}
\end{table}

\subsection{Supplementary Results}
\label{app:ishiw-results}

The results of the estimated linkage structure using CHOMPER-MCMC and CHOMPER-EVIL are summarized in Table 2 in Section 4.2 in the main manuscript.
In Section 4.2, we showed that the Comprehensive approach generally performs better than alternative approaches.
Using the linkage structure estimate, we can not only determine which entity each record belongs to but also estimate the number of unique entities in the data.
We calculated the number of unique $\lambda_{ij}$ for each iteration of the 1,000 MCMC samples used for inference.
We present an estimated density plot using these 1,000 values (Figure~\ref{fig:ishiw-n-entities-supplementary}).
Except for regions 2 and 6, the estimated density from the Comprehensive approach is closer to the true number of unique entities than that from the Conservative approach.
As shown in the linkage structure estimate results, the Naive approach fails to find links and therefore cannot accurately estimate the number of unique entities.

\begin{figure}[!ht]
  \begin{center}
    \includegraphics[width=0.9\linewidth]{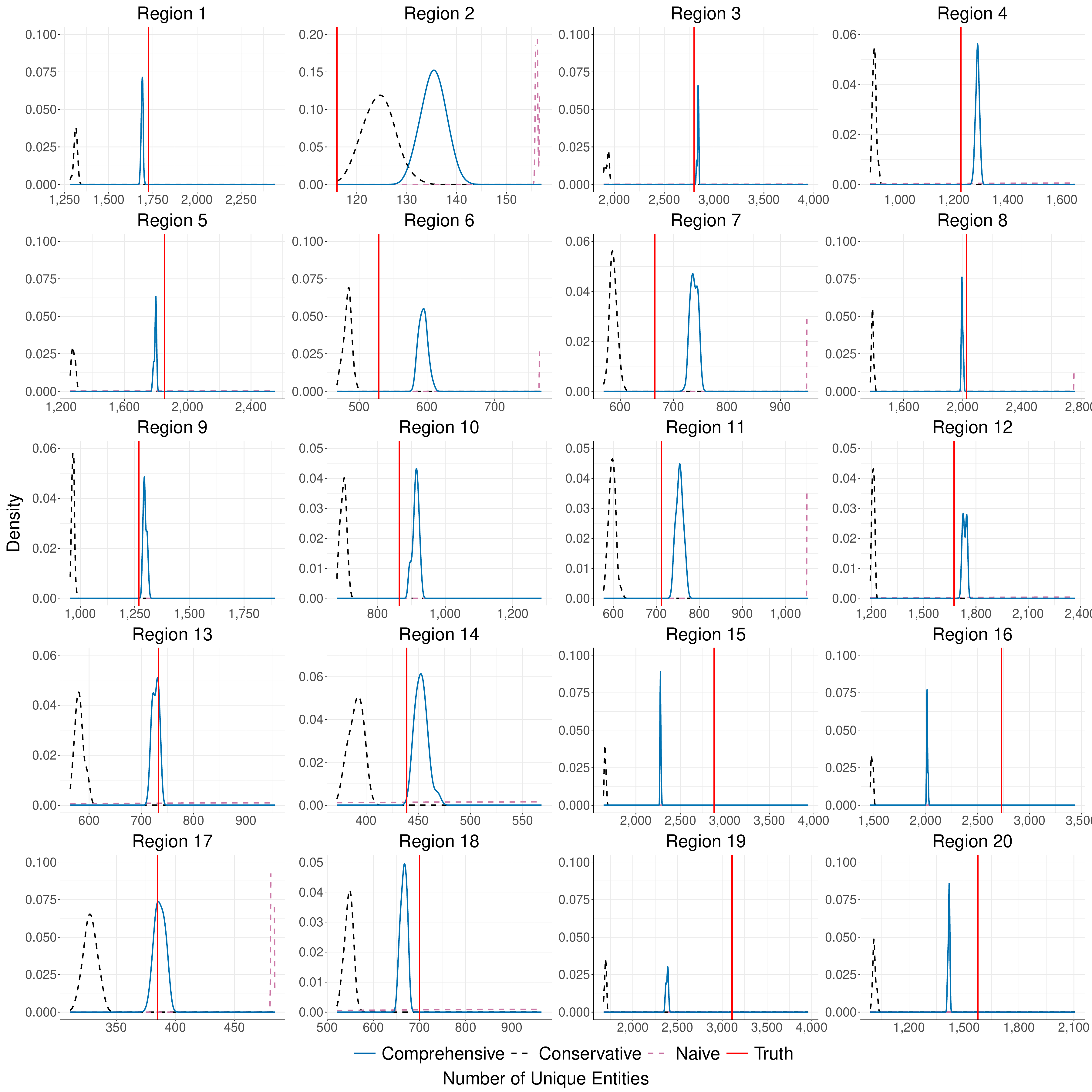}
  \end{center}
  \caption{Estimated densities for the number of unique entities across all regions. The red vertical line represents the true number of unique entities, and the solid blue, dashed black, and dashed purple lines represent the estimated density using MCMC samples from the Comprehensive, Conservative, and Naive approaches, respectively.}
  \label{fig:ishiw-n-entities-supplementary}
\end{figure}

Section 4 of the main manuscript compares the performance of 3 different approaches.
Among these, the Naive approach estimates the linkage structure without considering the potential multiple truths of the variables.
However, in practice, researchers may misinterpret these multiple truths as a distortion.
Therefore, we additionally examined the results obtained from CHOMPER-MCMC, assuming that the variables (education level, net annual income, and estimated price of residence) were highly distorted.
We assumed that 10\% of the total records for these variables contained distortion such that
\begin{equation*}
  \beta_{\ell}=\text{Beta}\left(N_{\max}^{r}\times0.1\times0.1,N_{\max}^{r}\times0.1\right)
\end{equation*}
Table~\ref{tab:ishiw-naive-distorted} presents comparisons between the results from the Naive approach and those from the assumption of highly distorted variables.
Through these comparisons, we confirmed that even when assuming a higher distortion rate for variables, the local variability of the variables cannot be adequately captured without using locally-varying hits.

\begin{table}[H]
  \centering\caption{Results of the linkage structure estimation of regional ISHIW data using CHOMPER-MCMC with the Naive approach across different distortion rates. 1\% Distortion column represents the results from the model with 1\% distortion rate, while 10\% Distortion column represents the results from the model with 10\% distortion rate for the multiple truths variables. The performances are measured with F\textsubscript{1}-score (F\textsubscript{1}), false negative rate (FNR), and false discovery rate (FDR).}
  \label{tab:ishiw-naive-distorted}
  \renewcommand{\arraystretch}{0.8}
  \begin{tabular}{cccccccccccccc}
    \hline
    \multirow{2}{*}{\small Region} & \multicolumn{3}{c}{1\% Distortion} & \multicolumn{3}{c}{10\% Distortion} & \multirow{2}{*}{\small Region} & \multicolumn{3}{c}{1\% Distortion} & \multicolumn{3}{c}{10\% Distortion}                                                                                   \\ \cline{2-7} \cline{9-14}
                                   & F\textsubscript{1}                 & FNR                                 & FDR                            & F\textsubscript{1}                 & FNR                                 & FDR  &    & F\textsubscript{1} & FNR  & FDR  & F\textsubscript{1} & FNR  & FDR  \\ \hline
    1                              & 0.00                               & 1.00                                & 0.00                           & 0.00                               & 1.00                                & 0.50 & 11 & 0.01               & 1.00 & 0.00 & 0.00               & 1.00 & -    \\
    2                              & 0.00                               & 1.00                                & -                              & 0.00                               & 1.00                                & -    & 12 & 0.00               & 1.00 & -    & 0.00               & 1.00 & 1.00 \\
    3                              & 0.01                               & 1.00                                & 0.25                           & 0.00                               & 1.00                                & 0.33 & 13 & 0.00               & 1.00 & -    & 0.00               & 1.00 & -    \\
    4                              & 0.00                               & 1.00                                & -                              & 0.00                               & 1.00                                & 1.00 & 14 & 0.00               & 1.00 & -    & 0.00               & 1.00 & -    \\
    5                              & 0.00                               & 1.00                                & 1.00                           & 0.00                               & 1.00                                & 0.00 & 15 & 0.00               & 1.00 & 0.80 & 0.00               & 1.00 & 1.00 \\
    6                              & 0.01                               & 1.00                                & 0.00                           & 0.01                               & 1.00                                & 0.00 & 16 & 0.01               & 1.00 & 0.78 & 0.00               & 1.00 & 0.83 \\
    7                              & 0.01                               & 1.00                                & 0.00                           & 0.00                               & 1.00                                & -    & 17 & 0.04               & 0.98 & 0.33 & 0.04               & 0.98 & 0.00 \\
    8                              & 0.01                               & 0.99                                & 0.33                           & 0.00                               & 1.00                                & 1.00 & 18 & 0.00               & 1.00 & -    & 0.00               & 1.00 & -    \\
    9                              & 0.00                               & 1.00                                & 0.50                           & 0.01                               & 1.00                                & 0.00 & 19 & 0.00               & 1.00 & 1.00 & 0.00               & 1.00 & 0.90 \\
    10                             & 0.00                               & 1.00                                & 1.00                           & 0.00                               & 1.00                                & -    & 20 & 0.00               & 1.00 & -    & 0.00               & 1.00 & 1.00 \\ \hline
  \end{tabular}
\end{table}

\subsection{Computation Time Comparison between MCMC and EVIL}
\label{app:ishiw-time}

This section compares the run times of CHOMPER-MCMC and CHOMPER-EVIL.
Both models were run on AMD Milan EPYC 7543 32-Core Processors with a 2.8 GHz base clock.
For CHOMPER-MCMC, we measured the time required to generate the region-specific burn-in samples presented in Table~\ref{tab:ishiw-burn-in-samples} and 1,000 additional MCMC samples for the inference.
The elapsed time for CHOMPER-EVIL decreases as more cores are used because parallelization is possible.
The dashed red line represents the elapsed time of CHOMPER-MCMC.
Unlike CHOMPER-EVIL, the elapsed time is not affected by the number of cores.

\begin{figure}[!ht]
  \begin{center}
    \includegraphics[width=0.9\linewidth]{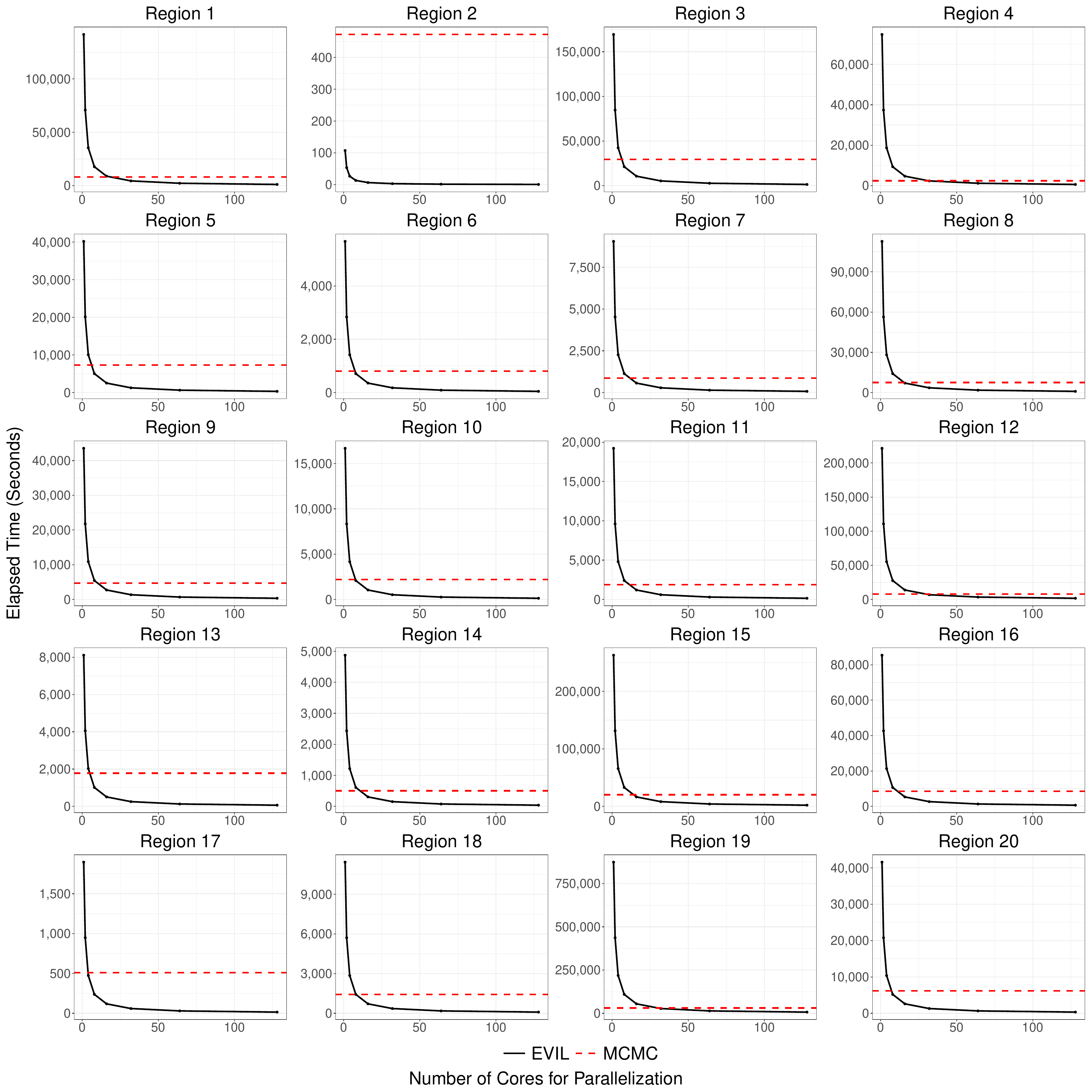}
  \end{center}
  \caption{Elapsed time of running CHOMPER-MCMC and CHOMPER-EVIL for all regions.}
  \label{fig:ishiw-timing-all-regions}
\end{figure}

\section{Details for the Simulation Study}
\label{app:simulation}

In this section, we present the detailed specifications of the models used in the simulation study in Section 5 of the main manuscript, the convergence diagnostics, and the results for all overlap ratios: high, medium, and low.
In addition, we conduct a sensitivity analysis of the hitting range when the attributes follow Gaussian distributions and provide fitted results from the SMERED (Split and Merge Record Linkage and De-duplication; \cite{steorts2016}) model, which uses the traditional hit-or-miss framework, for comparison.

\subsection{Data Generation Process}
\label{app:simulation-data-generation}

We generated 30 sets of synthetic data for each overlap ratio (high: 70\%, medium: 50\%, and low: 30\%) according to the model defined in Equation~\eqref{eqn:chomper-mdl-appendix}.
The records were stored in $k=2$ files, each originating from $N=750$ unique entities.
Each record consisted of 5 multinomial variables with 8 levels, each with equal probability, and 2 Gaussian variables.
We defined the continuous variables to follow Gaussian distributions with means of 0 and 10 and a variance of 10.
However, we normalized the data to have a mean of 0 and a variance of 1 before analysis.
Among these, 2 categorical and 2 continuous variables have multiple truths.
The local variability of the categorical variables was set by randomly sampling 1-adjacent categories to the true value $y_{\lambda_{ij}\ell}$ with equal probability.
The local variability for the continuous variables was applied through random sampling from $N\left(y_{\lambda_{ij}\ell}, 0.5^{2}\right)$.
In addition, we set the distortion rate $\beta_{\ell}$ for each variable to 1\%.

\subsection{Prior and Hyperparameters}
\label{app:simulation-prior}

In addition to the models in Section 4, the simulation study included a model with hyperparameter misspecification in which the researcher chose an over-specified hitting range, referred to as a Vague approach.
The Vague approach sets the hitting range wider than both the Oracle model and the Comprehensive approach, making it difficult for CHOMPER to distinguish between distortion and signal.
Because we showed in Section 5 that CHOMPER is robust to categorical variables with multiple truths, we further evaluate the sensitivity of the hitting range using the Vague approach for Scenario 2, which uses Gaussian fields with multiple truths.
See Table~\ref{tab:simulation-hitting-range} for detailed hyperparameter combinations.

\begin{table}[H]
  \centering\caption{Hitting range parameters according to the different approaches for modeling the simulation data. Conservative and Naive approaches use infinitesimal hitting range, while Comprehensive approach sets a wide hitting range for multiple truths. Vague approach refers to over-specifying the hitting range.}
  \label{tab:simulation-hitting-range}
  \begin{subtable}{\textwidth}
    \centering
    % \captionsetup{font=footnotesize}
    \caption{Scenario 1: Hitting range parameters when only categorical fields are used: 3 categorical variables have a single truth, and 2 categorical variables have multiple truths.}
    \label{tab:hitting-range-scenario-1}
    \scalebox{0.9}{
      \begin{tabular}{ccc}
        \hline
        Approach      & Single Truth                                  & Local Variability                             \\ \hline
        Conservative  & $\varepsilon_{\ell}=0$, $\tau_{\ell}=10^{-2}$ & -                                             \\
        Naive         & $\varepsilon_{\ell}=0$, $\tau_{\ell}=10^{-2}$ & $\varepsilon_{\ell}=0$, $\tau_{\ell}=10^{-2}$ \\
        Oracle        & $\varepsilon_{\ell}=0$, $\tau_{\ell}=10^{-2}$ & $\varepsilon_{\ell}=1$, $\tau_{\ell}=10^{-2}$ \\
        Comprehensive & $\varepsilon_{\ell}=0$, $\tau_{\ell}=10^{-2}$ & $\varepsilon_{\ell}=2$, $\tau_{\ell}=10^{-2}$ \\ \hline
      \end{tabular}
    }
  \end{subtable}
  
  \vspace{\baselineskip}
  
  \begin{subtable}{\textwidth}
    \centering
    % \captionsetup{font=footnotesize}
    \caption{Scenario 2: Hitting range parameters when both categorical and continuous fields are used: 3 categorical variables have a single truth, and 2 continuous variables have multiple truths.}
    \label{tab:hitting-range-scenario-2}
    \scalebox{0.9}{
      \begin{tabular}{ccc}
        \hline
        Approach      & Single Truth                                  & Local Variability            \\ \hline
        Conservative  & $\varepsilon_{\ell}=0$, $\tau_{\ell}=10^{-2}$ & -                            \\
        Naive         & $\varepsilon_{\ell}=0$, $\tau_{\ell}=10^{-2}$ & $\varepsilon_{\ell}=10^{-3}$ \\
        Oracle        & $\varepsilon_{\ell}=0$, $\tau_{\ell}=10^{-2}$ & $\varepsilon_{\ell}=0.1$     \\
        Comprehensive & $\varepsilon_{\ell}=0$, $\tau_{\ell}=10^{-2}$ & $\varepsilon_{\ell}=0.5$     \\
        Vague         & $\varepsilon_{\ell}=0$, $\tau_{\ell}=10^{-2}$ & $\varepsilon_{\ell}=1.0$     \\ \hline
      \end{tabular}
    }
  \end{subtable}
\end{table}

We used the same prior, hyperparameters, and initialization strategy for all models as Appendix~\ref{app:ishiw-prior}.
Specifically, we set the prior distortion rate to 1\% for all variables, and the hyperparameters for the probability of each level in categorical fields to a length-$M_{\ell}$ one vector, and $\text{Inverse-Gamma}\left(0.01,0.01\right)$ for the variance of continuous fields.
For CHOMPER-MCMC, 1,000 split-merge updates were performed for each Gibbs sampling iteration, and for CHOMPER-EVIL, $N_{P}=50$, $N_{O}=100$, and $N_{E}=50$ were chosen.

\subsection{Convergence Diagnostics}
\label{app:simulation-convergence}

Similar to Appendix~\ref{app:ishiw-convergence}, we determine convergence of CHOMPER-MCMC using trace plots of the number of unique $\lambda_{ij}$.
The convergence of CHOMPER-EVIL is determined by the relative change of the ELBO between generations.
For all scenarios, approaches, and overlap ratios, we considered the first 9,000 samples as burn-in samples, and the remaining 1,000 samples were used for inference.
Trace plots for the diagnostics are provided in Figures~\ref{fig:simulation-traceplot-high-overlap},~\ref{fig:simulation-traceplot-medium-overlap}, and~\ref{fig:simulation-traceplot-low-overlap}.

\begin{figure}[H]
  \begin{center}
    \subfloat[Scenario 1: 3 multinomial fields with a single truth and 2 multinomial fields with multiple truths.\label{fig:simulation-traceplot-scenario-1-high}]{\includegraphics[width=0.9\linewidth]{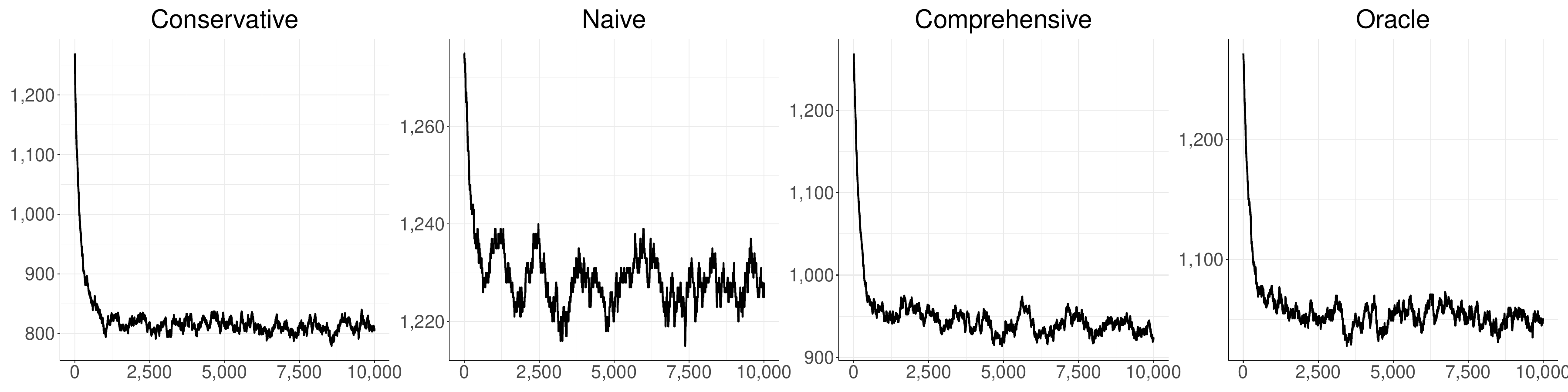} }
    \subfloat[Scenario 2: 3 multinomial fields with a single truth and 2 Gaussian fields with multiple truths.\label{fig:simulation-traceplot-scenario-2-high}]{\includegraphics[width=0.9\linewidth]{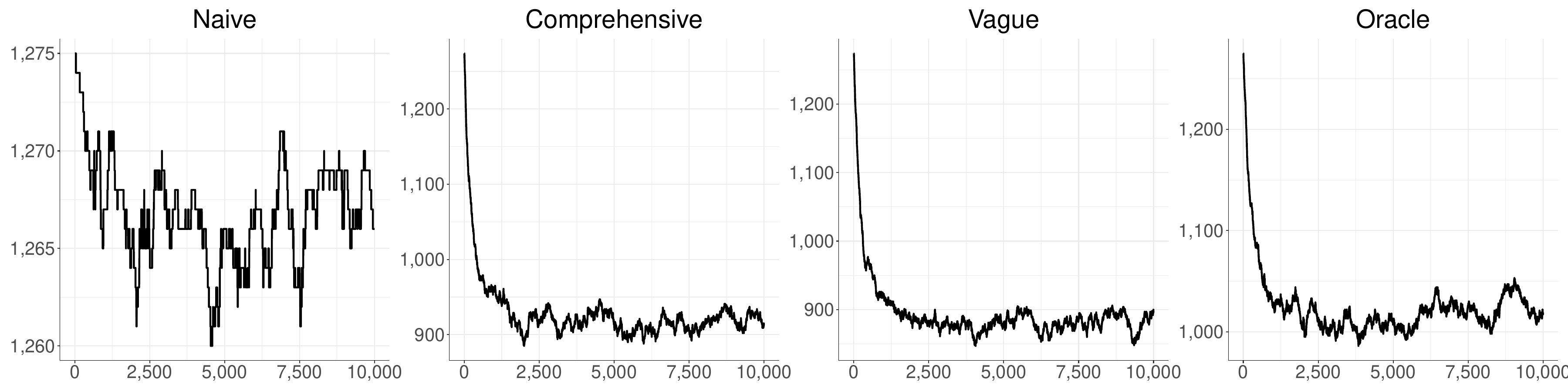} }
  \end{center}
  \caption{Trace plots of the number of unique $\lambda_{ij}$ according to the approaches for Scenario 1 and 2 with high overlap ratio of 70\%.}
  \label{fig:simulation-traceplot-high-overlap}
\end{figure}

\begin{figure}[H]
  \begin{center}
    \subfloat[Scenario 1: 3 multinomial fields with a single truth and 2 multinomial fields with multiple truths.\label{fig:simulation-traceplot-scenario-1-medium}]{\includegraphics[width=0.9\linewidth]{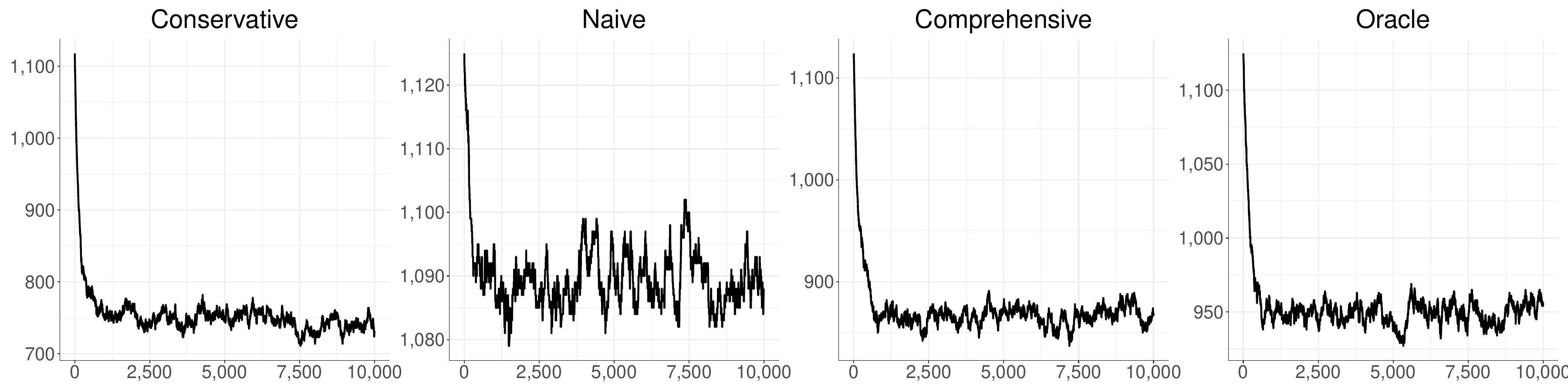} }
    \subfloat[Scenario 2: 3 multinomial fields with a single truth and 2 Gaussian fields with multiple truths.\label{fig:simulation-traceplot-scenario-2-medium}]{\includegraphics[width=0.9\linewidth]{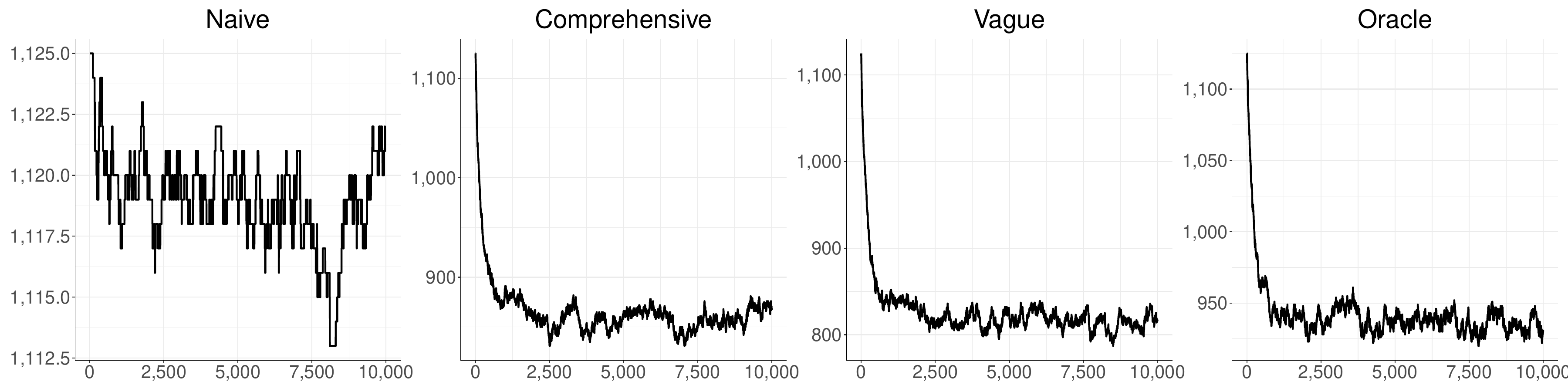} }
  \end{center}
  \caption{Trace plots of the number of unique $\lambda_{ij}$ according to the approaches for Scenario 1 and 2 with medium overlap ratio of 50\%.}
  \label{fig:simulation-traceplot-medium-overlap}
\end{figure}

\begin{figure}[H]
  \begin{center}
    \subfloat[Scenario 1: 3 multinomial fields with a single truth and 2 multinomial fields with multiple truths.\label{fig:simulation-traceplot-scenario-1-low}]{\includegraphics[width=0.9\linewidth]{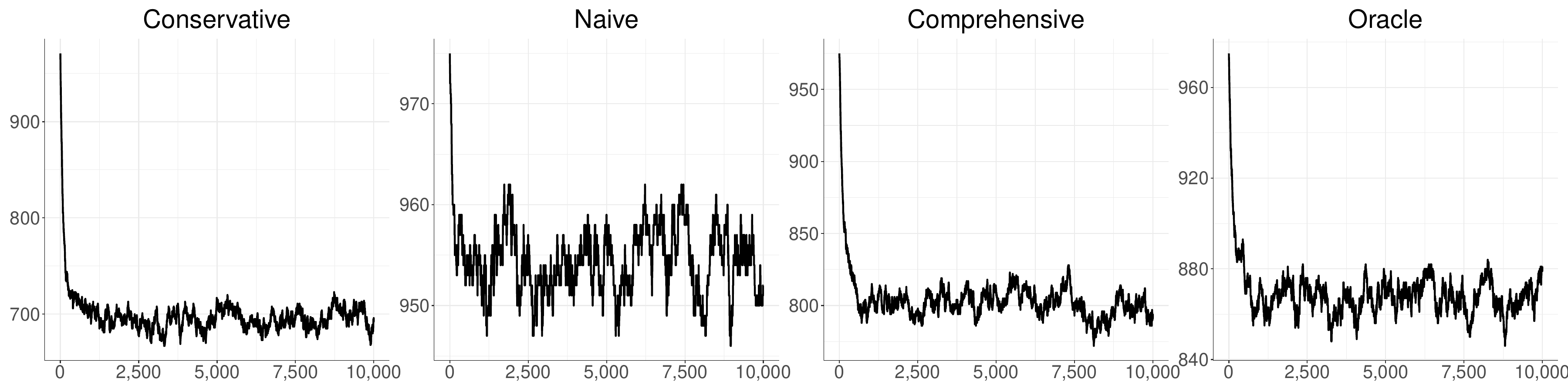} }
    \subfloat[Scenario 2: 3 multinomial fields with a single truth and 2 Gaussian fields with multiple truths.\label{fig:simulation-traceplot-scenario-2-low}]{\includegraphics[width=0.9\linewidth]{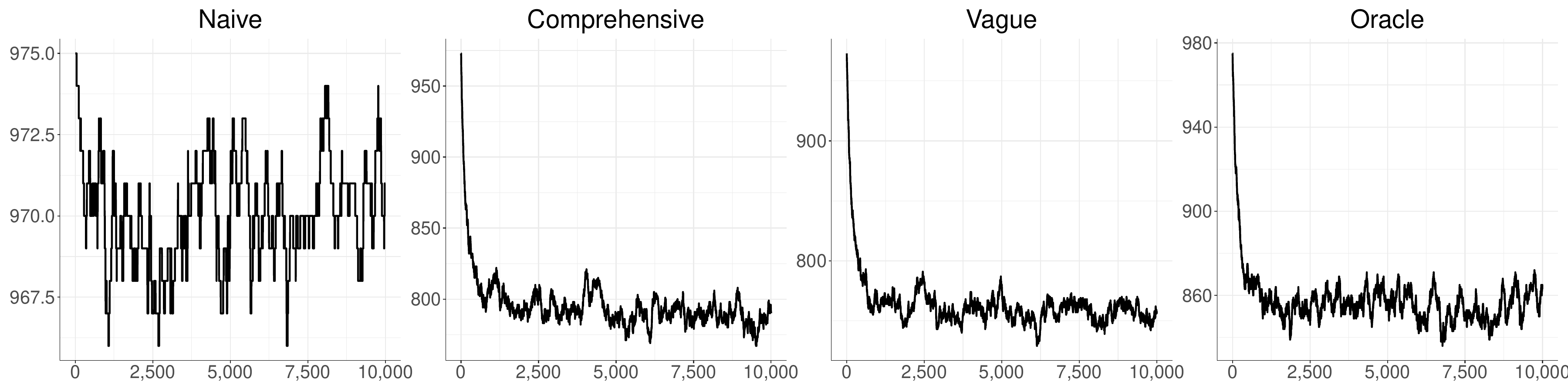} }
  \end{center}
  \caption{Trace plots of the number of unique $\lambda_{ij}$ according to the approaches for Scenario 1 and 2 with low overlap ratio of 30\%.}
  \label{fig:simulation-traceplot-low-overlap}
\end{figure}

\subsection{Detailed Results and Supplementary Experiments}
\label{app:simulation-results}

This section provides the results for all scenarios and approaches defined in Appendix~\ref{app:simulation-prior}.
For the sensitivity analysis, we provide the results for different overlap ratios, including high as 70\%, medium as 50\%, and low as 30\%.
Moreover, we provide the results of fitting the SMERED model, which uses the traditional hit-or-miss framework for comparison.
Note that SMERED was used only for the Conservative and Naive approaches for Scenario 1, because it is not applicable for continuous fields.

%High overlap ratio
\begin{table}[H]
  \centering
  \caption{The average (standard deviation) of the linkage structure estimation performances of 30 simulation datasets with high overlap ratio (70\%) with the CHOMPER model using either MCMC or EVIL. The performances are measured with F\textsubscript{1}-score (F\textsubscript{1}), false negative rate (FNR), and false discovery rate (FDR).}
  \label{tab:simulation-results-high-overlap}
  
  \begin{subtable}{\textwidth}
    \centering
    % \captionsetup{font=footnotesize}
    \caption{Scenario 1 - CHOMPER: Estimation performance when only categorical fields are used.}
    \label{tab:simulation-scenario-1-high-overlap}
    \scalebox{0.85}{
      \begin{tabular}{ccccccc}
        \hline
        \multirow{2}{*}{Approach} & \multicolumn{3}{c}{MCMC} & \multicolumn{3}{c}{EVIL}                                                                \\ \cline{2-7}
                                  & F\textsubscript{1}       & FNR                      & FDR         & F\textsubscript{1} & FNR         & FDR         \\ \hline
        Conservative              & 0.30 (0.02)              & 0.72 (0.02)              & 0.69 (0.03) & 0.40 (0.02)        & 0.51 (0.04) & 0.66 (0.04) \\
        Naive                     & 0.14 (0.02)              & 0.93 (0.01)              & 0.16 (0.06) & 0.22 (0.02)        & 0.87 (0.01) & 0.26 (0.05) \\
        Comprehensive             & 0.52 (0.02)              & 0.57 (0.02)              & 0.36 (0.03) & 0.63 (0.03)        & 0.40 (0.04) & 0.33 (0.09) \\
        Oracle                    & 0.47 (0.02)              & 0.66 (0.02)              & 0.25 (0.03) & 0.56 (0.05)        & 0.57 (0.05) & 0.19 (0.04) \\ \hline
      \end{tabular}
    }
  \end{subtable}
  
  \vspace{\baselineskip}
  
  \begin{subtable}{\textwidth}
    \centering
    % \captionsetup{font=footnotesize}
    \caption{Scenario 1 - SMERED: Estimation performance when only categorical fields are used.}
    \label{tab:simulation-scenario-smered-high-overlap}
    \scalebox{0.85}{
      \begin{tabular}{cccc}
        \hline
        Approach     & F\textsubscript{1} & FNR         & FDR         \\ \hline
        Conservative & 0.30 (0.02)        & 0.72 (0.02) & 0.69 (0.02) \\
        Naive        & 0.14 (0.02)        & 0.93 (0.01) & 0.17 (0.06) \\ \hline
      \end{tabular}
    }
  \end{subtable}
  
  \vspace{\baselineskip}
  
  \begin{subtable}{\textwidth}
    \centering
    % \captionsetup{font=footnotesize}
    \caption{Scenario 2: Estimation performance when both categorical and continuous fields are used.}
    \label{tab:simulation-scenario-2-high-overlap}
    \scalebox{0.85}{
      \begin{tabular}{ccccccc}
        \hline
        \multirow{2}{*}{Approach} & \multicolumn{3}{c}{MCMC} & \multicolumn{3}{c}{EVIL}                                                                \\ \cline{2-7}
                                  & F\textsubscript{1}       & FNR                      & FDR         & F\textsubscript{1} & FNR         & FDR         \\ \hline
        Conservative              & 0.30 (0.02)              & 0.72 (0.02)              & 0.69 (0.03) & 0.40 (0.02)        & 0.51 (0.04) & 0.66 (0.04) \\
        Naive                     & 0.03 (0.01)              & 0.99 (0.00)              & 0.02 (0.04) & 0.08 (0.03)        & 0.96 (0.01) & 0.10 (0.07) \\
        Comprehensive             & 0.59 (0.02)              & 0.50 (0.02)              & 0.29 (0.02) & 0.48 (0.02)        & 0.12 (0.03) & 0.67 (0.02) \\
        Vague                     & 0.53 (0.02)              & 0.52 (0.02)              & 0.40 (0.03) & 0.44 (0.02)        & 0.13 (0.02) & 0.70 (0.01) \\
        Oracle                    & 0.57 (0.02)              & 0.57 (0.02)              & 0.12 (0.02) & 0.74 (0.02)        & 0.16 (0.04) & 0.33 (0.03) \\ \hline
      \end{tabular}
    }
  \end{subtable}
\end{table}

%Medium overlap ratio
\begin{table}[H]
  \centering
  \caption{The average (standard deviation) of the linkage structure estimation performances of 30 simulation datasets with medium overlap ratio (50\%) with the CHOMPER model using either MCMC or EVIL. The performances are measured with F\textsubscript{1}-score (F\textsubscript{1}), false negative rate (FNR), and false discovery rate (FDR).}
  \label{tab:simulation-results-medium-overlap}
  
  \begin{subtable}{\textwidth}
    \centering
    % \captionsetup{font=footnotesize}
    \caption{Scenario 1 - CHOMPER: Estimation performance when only categorical fields are used.}
    \label{tab:simulation-scenario-1-medium-overlap}
    \scalebox{0.85}{
      \begin{tabular}{ccccccc}
        \hline
        \multirow{2}{*}{Approach} & \multicolumn{3}{c}{MCMC} & \multicolumn{3}{c}{EVIL}                                                                \\ \cline{2-7}
                                  & F\textsubscript{1}       & FNR                      & FDR         & F\textsubscript{1} & FNR         & FDR         \\ \hline
        Conservative              & 0.29 (0.02)              & 0.72 (0.02)              & 0.71 (0.02) & 0.36 (0.01)        & 0.47 (0.04) & 0.72 (0.02) \\
        Naive                     & 0.13 (0.02)              & 0.93 (0.01)              & 0.19 (0.06) & 0.22 (0.02)        & 0.87 (0.02) & 0.28 (0.06) \\
        Comprehensive             & 0.51 (0.02)              & 0.56 (0.02)              & 0.38 (0.03) & 0.63 (0.03)        & 0.41 (0.04) & 0.31 (0.09) \\
        Oracle                    & 0.47 (0.03)              & 0.66 (0.03)              & 0.26 (0.04) & 0.52 (0.03)        & 0.61 (0.03) & 0.23 (0.03) \\ \hline
      \end{tabular}
    }
  \end{subtable}
  
  \vspace{\baselineskip}
  
  \begin{subtable}{\textwidth}
    \centering
    % \captionsetup{font=footnotesize}
    \caption{Scenario 1 - SMERED: Estimation performance when only categorical fields are used.}
    \label{tab:simulation-scenario-smered-medium-overlap}
    \scalebox{0.85}{
      \begin{tabular}{cccc}
        \hline
        Approach     & F\textsubscript{1} & FNR         & FDR         \\ \hline
        Conservative & 0.29 (0.03)        & 0.72 (0.03) & 0.71 (0.03) \\
        Naive        & 0.13 (0.02)        & 0.93 (0.01) & 0.17 (0.06) \\ \hline
      \end{tabular}
    }
  \end{subtable}
  
  \vspace{\baselineskip}
  
  \begin{subtable}{\textwidth}
    \centering
    % \captionsetup{font=footnotesize}
    \caption{Scenario 2: Estimation performance when both categorical and continuous fields are used.}
    \label{tab:simulation-scenario-2-medium-overlap}
    \scalebox{0.85}{
      \begin{tabular}{ccccccc}
        \hline
        \multirow{2}{*}{Approach} & \multicolumn{3}{c}{MCMC} & \multicolumn{3}{c}{EVIL}                                                                \\ \cline{2-7}
                                  & F\textsubscript{1}       & FNR                      & FDR         & F\textsubscript{1} & FNR         & FDR         \\ \hline
        Conservative              & 0.29 (0.02)              & 0.72 (0.02)              & 0.71 (0.02) & 0.36 (0.01)        & 0.47 (0.04) & 0.72 (0.02) \\
        Naive                     & 0.03 (0.01)              & 0.99 (0.01)              & 0.06 (0.12) & 0.06 (0.02)        & 0.97 (0.01) & 0.06 (0.07) \\
        Comprehensive             & 0.59 (0.02)              & 0.49 (0.02)              & 0.30 (0.03) & 0.45 (0.02)        & 0.11 (0.03) & 0.70 (0.01) \\
        Vague                     & 0.53 (0.03)              & 0.52 (0.03)              & 0.41 (0.03) & 0.42 (0.01)        & 0.12 (0.03) & 0.72 (0.01) \\
        Oracle                    & 0.57 (0.03)              & 0.57 (0.03)              & 0.12 (0.03) & 0.74 (0.02)        & 0.14 (0.03) & 0.36 (0.02) \\ \hline
      \end{tabular}
    }
  \end{subtable}
\end{table}

%Low overlap ratio
\begin{table}[H]
  \centering
  \caption{The average (standard deviation) of the linkage structure estimation performances of 30 simulation datasets with low overlap ratio (30\%) with the CHOMPER model using either MCMC or EVIL. The performances are measured with F\textsubscript{1}-score (F\textsubscript{1}), false negative rate (FNR), and false discovery rate (FDR).}
  \label{tab:simulation-results-low-overlap}
  
  \begin{subtable}{\textwidth}
    \centering
    % \captionsetup{font=footnotesize}
    \caption{Scenario 1 - CHOMPER: Estimation performance when only categorical fields are used.}
    \label{tab:simulation-scenario-1-low-overlap}
    \scalebox{0.85}{
      \begin{tabular}{ccccccc}
        \hline
        \multirow{2}{*}{Approach} & \multicolumn{3}{c}{MCMC} & \multicolumn{3}{c}{EVIL}                                                                \\ \cline{2-7}
                                  & F\textsubscript{1}       & FNR                      & FDR         & F\textsubscript{1} & FNR         & FDR         \\ \hline
        Conservative              & 0.27 (0.02)              & 0.71 (0.03)              & 0.75 (0.02) & 0.31 (0.01)        & 0.46 (0.04) & 0.78 (0.01) \\
        Naive                     & 0.15 (0.03)              & 0.92 (0.02)              & 0.21 (0.08) & 0.23 (0.02)        & 0.86 (0.02) & 0.32 (0.07) \\
        Comprehensive             & 0.50 (0.03)              & 0.55 (0.04)              & 0.43 (0.04) & 0.57 (0.03)        & 0.43 (0.04) & 0.41 (0.03) \\
        Oracle                    & 0.46 (0.03)              & 0.66 (0.03)              & 0.31 (0.04) & 0.50 (0.03)        & 0.60 (0.03) & 0.33 (0.03) \\ \hline
      \end{tabular}
    }
  \end{subtable}
  
  \vspace{\baselineskip}
  
  \begin{subtable}{\textwidth}
    \centering
    % \captionsetup{font=footnotesize}
    \caption{Scenario 1 - SMERED: Estimation performance when only categorical fields are used.}
    \label{tab:simulation-scenario-smered-low-overlap}
    \scalebox{0.85}{
      \begin{tabular}{cccc}
        \hline
        Approach     & F\textsubscript{1} & FNR         & FDR         \\ \hline
        Conservative & 0.26 (0.02)        & 0.72 (0.03) & 0.75 (0.02) \\
        Naive        & 0.14 (0.02)        & 0.92 (0.01) & 0.21 (0.06) \\ \hline
      \end{tabular}
    }
  \end{subtable}
  
  \vspace{\baselineskip}
  
  \begin{subtable}{\textwidth}
    \centering
    % \captionsetup{font=footnotesize}
    \caption{Scenario 2: Estimation performance when both categorical and continuous fields are used.}
    \label{tab:simulation-scenario-2-low-overlap}
    \scalebox{0.85}{
      \begin{tabular}{ccccccc}
        \hline
        \multirow{2}{*}{Approach} & \multicolumn{3}{c}{MCMC} & \multicolumn{3}{c}{EVIL}                                                                \\ \cline{2-7}
                                  & F\textsubscript{1}       & FNR                      & FDR         & F\textsubscript{1} & FNR         & FDR         \\ \hline
        Conservative              & 0.27 (0.02)              & 0.71 (0.03)              & 0.75 (0.02) & 0.31 (0.01)        & 0.46 (0.04) & 0.78 (0.01) \\
        Naive                     & 0.02 (0.01)              & 0.99 (0.01)              & 0.04 (0.10) & 0.07 (0.02)        & 0.97 (0.01) & 0.14 (0.10) \\
        Comprehensive             & 0.58 (0.03)              & 0.48 (0.03)              & 0.34 (0.03) & 0.39 (0.01)        & 0.10 (0.02) & 0.75 (0.01) \\
        Vague                     & 0.52 (0.03)              & 0.50 (0.04)              & 0.45 (0.03) & 0.36 (0.01)        & 0.10 (0.02) & 0.78 (0.01) \\
        Oracle                    & 0.57 (0.04)              & 0.57 (0.04)              & 0.14 (0.03) & 0.69 (0.02)        & 0.09 (0.04) & 0.45 (0.03) \\ \hline
      \end{tabular}
    }
  \end{subtable}
\end{table}

As shown in Tables~\ref{tab:simulation-results-high-overlap},~\ref{tab:simulation-results-medium-overlap}, and~\ref{tab:simulation-results-low-overlap}, the performance of CHOMPER-MCMC and SMERED was similar regardless of the overlap ratio.
This empirically demonstrates that the CHOMPER model encompasses probabilistic entity resolution models that use that traditional hit-or-miss framework.

The simulation results show that attribute type and hitting range misspecification affects estimation performance.
For categorical attributes with multiple truths, high robustness to hitting range was observed in both CHOMPER-MCMC and CHOMPER-EVIL, regardless of the overlap ratio.
In contrast, when using continuous attributes, a clear difference in sensitivity between the inferential methods was observed.
CHOMPER-MCMC showed similar performance even with moderate misspecification, whereas CHOMPER-EVIL's performance decreased sharply.
This trend remained consistent across all overlap ratios.
Nevertheless, CHOMPER-EVIL may be advantageous when using continuous fields with high prior knowledge about the hitting range, as CHOMPER-EVIL achieved the highest average F\textsubscript{1}-score in Scenario 2 with the Oracle approach, irrespective of the overlap ratio.

\end{document}